\documentclass[11pt,reqno]{amsart}
\usepackage[utf8]{inputenc}
\usepackage[margin=2.5cm]{geometry}

\usepackage[nospace,noadjust]{cite}
\usepackage{relsize}
\usepackage{color}
\usepackage[dvipsnames]{xcolor}

\usepackage{tikz-cd}
\usetikzlibrary{calc,fit,matrix,arrows,automata,positioning}
\usetikzlibrary{decorations.pathreplacing,angles,quotes}
\usepackage{colortbl}
\usepackage{hyperref,enumitem}
\hypersetup{
  colorlinks   = true,
  urlcolor     = blue,
  linkcolor    = RoyalBlue,
  citecolor   = red
}
\usepackage{amsmath,amsthm,amssymb,mathtools}
\usepackage{stmaryrd}
\usepackage{array}
\newcolumntype{x}[1]{>{\centering\arraybackslash\hspace{0pt}}p{#1}}

\theoremstyle{definition}
\newtheorem{theorem}{Theorem}[section]
\newtheorem{definition}[theorem]{{{Definition}}}
\newtheorem{example}[theorem]{{{Example}}}

\newtheorem{remark}[theorem]{{{Remark}}}
\newtheorem{corollary}[theorem]{{{Corollary}}}
\newtheorem{proposition}[theorem]{{{Proposition}}}
\newtheorem{lemma}[theorem]{{{Lemma}}}

\newcommand{\numberset}{\mathbb}
\newcommand{\N}{\numberset{N}}
\newcommand{\Z}{\numberset{Z}}
\newcommand{\F}{\numberset{F}}

\newcommand{\mC}{\mathcal{C}}
\newcommand{\mP}{\mathcal{P}}
\newcommand{\mG}{\mathcal{G}}
\newcommand{\mL}{\mathcal{L}}
\newcommand{\mN}{\mathcal{N}}

\newcommand{\mU}{\mathcal{U}}
\renewcommand{\mG}{\mathcal{G}}
\newcommand{\mV}{\mathcal{V}}

\newcommand{\mS}{\mathcal{S}}

\newcommand{\GH}{G^\mathrm{H}}
\newcommand{\mCH}{\mC^\mathrm{H}}

\usepackage{bm}
\newcommand{\abf}{\mathbf{a}}
\newcommand{\bbf}{\mathbf{b}}
\newcommand{\cbf}{\mathbf{c}}

\newcommand{\nbf}{\mathbf{n}}

\newcommand{\vbf}{\mathbf{v}}
\newcommand{\wbf}{\mathbf{w}}
\newcommand{\albf}{\bm{\alpha}}
\newcommand{\bebf}{\bm{\beta}}
\newcommand{\gabf}{\bm{\gamma}}
\newcommand{\Sa}{\mathbf{S}(\albf)}

\newcommand{\Fq}{\F_q}
\newcommand{\Fm}{\F_{q^m}}
\newcommand{\Fn}{\F_q^n}
\newcommand{\Fmn}{\F_{q^m}^n}
\newcommand{\ore}{\Fm[x;\sigma]}
\DeclareMathOperator{\Gal}{Gal}

\DeclareMathOperator{\rk}{rk}
\DeclareMathOperator{\srk}{srk}
\DeclareMathOperator{\colspan}{colsp}
\DeclareMathOperator{\rowspan}{rowsp}

\DeclareMathOperator{\GL}{GL}

\DeclareMathOperator{\id}{id}
\DeclareMathOperator{\ev}{ev}

\DeclareMathOperator{\Gab}{Gab}

\DeclareMathOperator{\LRS}{LRS}

\DeclareMathOperator{\HF}{HF}
\DeclareMathOperator{\HP}{HP}
\DeclareMathOperator{\HR}{H_{reg}}
\DeclareMathOperator{\CMR}{CM_{reg}}

\def\multiset#1{\ensuremath{\left\{\kern-.3em\left\{{#1}\right\}\kern-.3em\right\}}}

\newcommand{\verteq}{\rotatebox{90}{$\,=$}}
\newcommand{\st}{\,:\,}

\title{A $q$-analogue of the rational normal curve and Linearized Reed--Solomon Codes}
\usepackage[foot]{amsaddr}
\author{Valentina Astore$^{1,2}$}
\author{Martino Borello$^{3,1}$}
\author{Alain Couvreur$^{1,2}$}
\author{Flavio Salizzoni$^4$}
\address{$^1$Inria, France.}
\address{$^2$LIX, École polytechnique, Institut Polytechnique de Paris, France.}
\address{$^3$Universit\'e Paris 8, Laboratoire de G\'eom\'etrie, Analyse et Applications, LAGA, Universit\'e Sorbonne Paris Nord, CNRS, UMR 7539, France.}
\address{$^4$MPI Leipzig}

\email{valentina.astore@inria.fr}
\email{martino.borello@univ-paris8.fr}
\email{alain.couvreur@inria.fr}
\email{flavio.salizzoni@mis.mpg.de}
\thanks{The first three authors are partially supported by the ANR-21-CE39-0009 - BARRACUDA (French \emph{Agence Nationale de la Recherche}).}
\thanks{V.~A. is funded by AID (French \emph{Agence de l'innovation de défense}).}

\begin{document}

\begin{abstract} 
    The relationship between linear codes in the Hamming metric and projective algebraic varieties has led to deep interactions between coding theory and algebraic geometry, with classical examples such as Reed--Solomon codes and the rational normal curve. 
    On the other hand, the sum-rank metric has recently gained attention due to applications in network coding, distributed storage, and post-quantum cryptography, with linearized Reed--Solomon codes emerging as optimal constructions. Despite recent advances, their structural and geometric properties are still not fully understood, and existing distinguishers remain limited. In this paper, we develop a geometric framework for linearized Reed--Solomon codes by considering a $q$-analogue of the rational normal curve.
    This yields a geometric characterization for certain parameter choices and reveals that the corresponding sets of points satisfy unexpectedly many $(q+1)$-degree hypersurface conditions.

Our approach extends Schur-product-based techniques from the Hamming and rank-metric settings to the sum-rank metric case. Finally, we study the Hilbert function of the associated coordinate ring, providing a detailed description of its behavior and identifying its regularity, which also sheds new light on Gabidulin codes.
\end{abstract}
\maketitle
\textbf{Keywords.} Sum-rank-metric codes, Linearized Reed--Solomon codes, Rational normal curve, Hilbert sequence.

\textbf{MSC classification.} 11T71, 14H45, 51E20, 94B27.

\tableofcontents

\section*{Introduction} 

The connection between linear codes endowed with the Hamming metric and projective algebraic varieties is well known and has proved to be highly fruitful. It links problems that may appear to belong to quite different areas: for example the MDS conjecture and the existence of large arcs \cite{segre1955curve,ball2012sets}, minimal codes and strong blocking sets \cite{alfarano2022geometric}, covering problems and saturating sets \cite{davydov2011linear}. The basic idea is that equivalence classes of linear codes can be associated with projective equivalence classes of point sets in projective space, and many coding-theoretic properties can be interpreted geometrically. A particularly significant example is given by Reed--Solomon codes, which correspond to sets of points lying on the so-called rational normal curve. An important feature of this identification is the following: it is well known that the rational normal curve is an intersection of quadrics. As a consequence, the points defined by a Reed--Solomon code lie on many more quadrics than a random set of points would. 
This property can be reinterpreted in terms of the Schur square of the code and is by now well understood \cite{mirandola2015critical}. In this paper, we aim to extend these connections to codes in another metric, namely the \emph{sum-rank metric}, and to obtain analogous results for the optimal codes in this setting, namely the so-called \emph{linearized Reed--Solomon codes}.

Sum-rank metric codes have recently attracted increasing attention due to their wide range of applications, including space-time coding, multishot network coding, and distributed storage (see~\cite{martinez2022codes} for a complete overview of these applications). Moreover, due to the existence of families of sum-rank metric codes achieving optimal parameters, they are also emerging in cryptography, where they are being explored as potential building blocks for post-quantum code-based primitives~\cite{dissertation, nouetowa:hal-05441609}. In this setting, the goal is to design schemes that aim to resist known structural attacks on constructions based on Hamming and rank-metric codes, while keeping key sizes relatively small. Since, in code-based cryptographic systems, security often relies on the difficulty of identifying the algebraic structure of the underlying code, the {\em distinguishability problem}, \emph{i.e.}, the ability to distinguish structured codes from random ones, is emerging as a central issue. If such distinguishers exist, they may undermine security proofs and potentially lead to
practical attacks.

In the last few years, sum-rank metric codes have been extensively studied in a series of papers (see for instance ~\cite{gorla2023sum,NERI2023105703,martinez2019theory,MARTINEZPENAS2018587,martinez2020hamming,caruso:hal-03395402,BC24,berardini:hal-04577005}). In this setting, codes with optimal parameters are called {\em maximum sum-rank distance (MSRD) codes}. Among the few known constructions of MSRD codes, the family of linearized Reed--Solomon codes, introduced in~\cite{MARTINEZPENAS2018587}, is the most prominent and extensively studied so far. Recently, the equivalence problem for linearized Reed--Solomon codes has been investigated in~\cite{mannaert2026number, santonastaso2025invariants}, while~\cite{hormann2022distinguishing,hormann2026distinguishers} proposed the first attempts at an algebraic characterization of this family. However, these results are not fully general since \cite{hormann2022distinguishing} requires partial knowledge of the defining parameters of the code, while~\cite{hormann2026distinguishers} applies only to some parameters choices.

Linearized Reed--Solomon codes naturally extend both Reed--Solomon and Gabidulin codes. The former being optimal for the Hamming metric and so are the latter for the rank metric, this suggests that linearized Reed--Solomon codes may inherit structural features that characterize these families. Reed--Solomon codes have been fully characterized by means of a {\em Schur-product-based distinguisher}~\cite[Corollary 27]{mirandola2015critical} (see also~\cite{randriambololona2013asymptotically}). As we mentioned above, this is related to the geometric interpretation of Reed--Solomon codes as subsets of  rational normal curve (see for example~\cite[p. 120]{arbarello2013geometry}). In the rank-metric setting, Gabidulin codes were first algebraically characterized through \emph{Overbeck’s distinguisher}~\cite{jofc-2008-14383}. Similarly to the above, Gabidulin codes can also be described geometrically in terms of linear sets. This perspective was used in~\cite{astore2025geometric} to identify Gabidulin codes via a Schur-product-based invariant, inspired by that introduced in the Hamming-metric setting. 

In the theory of linearized Reed--Solomon codes, at least two fundamental problems still have only partial answers. Firstly, even though they generalize both Reed--Solomon and Gabidulin codes, the known distinguishers for the latter have not yet been fully adapted to the sum-rank metric setting. Secondly, although the geometry of sum-rank metric codes has been investigated in~\cite{NERI2023105703,santonastaso2023subspace}, a geometric characterization of linearized Reed--Solomon codes is still lacking.

In this paper, we aim to address these gaps by investigating linearized Reed--Solomon codes from a geometric perspective. Motivated by the connection between Reed--Solomon codes and the rational normal curve, we consider a $q$-analogue of the latter, previously studied in~\cite{donati2018generalization} in the context of rank-metric codes, and we show that it captures the structure of the geometric object associated with linearized Reed--Solomon codes. In particular, we prove that linearized Reed--Solomon codes induce a partition of the set of $\Fm$-rational points of this $q$-analogue of the rational normal curve into subsets with a remarkable combinatorial structure, namely scattered linear sets, and a strong interplay among them. Some of these properties were previously established in~\cite{santonastaso2023subspace} using different techniques based on viewing linearized Reed--Solomon codes as $(k-1)$-designs and our result contributes to a deeper understanding of these objects. This description yields a geometric characterization of linearized Reed--Solomon codes for certain parameter choices.
Similarly to the case of the rational normal curve which is an intersection of quadrics, our $q$-analogue is an intersection of degree $(q+1)$ hypersurfaces: actually, it can be described as a determinantal variety. Hence, the points defined by such linearized Reed--Solomon codes lie on many more hypersurfaces of degree $q+1$ than a random set of points would. By relating sum-rank metric codes with a suitable generalization of the Hamming-metric code associated with rank-metric codes (see~\cite{ALFARANO2022105658}), we show that the method proposed here extends the Schur-product-based invariant introduced in~\cite{astore2025geometric} for Gabidulin codes. Even though the behavior of homogeneous forms of degree $q+1$ suffices to characterize linearized Reed--Solomon codes, it is natural to extend our investigation to higher degree forms, in order to obtain a finer invariant. This leads to the study of the Hilbert function of the coordinate ring associated with the set of projective points defined by our $q$-analogue of the rational normal curve. 
In particular, we obtain a comprehensive description of the behavior of this sequence and identify its points of regularity. Finally, by specializing the results obtained in the last part of the paper, we provide a deeper understanding of the geometric structure of Gabidulin codes, thereby addressing some of the open questions left in~\cite{astore2025geometric}.

\subsection*{Outline} The paper is structured as follows. Section~\ref{Preliminaries} introduces the necessary background, including the notions of Schur powers and Hilbert function in the context of coding theory, followed by the notion of sum-rank-metric codes together with an overview of their geometric interpretation, and concluding with the definition of linearized Reed--Solomon codes. In Section~\ref{The Geometry of Linearized Reed-Solomon Codes}, we develop the geometric interpretation of linearized Reed--Solomon codes, focusing in particular on their connection with a $q$-analogue of the rational normal curve, and identify the geometric properties that characterize this family. Finally, Section~\ref{The Hilbert Sequence of a Linearized Reed-Solomon Code} is dedicated to the Hilbert function of linearized Reed--Solomon codes, and shows how our approach can be adapted to better understand the geometric interpretation of Gabidulin codes.

\section{Preliminaries} \label{Preliminaries}
\subsection{Schur Powers and Hilbert Function}
Let $q$ be a prime power, $\Fq$ be a finite field with $q$ elements, and $\overline{\F}_q$ be its algebraic closure. Throughout this paper, we only consider linear codes: a $k$-dimensional code $\mC$ over $\Fn$ is a $k$-dimensional linear subspace of $\Fq^n$ endowed with a metric. For $\mathbf{v}=(v_1,\ldots,v_n),\,\mathbf{w}=(w_1,\ldots,w_n)\in\Fn$, we denote by $\star$ the standard componentwise product $\mathbf{v}\star\mathbf{w}\coloneqq(v_1w_1,\ldots,v_nw_n)$. The {\em Schur product} $\star$ of two codes $\mC_1,\mC_2\subseteq\Fn$ is defined as \[\mC_1\star\mC_2\coloneqq\langle c_1\star c_2\st c_1,c_2\in\mC\rangle_{\Fq}.\] Then, for $i\geq1$, it is natural to recursively define the $i$-th {\em Schur power} of a code $\mC\subseteq\Fn$ as \[\mC^{(i)} \coloneqq \mC\star\mC^{(i-1)},\] where $\mC^{(0)}\coloneqq\langle(1,\dots,1)\rangle_{\Fq}$ is the code generated by the all-ones vector of length $n$. For a comprehensive discussion on the Schur product of codes, we refer to~\cite{randriambololona2015onproducts}. For the purposes of this work, we only recall that $\dim\left(C^{(i)}\right)\leq\dim\left(C^{(i+1)}\right)$ for all $i\geq0$. This property motivates the following definition.
\begin{definition}\label{definition:dimseq}
    Let $\mC\subseteq \Fn$ be a linear code. The sequence of integers \[\left\{\dim\big(\mC^{(i)}\big)\right\}_{i\geq0}\] is the \emph{dimension sequence}, or \emph{Hilbert sequence}, of $\mC$. The \emph{Castelnuovo--Mumford regularity} of $\mC$ is the smallest positive integer $r=r(\mC)$ such that, for all $i\geq0$,\[\dim\big(\mC^{(r)}\big)=\dim\big(\mC^{(r+i)}\big).\]
\end{definition}
The Hilbert sequence and the Castelnuovo--Mumford regularity are closely related to the corresponding notions in commutative algebra from which they derive their names. Let $\mC$ be a linear code of dimension $k$ in $\Fq^n$ and $G\in\Fq^{k\times n}$ be a generator matrix for $\mC$. Suppose that $\mC$ has \emph{full support}, \emph{\emph{i.e.}} $G$ has no zero column. For $i\in\{1,\dots,n\}$, let $[\mathbf{g}_i]$ be the point in the projective space $\mathbb{P}^{k-1}(\overline{\F}_q)$ that corresponds to the $i$-th column $\mathbf{g}_i$ of $G$. Then, we define the {\em set of projective points associated} with $\mC$ as
\begin{equation*}
    \Pi_G\coloneqq\left\{[\mathbf{g}_i]:1\leq i\leq n\right\}\subseteq\mathbb{P}^{k-1}(\overline{\F}_q)\,.
\end{equation*}
Notice that $\Pi_G$ is a set, and hence its points are considered without multiplicity. In particular, $\lvert\Pi_G\rvert=n$ if and only if the columns of $G$ are pairwise linearly independent. While the set $\Pi_G$ depends on the chosen matrix $G$, this dependence is only up to projective equivalence, since changing the generator matrix induces a linear projective transformation of $\Pi_G$. As a result, the geometric properties of $\Pi_G$ are independent of the specific choice of $G$.
\begin{remark}\label{remark:projsyst}
    The set $\Pi_G$ lives in the projective space defined over the algebraic closure of $\F_q$. This decision was made to avoid several technical complications. However, since all the entries of $G$ are in $\F_q$, and most of the properties that we are interested in are invariant under ground field extension, one could equally work in the projective space  $\mathbb{P}^{k-1}(\F_q)$. In that case, $\Pi_G$ coincides with the underlying set of the projective system associated with the code (see~\cite{tsfasman2013algebraic} for a definition of projective system).
\end{remark}
Given a variety $\mV\subseteq\mathbb{P}^{k-1}(\Fq)$, let $I(\mV)\subseteq\Fq[x_1,\dots,x_k]$ be its vanishing ideal (\emph{i.e.}, the homogeneous ideal of polynomials that evaluate to zero on $\mV$) and $S_\mV\coloneqq\Fq[x_1,\dots,x_k]/I(\mV)$ be its {\em homogeneous coordinate ring}. We denote by $(S_\mV)_d$ the homogeneous part of degree $d$ of $S_\mV$. For $s\geq0$, \[\HF_{S_\mV}\st s \longmapsto \dim\left(\Fq[x_1,\dots,x_k]/I(\mV)\right)_s\] is the {\em Hilbert function} of $S_\mV$. It is well known that there exists a unique polynomial $\HP_{S_\mV}$, called the {\em Hilbert polynomial} of $S_\mV$, whose evaluation sequence at non-negative integers agrees with the Hilbert function for all sufficiently large $s$. The smallest integer $\bar s$ for which $\HF_{S_\mV}(s)=\HP_{S_\mV}(s)$, for all $s \geq\bar s$, is called the {\em Hilbert regularity} $\HR(S_\mV)$ of $S_\mV$. Note, in particular, that the Castelnuovo--Mumford regularity (as it is defined in the context of commutative algebra) and the Hilbert regularity of $S_\mV$ coincide if the ideal $I(\mV)$ is 0-dimensional. We refer to~\cite[Section 1.9]{eisenbud2013commutative} for further details. The following proposition illustrates the aforementioned relationship between Hilbert functions and dimension sequences.
\begin{proposition}[{\cite[Proposition 1.28]{randriambololona2015onproducts}}]\label{prop: H seq and CM reg}
    Let $\mC\subseteq\Fn$ be a linear code. Let also $G$ be a generator matrix of $\mC$ and $\Pi_G\subseteq\mathbb{P}^{k-1}(\overline{\F}_q)$ be a set of projective points associated with $\mC$. Then,
    \begin{enumerate}
        \item the Hilbert sequence of $\mC$ is equal to the Hilbert function of $\Pi_G$,
        \item the Castelnuovo--Mumford regularity $r(\mC)$ of $\mC$ coincides with the Castelnuovo--Mumford regularity of $\Pi_G$,
        \item $\dim\left(\mC^{(i)}\right)=\lvert \Pi_G\rvert$, for all $i\geq r(\mC)$,
    \end{enumerate}
    where the Hilbert function and the Castelnuovo--Mumford regularity of $\Pi_G$ are those of its homogeneous coordinate ring.
\end{proposition}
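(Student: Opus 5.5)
The key tool is an evaluation map from polynomials to codewords. Let $G$ have columns $\mathbf{g}_1,\dots,\mathbf{g}_n$, and for each $s\geq0$ consider the $\Fq$-linear map
\[
\ev_s\colon \Fq[x_1,\dots,x_k]_s\longrightarrow \Fn,\qquad f\longmapsto \bigl(f(\mathbf{g}_1),\dots,f(\mathbf{g}_n)\bigr).
\]
Here $\Fq[x_1,\dots,x_k]_s$ denotes the space of forms of degree $s$. The plan has three steps.

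\emph{Step 1: the image of $\ev_s$ is $\mC^{(s)}$.} Each codeword of $\mC$ has the form $\ev_1(\ell)$ for a linear form $\ell$, since the row space of $G$ is exactly the set of vectors $(\ell(\mathbf{g}_i))_i$. Also $\ev$ is multiplicative with respect to $\star$: we have $\ev(fg)=\ev(f)\star\ev(g)$. The space of degree-$s$ forms is spanned by products of $s$ linear forms, with the constant $1$ in degree $0$ giving the all-ones vector. By induction on $s$ this gives $\ev_s(\Fq[x]_s)=\mC^{(s)}$.

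\emph{Step 2: the kernel of $\ev_s$ is $I(\Pi_G)_s$.} A form of degree $s$ vanishes at the representative $\mathbf{g}_i$ if and only if it vanishes at the projective point $[\mathbf{g}_i]$, by homogeneity. The points of $\Pi_G$ are defined over $\Fq$, and the ideal is generated over $\Fq$, so there is no issue with ground field extension. Hence the kernel consists exactly of the degree-$s$ part of $I(\Pi_G)$, and
\[
\dim \mC^{(s)}=\HF_{S_{\Pi_G}}(s).
\]
This proves (1).

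\emph{Step 3: stabilisation, which gives (2) and (3).} For a finite set of $N=\lvert\Pi_G\rvert$ points, the Hilbert function is nondecreasing and eventually equals $N$. To see this, for each point choose a form that vanishes at all the other points but not at that one: take a product of linear forms, one through each other point and avoiding the chosen one, then multiply by powers of a linear form not vanishing at the chosen point. This shows that for large $s$ the evaluation onto $\Fq^{N}$, taking one coordinate per distinct point, is surjective. Repeated columns give repeated coordinates, so $\dim\mC^{(s)}=N$ for large $s$.

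Since the ideal is zero-dimensional, the Hilbert polynomial is the constant $N$. The Hilbert function is nondecreasing, and once it reaches $N$ it stays there. The Castelnuovo--Mumford regularity of $\Pi_G$ therefore coincides with the first index at which the Hilbert function attains the constant value $N$. By the definition of $r(\mC)$ this is the same index, which gives (2). Statement (3) follows immediately.

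The only delicate point is the identification in (2) of the commutative-algebra regularity with the first index at which the Hilbert function stabilises. For this I would use the fact, recalled before the statement, that the two notions coincide for $0$-dimensional ideals. One should also take care with the indexing convention: $r(\mC)$ is required to be positive, which handles the degenerate case of a single point.
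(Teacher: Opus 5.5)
Your argument is correct and is essentially the standard proof of the cited result; the paper gives no proof of its own, only the reference to Randriambololona. That proof evaluates degree-$s$ forms at the columns of $G$, identifies the image of $\ev_s$ with $\mC^{(s)}$ and its kernel with $I(\Pi_G)_s$, and then uses that the Hilbert function of a finite point set stabilises at $\lvert\Pi_G\rvert$, with regularity index equal to the Castelnuovo--Mumford regularity of the coordinate ring. One small correction to your closing remark: for a single point (which happens exactly when $k=1$) the coordinate ring is $\F_q[x_1]$ and has regularity $0$, so forcing $r(\mC)\geq 1$ does not handle this case but breaks (2) there, and the minimum defining $r(\mC)$ must be taken over nonnegative integers. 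Also, proportional rather than equal columns give coordinates rescaled by $\lambda^s$, which is harmless because $\ker \ev_s$ depends only on the projective points.
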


\subsection{Sum-Rank-Metric Codes and their Geometric Interpretation}
Let $t,m,n_1,\dots,n_t$ be positive integers such that $k\leq n\coloneqq\sum_{i=1}^t n_i$. We say that $n_1,\dots,n_t$ define a {\em partition} of the vector $\nbf\coloneqq(n_1,\dots,n_t)$ and we set \[\Fm^{\nbf}\coloneqq\bigoplus_{i=1}^t \Fm^{n_i}.\]

\begin{definition}
    Let $w_{\rk}(\wbf)$ denote the ($\Fq$-)\emph{rank weight} of the vector $\wbf=(w_1,\dots,w_n)\in\Fmn$, defined as $w_{\rk}(\wbf)\coloneqq\dim_{\Fq}\langle w_1,\dots,w_n\rangle_{\Fq}$. Then, the {\em sum-rank weight}, $w_{\srk}$, of $\vbf=(\mathbf{v}_1,\dots,\mathbf{v}_t)\in\Fm^\nbf$ is \[w_{\srk}(\vbf)\coloneqq\sum_{i=1}^t w_{\rk}(\mathbf{v}_i).\]
\end{definition}
It immediately follows that the sum-rank weight reduces to the rank weight over $\Fmn$ if $t=1$, while it coincides with the Hamming weight over $\Fmn$ if $n_1=\dots=n_t=1$. Furthermore, by evaluating the distance between two vectors as the weight of their difference, both the rank and the sum-rank weight naturally induce a metric over $\Fmn$ and $\Fm^{\nbf}$, respectively.
\begin{definition}
    An {\em $[\nbf,k,d]_{q^m/q}$ sum-rank-metric code} $\mC$ is a $k$-dimensional $\Fm$-linear subspace of $\Fm^{\nbf}$, endowed with the sum-rank metric. The parameter $d$ denotes the {\em minimum (sum-rank) distance} of $\mC$ and is defined as $d\coloneqq\min\{w_{\srk}(\cbf)\st\cbf\in\mC\setminus\{\mathbf{0}\}\}$. We will simply refer to an $[\nbf,k,d]_{q^m/q}$ code as an $[\nbf,k]_{q^m/q}$ code if its minimum distance is unknown or not relevant.
\end{definition}
In what follows, we restrict ourselves to reporting only what is strictly relevant for our purposes. For further information on sum-rank-metric codes, we refer the reader to~\cite{gorla2023sum,martinez2022codes}. To keep track of the original partition defined by $\nbf$, we will often represent a generator matrix $G$ of $\mC\subseteq\Fm^{\nbf}$ as the block matrix $(G_1|\cdots|G_t)$, where $G_i\in\Fm^{k\times n_i}$ for all $i\in\{1,\dots,t\}$. We say that $\mC$ is {\em non-degenerate} if the columns of each submatrix $G_i$ are linearly independent over $\Fq$. In the following, we will always assume codes to be non-degenerate. Two $[\nbf,k,d]_{q^m/q}$ codes are said to be {\em{equivalent}} if there exists a self-isometry of the ambient space that maps one onto the other. The following theorem characterizes the possible isometries of $\F_{q^m}^{\mathbf{n}}$ with respect to the sum-rank metric.
\begin{theorem}[{\cite[Theorem V.2]{camps2022optimal}}]\label{theorem:characterizationisometries}
    Let $\varphi:\F_{q^m}^{\mathbf{n}}\rightarrow\F_{q^m}^{\mathbf{n}}$ be an $\F_{q^m}$-linear isometry for the sum-rank metric. Then, there exist a permutation $\sigma:\{1,\dots,t\}\rightarrow\{1,\dots,t\}$ with the property that $\sigma(i)=j$ implies $n_i=n_j$, $a_1,\dots,a_t\in\F_{q^m}^*$, and invertible matrices $M_1\in\GL_{n_1}(\Fq),\dots,M_t\in\GL_{n_t}(\Fq)$ such that
    $$\varphi(\mathbf{v}_1,\dots,\mathbf{v}_t)=\left(a_1\mathbf{v}_{\sigma^{-1}(1)}M_1,\dots,a_t\mathbf{v}_{\sigma^{-1}(t)}M_t\right).$$
\end{theorem}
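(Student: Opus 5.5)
The plan is to show, in order, that $\varphi$ permutes the blocks and that on each block it is a rank-metric isometry of the stated form. Everything will be read off from the vectors of sum-rank weight $1$. A vector $\vbf\in\F_{q^m}^{\mathbf{n}}$ has $w_{\srk}(\vbf)=1$ if and only if it is supported in a single block $i$ and there its component has the form $a\mathbf{u}$, with $a\in\F_{q^m}^*$ and $\mathbf{u}\in\F_q^{n_i}\setminus\{\mathbf{0}\}$. Since $\varphi$ is an $\F_{q^m}$-linear isometry, it is injective and hence bijective, and it maps the set $W_1$ of weight-one vectors bijectively onto itself. Moreover, because $\varphi$ is linear, it preserves the weight of every $\F_{q^m}$-linear combination of such vectors.

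\emph{Step 1: blocks are preserved.} For $\mathbf{x},\mathbf{y}\in W_1$, set $\mathbf{x}\sim\mathbf{y}$ if either they are $\F_{q^m}$-proportional, or there exist $\alpha,\beta\in\F_{q^m}^*$ with $w_{\srk}(\alpha\mathbf{x}+\beta\mathbf{y})=1$.
\begin{itemize}
\item If $\mathbf{x}$ and $\mathbf{y}$ lie in different blocks, then every combination $\alpha\mathbf{x}+\beta\mathbf{y}$ with $\alpha\beta\neq0$ has weight $2$, so $\mathbf{x}\not\sim\mathbf{y}$.
\item If $\mathbf{x}=a\mathbf{u}$ and $\mathbf{y}=b\mathbf{w}$ lie in the same block and are not proportional, then $a^{-1}\mathbf{x}+b^{-1}\mathbf{y}=\mathbf{u}+\mathbf{w}$ is a nonzero vector of $\F_q^{n_i}$. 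Hence it has weight $1$, and $\mathbf{x}\sim\mathbf{y}$.
\end{itemize}
So $\sim$ is exactly the relation ``lying in the same block''. Since it is defined only through weights and linear combinations, it is preserved by $\varphi$. The $\F_{q^m}$-span of the weight-one vectors in block $i$ is all of $\F_{q^m}^{n_i}$ (it contains the standard basis vectors). Therefore $\varphi$ maps block $i$ into a single block $\sigma(i)$, and $\sigma$ is injective. Bijectivity of $\varphi$ then forces $\sigma$ to be a permutation and $\varphi(\F_{q^m}^{n_i})=\F_{q^m}^{n_{\sigma(i)}}$, so $n_i=n_{\sigma(i)}$.

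\emph{Step 2: the map on one block.} Restricting $\varphi$ to block $i$ gives an $\F_{q^m}$-linear bijection $\psi\colon\F_{q^m}^{n}\to\F_{q^m}^{n}$, with $n=n_i$, that preserves the rank weight. Write $\psi(\mathbf{e}_j)=a_j\mathbf{u}_j$ with $a_j\in\F_{q^m}^*$ and $\mathbf{u}_j\in\F_q^n$. The vectors $\mathbf{u}_j$ are $\F_{q^m}$-linearly independent, because $\psi$ is injective. For $j\neq l$, the vector $\psi(\mathbf{e}_j+\mathbf{e}_l)=a_j\mathbf{u}_j+a_l\mathbf{u}_l$ has rank weight $1$. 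Expanding this vector in an $\F_q$-basis of $\F_{q^m}$ and using the $\F_q$-independence of $\mathbf{u}_j$ and $\mathbf{u}_l$, this forces $a_l/a_j\in\F_q^*$.

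Fix $a\coloneqq a_1$ and write $a_j=a\lambda_j$ with $\lambda_j\in\F_q^*$. Let $M\in\F_q^{n\times n}$ be the matrix whose $j$-th row is $\lambda_j\mathbf{u}_j$. Its rows are independent, so $M\in\GL_n(\F_q)$. Then $\psi(\mathbf{e}_j)=a\,\mathbf{e}_jM$ for every $j$, and by linearity $\psi(\mathbf{v})=a\mathbf{v}M$ for all $\mathbf{v}$.

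Doing this for every block, and relabelling the scalars and matrices by target block, yields exactly the form $\varphi(\mathbf{v}_1,\dots,\mathbf{v}_t)=(a_1\mathbf{v}_{\sigma^{-1}(1)}M_1,\dots,a_t\mathbf{v}_{\sigma^{-1}(t)}M_t)$.

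The main obstacle is Step 1, specifically finding a purely metric characterization of ``same block'' for weight-one vectors. The key point there is that two weight-one vectors in different blocks can never combine to weight one, whereas two vectors in the same block always can after rescaling them into $\F_q^{n_i}$. The rank-one computation in Step 2 is routine.
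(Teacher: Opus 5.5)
The paper gives no proof of this statement. It is imported from \cite{camps2022optimal} and used only as a black box, to see that equivalent codes have projectively equivalent sets $\mL_G$. So there is no internal argument to compare with, and your proof has to be judged on its own. It is correct and self-contained.

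Characterizing ``same block'' for weight-one vectors by the existence of $\alpha,\beta\in\Fm^*$ with $w_{\srk}(\alpha\mathbf{x}+\beta\mathbf{y})=1$ is sound. Step~2 in effect reproves the classical description of $\Fm$-linear rank-metric isometries rather than citing it, and nothing you do requires $n_i\le m$.

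Two places deserve one more line each when you write this up:
\begin{enumerate}
\item Injectivity of $\sigma$ uses that $\varphi$ also \emph{reflects} $\sim$, not just preserves it. This holds because $w_{\srk}(\alpha\varphi(\mathbf{x})+\beta\varphi(\mathbf{y}))=w_{\srk}(\alpha\mathbf{x}+\beta\mathbf{y})$; equivalently, $\varphi^{-1}$ is again a linear isometry.
\item The rank-one step is cleanest stated directly. The $2\times n$ matrix over $\Fq$ with rows $\mathbf{u}_j,\mathbf{u}_l$ has rank $2$. Hence the $\Fq$-span of the entries of $a_j\mathbf{u}_j+a_l\mathbf{u}_l$ is exactly $\langle a_j,a_l\rangle_{\Fq}$, which is one-dimensional if and only if $a_l/a_j\in\Fq^*$.
\end{enumerate}
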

Note that, since the multiplication on the right by invertible matrices over $\F_q$ is a sum-rank-metric isometry, two  isometric $\Fm$--linear sum-rank metric codes $\mC, \mC'$ with respective generator matrices $G,G'$ might be associated with sets of points $\Pi_G, \Pi_{G'}$ that are not projectively equivalent. In light of Proposition~\ref{prop: H seq and CM reg}, this implies that the dimension sequence of codes is not invariant under sum-rank equivalence and therefore does not provide meaningful structural information about the underlying code. As a consequence, in the sum-rank-metric setting, the above defined set $\Pi_G$ is not the most suitable geometric object that a code can be associated with. The same problem already holds for rank-metric codes, and it was overcome in \cite{ALFARANO2022105658,randrianarisoa2020geometric} by defining a different geometric interpretation of codes that classifies them in terms of their equivalence classes. In a similar way, in \cite{NERI2023105703} the authors associate sum-rank-metric codes with an analogous geometric object whose properties are deeply connected with structural properties of the associated family of codes and can therefore be studied to reveal them. In the following, we work with a similar, but slightly modified, version of the geometric object introduced in \cite{NERI2023105703}. \\
Let $\mC$ be an $[\nbf,k]_{q^m/q}$ code with generator matrix $G=(G_1|\dots|G_t)$. For $i\in\{1,\dots,t\}$, consider the column space $\mU_i\coloneqq\colspan_{\Fq}(G_i)$ of $G_i$. Similarly to the classical case described in the previous subsection, we aim to derive an associated projective variety. For this reason, for every $i\in\{1,\dots,t\}$, we consider the projective set
\begin{equation*}
    \mL_{\mU_i}\coloneqq\left\{[\mathbf{u}]:\mathbf{u}\in\mU_i\setminus\{\mathbf{0}\}\right\}\subseteq\mathbb{P}^{k-1}(\overline{\F}_{q^m}).
\end{equation*}
Finally, we define the {\em set of points associated with $\mC$ with respect to the generator matrix $G$} as
\begin{equation*}
    \mL_{G}\coloneqq\mL_{\mU_1}\cup\cdots\cup \mL_{\mU_t}\subseteq\mathbb{P}^{k-1}(\overline{\F}_{q^m}).
\end{equation*}
We point out that we are always working set theoretically, and therefore up to multiplicity, as in the classical case. Regarding the cardinality of $\mL_G$, we immediately obtain the following inequality 
\begin{equation*}
    \lvert \mL_G\rvert\leq\sum_{i=1}^t\frac{q^{n_i}-1}{q-1}.
\end{equation*}
\begin{remark}
    Similarly to what is pointed out in Remark~\ref{remark:projsyst}, the set $\mL_G$ could also be considered as a set of points in the projective geometry $\mathbb{P}^{k-1}(\F_{q^m})$. In that case, $\mL_G$ coincides with the underlying set of the linear set associated with $G$. We refer to~\cite{NERI2023105703} for the definition of linear set. 
\end{remark}
The next proposition follows immediately from the definition of $\mL_G$ and Theorem~\ref{theorem:characterizationisometries}.
\begin{proposition}\label{prop: equiv codes equiv proj sets}
    Let $\mC_1$ and $\mC_2$ be two $[\nbf,k,d]_{q^m/q}$ equivalent codes with generator matrices $G_1$ and $G_2$, respectively. Then, $\mL_{G_1}$ and $\mL_{G_2}$ are equal up to a linear projective transformation.
\end{proposition}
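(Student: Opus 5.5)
The plan is to translate the code equivalence into a relation between generator matrices, and then read off how each column space $\mU_i$, and hence each linear set $\mL_{\mU_i}$, transforms.

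Let $\varphi$ be an $\Fm$-linear sum-rank isometry with $\varphi(\mC_1)=\mC_2$. By Theorem~\ref{theorem:characterizationisometries}, $\varphi$ is determined by a permutation $\sigma$ of the blocks (preserving block sizes), scalars $a_1,\dots,a_t\in\Fm^*$, and matrices $M_j\in\GL_{n_j}(\Fq)$. Write $G_1=(A_1|\cdots|A_t)$ in block form. Applying $\varphi$ to each row of $G_1$ gives the matrix
\[
G'=\bigl(a_1A_{\sigma^{-1}(1)}M_1\,|\,\cdots\,|\,a_tA_{\sigma^{-1}(t)}M_t\bigr),
\]
which is a generator matrix of $\mC_2$. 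Since $G_2$ generates the same code, there is $P\in\GL_k(\Fm)$ with $G_2=PG'$.

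Next we compute the column spaces. For each block $j$, right multiplication by $M_j\in\GL_{n_j}(\Fq)$ preserves the $\Fq$-column span. Multiplication by $a_j$ rescales that span, so
\[
\colspan_{\Fq}(a_jA_{\sigma^{-1}(j)}M_j)=a_j\,\colspan_{\Fq}(A_{\sigma^{-1}(j)}).
\]
Left multiplication by $P$ then applies $P$ to every vector. Hence the $j$-th column space of $G_2$ is $P\,a_j\,\mU_{\sigma^{-1}(j)}$, where $\mU_i=\colspan_{\Fq}(A_i)$.

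Passing to projective points, the scalar $a_j$ disappears: $[a_jP\mathbf u]=[P\mathbf u]$ for every nonzero $\mathbf u$. Therefore $\mL_{\text{block }j\text{ of }G_2}=[P](\mL_{\mU_{\sigma^{-1}(j)}})$. Taking the union over $j$, and using that $\sigma$ is a bijection, gives $\mL_{G_2}=[P](\mL_{G_1})$. Here $[P]$ is the projectivity of $\mathbb{P}^{k-1}(\overline{\F}_{q^m})$ induced by $P$, so $\mL_{G_1}$ and $\mL_{G_2}$ agree up to a linear projective transformation.

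None of these steps is a real obstacle. The only points needing care are that $\Fq$-linear combinations commute with the $\Fm$-linear maps used here, and that the scalars $a_j$ may differ from block to block. This is harmless because each block is projectivised separately before the union is taken.
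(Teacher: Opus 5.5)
Your proof is correct and is the argument the paper has in mind. The paper only says the proposition follows immediately from the definition of $\mL_G$ and Theorem~\ref{theorem:characterizationisometries}, and you have written out those details: the block permutation, the $\GL_{n_j}(\Fq)$ factors preserving $\Fq$-column spans, the per-block scalars $a_j$ disappearing under projectivization, and the change of generator matrix $P$ inducing the projectivity. Like the paper, you tacitly take equivalence to be realized by an $\Fm$-linear isometry, which is what makes that theorem applicable.
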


\begin{example} \label{example: geometry of sum rank code}
    Let $q=2$, $m=3$ and let $\F_8=\F_2(\alpha)$, with $\alpha^3+\alpha+1=0$. Consider also $\nbf=(3,2)$ and $k=2$. Let $\mC$ be the $[(3,2),2]_{8/2}$ code generated by
    \[G=(G_1|G_2)=\left(\begin{array}{ccc|cc}
        1 & 0 & \alpha\, & \alpha+1 & \alpha^2 \\
        0 & 1 & 0 & 0 & 1
    \end{array}\right).\]
    Then,
    \begin{align*}
        \mU_1&=\left\{\begin{pmatrix}0 \\ 0\end{pmatrix},\begin{pmatrix}1 \\ 0\end{pmatrix}, \begin{pmatrix}0 \\ 1\end{pmatrix}, \begin{pmatrix}\alpha \\ 0\end{pmatrix}, \begin{pmatrix}1 \\ 1\end{pmatrix}, \begin{pmatrix}\alpha+1 \\ 0\end{pmatrix}, \begin{pmatrix}\alpha \\ 1\end{pmatrix}, \begin{pmatrix}\alpha+1 \\ 1\end{pmatrix}\right\} \\
        \mU_2&=\left\{\begin{pmatrix}0 \\ 0\end{pmatrix},\begin{pmatrix}\alpha+1 \\ 0\end{pmatrix}, \begin{pmatrix}\alpha^2 \\ 1\end{pmatrix}, \begin{pmatrix}\alpha^2+\alpha+1 \\ 1\end{pmatrix}\right\},
    \end{align*} and the projective sets in $\mathbb{P}^1(\F_8)$ associated with the two blocks of $\mC$ are
    \begin{align*}
        \mL_{\mU_1}&=\{[1:0], [0:1], [1:1], [\alpha:1], [\alpha+1:1]\} \\
        \mL_{\mU_2}&=\{[\alpha+1:0], [\alpha^2:1], [\alpha^2+\alpha+1:1]\}.
    \end{align*} In conclusion, the set of points associated with $\mC$ with respect to $G$ is
    \[\mL_G=\{[1:0], [0:1], [1:1], [\alpha:1], [\alpha+1:1], [\alpha^2:1], [\alpha^2+\alpha+1:1]\}.\]
\end{example}

In order to fully extend the geometric interpretation of codes introduced in the previous subsection to the sum-rank-metric setting, we conclude this paragraph by introducing the following notion (see \cite{NERI2023105703, ALFARANO2022105658}) which is designed so that its associated set of projective points faithfully captures the geometric object of a sum-rank-metric code.
Let $\sim_{\Fq}$ be the proportionality relation over $\Fm^k$ such that for all $u,u'\in\Fm^k$, $u\sim_{\Fq}u'$ if and only if $u=\lambda u'$, for some $\lambda\in\Fq^*$.
\begin{definition}
    Let $\mC$ be an $[\nbf,k]_{q^m/q}$ code with generator matrix $G=(G_1|\cdots|G_t)$. For all $i\in\{1,\dots,t\}$, let $L_i\coloneqq\lvert\mL_{G_i}\rvert$. Define $\GH\coloneqq (\GH_1\,|\cdots|\,\GH_t)$, where $\GH_i\in\Fm^{k\times L_i}$ is a matrix whose columns form a set of representatives of $(\mU_i\setminus\{\mathbf{0}\})/\sim_{\Fq}$. Then, $\mCH\coloneqq\rowspan_{\Fm}(\GH)$ is the (projective) {\em Hamming-metric code associated} with $\mC$.
\end{definition}
In particular, note that $\mCH$ is a $k$-dimensional code over $\Fm^L$, with $L\coloneqq L_1+\dots+L_t$, and, as its name suggests, it is considered to be endowed with the Hamming metric. Moreover, it was proved in \cite[Section 5.1]{NERI2023105703} that equivalent codes with respect to the sum-rank metric are mapped into equivalent associated Hamming-metric codes, with respect to the Hamming metric (to be precise, our definition is the projective version of the associated code, without repeated columns, but the result in \cite{NERI2023105703} implies the equivalence in our case).  

Consider the projective system $\Pi_{\GH}\subseteq\mathbb{P}^{k-1}(\Fm)$ associated with $\mCH$, as it was described in Remark~\ref{remark:projsyst}. Then, the relevance of the associated Hamming-metric code is immediately evident when realizing that $\Pi_{\GH}$ coincides precisely with the set of projective points $\mL_G$ associated with its underlying sum-rank-metric code $\mC=\rowspan_{\Fm}(G)$. Therefore, in view of Proposition~\ref{prop: H seq and CM reg}, in the sum-rank-metric setting it is natural to introduce the following definition, whose notation aligns with that adopted in \cite{astore2025geometric} in the context of rank-metric codes.
\begin{definition}
    The {\em $(q^m/q)$-dimension sequence}, or \emph{$(q^m/q)$-Hilbert sequence}, of a sum-rank-metric code $\mC$ is the Hilbert sequence of its associated Hamming-metric code $\mCH$.
\end{definition}
The choice of introducing this definition is further supported by the fact that equivalent $[\nbf,k]_{q^m/q}$ codes are associated with projectively equivalent sets of points in $\mathbb{P}^{k-1}(\Fm)$, as stated in Proposition~\ref{prop: equiv codes equiv proj sets}. As a result, the structural properties of a code $\mC=\rowspan_{\Fm}(G)$ which are reflected in its associated set of projective points can be studied through its corresponding $(q^m/q)$-dimension sequence. In particular, the definition of $(q^m/q)$-Hilbert sequence and Proposition~\ref{prop: H seq and CM reg} immediately imply that
\begin{align} \label{eq: Fq-Hilbert seq}
    \dim{\left(\mC^{\textrm{H}}\right)}^{(i)}&=\dim\big(\Fm[x_1,\dots,x_k]/I(\mL_G)\big)_i \nonumber \\
    &= \binom{k+i-1}{i}-\dim({I(\mL_G)}_i).
\end{align}

\begin{example}
    Consider again the code $\mC$ of Example~\ref{example: geometry of sum rank code}. Then,
    \begin{align*}
        (\mU_1\setminus\{\mathbf{0}\})/_{\sim_{\F_2}}&=\left\{\begin{pmatrix}1 \\ 0\end{pmatrix}, \begin{pmatrix}0 \\ 1\end{pmatrix}, \begin{pmatrix}\alpha \\ 0\end{pmatrix}, \begin{pmatrix}1 \\ 1\end{pmatrix}, \begin{pmatrix}\alpha+1 \\ 0\end{pmatrix}, \begin{pmatrix}\alpha \\ 1\end{pmatrix}, \begin{pmatrix}\alpha+1 \\ 1\end{pmatrix}\right\} \\
        (\mU_2\setminus\{\mathbf{0}\})/_{\sim_{\F_2}}&=\left\{\begin{pmatrix}\alpha+1 \\ 0\end{pmatrix}, \begin{pmatrix}\alpha^2 \\ 1\end{pmatrix}, \begin{pmatrix}\alpha^2+\alpha+1 \\ 1\end{pmatrix}\right\},
    \end{align*} and we obtain that the associated Hamming-metric code with $\mC$ is generated by the matrix
    \[\GH=\left(\GH_1\,|\,\GH_2\right)=\left(\begin{array}{ccccccc|ccc}
        1 & 0 & \alpha & 1 & \alpha+1 & \alpha & \alpha+1 & \alpha+1 & \alpha^2 & \alpha^2+\alpha+1 \\ 0 & 1 & 0 & 1 & 0 & 1 & 1 & 0 & 1 & 1
                                             \end{array}\right).\] It is then immediate to see that the set of projective points \[\Pi_{\GH}=\{[1:0], [0:1], [1:1], [\alpha:1], [\alpha+1:1], [\alpha^2:1], [\alpha^2+\alpha+1:1]\},\] defined by the projectivization of the columns of $\GH$, coincides with the set $\mL_G$ derived above. Moreover, the vanishing ideal over $\F_8[x_1,x_2]$ of these sets is \[I(\mL_G)=\big(x_2x_1(x_1+x_2)(x_1+\alpha x_2)(x_1+(\alpha+1)x_2)(x_1+\alpha^2x_2)(x_1+(\alpha^2+\alpha+1)x_2)\big),\] and, for all $i\geq1$, we can compute the $i$-th term of the $(2^3/2)$-Hilbert sequence of $\mC$ as
                                         \[\dim {\left(\mC^{\textrm{H}}\right)}^{(i)}=\binom{2+i-1}{i}-\dim(I(\mL_G)_i)=i+1-\dim({I(\mL_G)}_i).\]
\end{example}

\subsection{Linearized Reed--Solomon Codes}
As in the cases of linear block codes and rank-metric codes, a Singleton-like bound also holds in the sum-rank-metric setting.
\begin{theorem}[Singleton-like bound]
    Let $\mC$ be an $[\nbf,k,d]_{q^m/q}$ code. Then, \[d\leq n-k+1.\]
  \end{theorem}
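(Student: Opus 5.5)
The plan is to bound the sum-rank weight from below by the Hamming weight and then apply the classical Singleton bound for Hamming-metric codes over $\Fm$. Here $n=\sum_{i=1}^t n_i$ is the total length.

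\textbf{Step 1: comparing weights.} Let $\vbf=(\vbf_1,\dots,\vbf_t)\in\Fm^{\nbf}$. For each block, the $\Fq$-span of the entries of $\vbf_i$ is spanned by its nonzero entries. Hence
\[
w_{\rk}(\vbf_i)\le w_H(\vbf_i).
\]
Summing over the blocks gives $w_{\srk}(\vbf)\le w_H(\vbf)$, where $\vbf$ is regarded as a vector in $\Fm^n$. Consequently the minimum sum-rank distance $d$ of $\mC$ is at most its minimum Hamming distance $d_H$.

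\textbf{Step 2: the Hamming Singleton bound.} I will prove $d_H\le n-k+1$ by a puncturing argument. Consider the projection $\pi\colon\mC\to\Fm^{k-1}$ onto the first $k-1$ coordinates. Since $\dim_{\Fm}\mC=k>k-1$, the map $\pi$ has a nontrivial kernel. So there is a nonzero codeword $\cbf$ that vanishes on the first $k-1$ coordinates. Its Hamming weight is then at most $n-(k-1)=n-k+1$.

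\textbf{Step 3: conclusion.} Combining the two steps,
\[
d\le w_{\srk}(\cbf)\le w_H(\cbf)\le n-k+1.
\]

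An alternative that stays intrinsic to the sum-rank metric is also possible. One would choose coordinates so that $\cbf$ vanishes on $k-1$ of the $n$ positions, and then observe that each block of $\cbf$ has rank at most the number of its nonzero positions. This is really the same argument. There is no genuine obstacle; the only point needing care is the inequality $w_{\rk}\le w_H$ on each block, which holds because the span of $n_i$ entries, some of which are zero, has dimension at most the number of nonzero entries.
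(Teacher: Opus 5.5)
Your proof is correct. The paper gives no argument of its own and simply cites \cite[Prop.~16]{MARTINEZPENAS2018587}; your route, bounding $w_{\srk}$ blockwise by the Hamming weight and then applying the classical Hamming-metric Singleton bound, is the standard proof of that cited result.
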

  \begin{proof}
    See  \cite[Prop.~16]{MARTINEZPENAS2018587}.
  \end{proof}
  
Codes attaining the Singleton-like bound are called {\em maximum sum-rank distance separable} codes, or shortly MSRD codes. In this section, we focus on a subfamily of MSRD codes known as {\em linearized Reed--Solomon codes}. These codes were introduced by Martínez-Peñas in~\cite{MARTINEZPENAS2018587} and can be viewed as a special class of evaluation codes. To define them, we first recall the notion of skew polynomials, originally introduced in a more general setting by Ore in~\cite{ore1933theory}. \\ Let $\sigma$ be a generator of the Galois group $\Gal(\Fm/\Fq)$. The {\em skew polynomial ring} $\ore$ is the set of formal polynomials \[\bigg\{f=\sum_i f_ix^i\st f_i\in\Fm\text{ and finitely many }f_i\text{'s are different from }0 \bigg\},\] equipped with the standard componentwise polynomial addition \[f+g=\sum_i(f_i+g_i)x^i\] and the multiplication rule \[x\alpha=\sigma(\alpha)x,\] which is extended to polynomials by associativity and distributivity. Note that $\ore$ trivially reduces to the standard polynomial ring $\Fm[x]$ if $\sigma$ is the identity map, $\id_{\Fm}$, over $\Fm$. \\ The {\em degree} of a non-zero skew polynomial $f=\sum_i f_ix^i\in\ore$ is defined as $\max\{i\st f_i\neq0\}$, while we set $-\infty$ to be the degree of the zero polynomial. We will denote by $\ore_{<k}$ the set of skew polynomials of degree less than $k$.

Standard polynomial evaluation is not meaningful when considering Ore polynomials, as it does not define a ring homomorphism. Hence, it is needed to define an alternative evaluation map for the ring $\ore$. Here, we adopt the approach commonly referred to as {\em generalized operator evaluation}. For $a,b\in\Fm$, we define the evaluation map \[\ev_{a,b}:\ore\longrightarrow\Fm, \ \sum_i f_ix^i\longmapsto \sum_i f_i\left(\sigma^i(b)N_i(a)\right),\] where the $N_i$'s denote the \emph{$i$-th truncated norm} operators, namely,
\begin{equation}\label{eq:partial_norms}
  \forall a \in \Fm,\quad  N_0(a)\coloneqq 1 \qquad  \text{and}\qquad \forall i >0,\ \  N_i(a)\coloneqq\prod_{j=0}^{i-1}\sigma^j(a).
\end{equation}
In particular, $N_m$ is nothing but the usual norm on $\Fm/\Fq$ operator:
\begin{equation}\label{eq:norm}
  N_m : \Fm \longrightarrow \Fq,\ \ a \longmapsto \prod_{i=0}^{m-1} \sigma^i(a).
\end{equation}
To introduce linearized Reed--Solomon codes, we use a multi-point evaluation version of the map above. Let $\mathbf{a}=(a_1,\dots,a_t)\in\Fm^t$ and $\mathbf{b}=(\mathbf{b}_1,\dots,\mathbf{b}_t)\in\Fm^{\mathbf{n}}$, with $\mathbf{b}_i=(b_{i,1},\dots,b_{i,n_i})\in\Fm^{n_i}$ for $i\in\{1,\dots,t\}$. Then, we define
\begin{align*}
    \ev_{\abf,\bbf} :\ \ore &\longrightarrow \Fm^\nbf \\ f &\longmapsto \left(\ev_{a_1,b_{1,1}}(f),\dots,\ev_{a_1,b_{1,n_1}}(f)\,\big|\,\dots\,\big|\,\ev_{a_t,b_{t,1}}(f),\dots,\ev_{a_t,b_{t,n_t}}(f)\right).
\end{align*} In order to use the map $\ev_{\abf,\bbf}$ for our goal, we need to impose some further conditions on the evaluation parameters $\abf$ and $\bbf$. Aligning with the notation introduced in \cite{NERI2023105703}, we will say that $(\abf,\bbf)\in\Fm^t\times\Fm^\nbf$ is an {\em evaluation pair} if the norms over $\Fq$ of the components of $\abf$ are non-zero and pairwise distinct, and the elements of each component of $\bbf$ are linearly independent over $\Fq$. In particular, note that the last condition is equivalent to require that $w_{\srk}(\bbf)=n$.
\begin{definition}[{\cite{MARTINEZPENAS2018587}}]
    Let $(\abf,\bbf)\in\Fm^t\times\Fm^\nbf$ be an evaluation pair. A {\em linearized Reed--Solomon code over $\Fm^{\mathbf{n}}$ with parameters $\mathbf{a}$, $\mathbf{b}$ and dimension $k$} is defined as \[\LRS_k(\mathbf{a},\mathbf{b})\coloneqq\big\{\ev_{\mathbf{a},\mathbf{b}}(f)\st f\in\ore_{<k}\big\}.\]
\end{definition}
Note that the conditions for $(\abf,\bbf)\in\Fm^t\times\Fm^\nbf$ to form an evaluation pair imply that the maximum number of blocks of a linearized Reed--Solomon is $t=q-1$ and that the length of each block must satisfy $n_i\leq m$, for all $i\in\{1,\dots,t\}$.

It is immediate to see that a generator matrix for $\LRS_k(\mathbf{a},\mathbf{b})$ is defined as the block matrix $G=(G_1|\cdots|G_t)$, with
\begin{align} \label{eq: gen matrix LRS}
    G_i &=\left(\begin{array}{ccc}
    b_{i,1} & \dots & b_{i,n_i}\\
    \sigma(b_{i,1})N_1(a_i) & \cdots & \sigma(b_{i,n_i})N_1(a_i) \\
    \sigma^2(b_{i,1})N_2(a_i) & \cdots & \sigma^2(b_{i,n_i})N_2(a_i) \\
    \vdots & & \vdots \\
    \sigma^{k-1}(b_{i,1})N_{k-1}(a_i) & \cdots & \sigma^{k-1}(b_{i,n_i})N_{k-1}(a_i) \\
    \end{array}\right)  \nonumber \\
    &= \underbrace{\left(\begin{array}{cccc}
    1 & & & \\
    & N_1(a_i) & & \\
    &  & \ddots & \\
    & & & N_{k-1}(a_i) \end{array}\right)}_{\coloneqq D_i} \ \underbrace{\left(\begin{array}{ccc}
    b_{i,1} & \cdots & b_{i,n_i}\\
    \sigma(b_{i,1}) & \cdots & \sigma(b_{i,n_i}) \\
    \vdots & & \vdots \\
    \sigma^{k-1}(b_{i,1}) & \cdots & \sigma^{k-1}(b_{i,n_i}) \\
    \end{array}\right)}_{\coloneqq M_i},
\end{align} for $i\in\{1\dots,t\}$.
\begin{remark}\label{remark: blocks are gen gab}
    Recall that if $t=1$, the sum-rank metric reduces to the notion of rank metric over $\Fmn$. If in addition $a_1=1$, \emph{i.e.}, $\abf=(1)$, the definition of linearized Reed--Solomon coincides with that of generalized Gabidulin codes. More precisely, recall that (generalized) Gabidulin codes are the $q$-analogues of Reed-Solomon codes and are defined as follows (see \cite{kshevetskiy2005new} for further details): let $b_1,\dots,b_n\in\Fm$ be linearly independent elements over $\Fq$ and $s\in\{1,\dots,m-1\}$ be an integer coprime with $m$. The {\em generalized Gabidulin code, $\Gab_{k,s}(b_1,\dots,b_n)$, with evaluation vector $(b_1,\dots,b_n)$, parameter $s$ and of dimension $k$} over $\Fmn$ is generated by the matrix
    \[\begin{pmatrix}
        b_1 &  \cdots & b_n \\ b_1^{[s]} & \cdots & b_n^{[s]} \\ \vdots & & \vdots \\ b_1^{[s(k-1)]} & \cdots & b_n^{[s(k-1)]}
    \end{pmatrix}, \] where $[i]:=q^i$, for any non-negative integer $i$. In particular, we simply say that $\Gab_k\coloneqq\Gab_{k,1}$ is a {\em Gabidulin code of dimension $k$}. \\ Note that the matrix $M_i$ depicted above coincides exactly with a generator matrix for $\Gab_{k,s}$ by picking $\sigma$ to be the automorphism $\sigma:\Fm\longrightarrow\Fm, \ \alpha\longmapsto\alpha^{[s]}$. Moreover, in the representation \eqref{eq: gen matrix LRS} above, the action of the diagonal matrix $D_i$ is nothing but a change of basis for the code generated by $M_i$. Therefore, $G_i$ generates as well a generalized Gabidulin code over $\Fmn$; specifically, $\rowspan_{\Fm}(G_i)=\Gab_{k,s}(b_{i,1},\dots,b_{i,n_i})$.
 
    At the same time, if $n_1=\dots=n_t=1$, $\abf=(1)$ and $\sigma=\id_{\Fm}$, it is immediate to conclude that a $k$-dimensional LRS code reduces to a $k$-dimensional Reed-Solomon code over $\Fmn$.
\end{remark} \medskip

In~\cite{NERI2023105703} the notion of linearized Reed--Solomon codes was further generalized by extending the evaluation of skew polynomials at the points $0$ and $\infty$.
\begin{definition}
    Let $\nbf=(n_1=m,\dots,n_{t-2}=m,n_{t-1}=1,n_t=1)$, $(\abf,\bbf)\in\Fm^{t-2}\times\Fm^{(n_1,\dots,n_{t-2})}$ be an evaluation pair and $b_0,b_\infty\in\Fm^*$. A {\em doubly-extended linearized Reed--Solomon code over $\Fm^{\mathbf{n}}$ with parameters $\mathbf{a}$, $\mathbf{b}$, $b_0$, $b_\infty$ and dimension $k$} is defined as \[\LRS_k(\mathbf{a},\mathbf{b}, b_0, b_\infty)\coloneqq\big\{\big(\ev_{\mathbf{a},\mathbf{b}}(f) \,|\, \ev_{0,b_0}(f) \,|\, \ev_{\infty,b_\infty}(f) \big)\st f\in\ore_{<k}\big\},\]
    where for $f=\sum_if_ix^i\in\ore_{<k}$, $\ev_{\infty,b_\infty}(f)\coloneqq b_\infty f_{k-1}$.
\end{definition}
Note once again that because of the constraints imposed by the definition of an evaluation pair, other than the last two one-length blocks, a doubly-extended linearized Reed--Solomon can have at most $q-1$ blocks of length $m$.
\begin{theorem}(\cite[Theorem 4]{MARTINEZPENAS2018587},\cite[Theorem 4.6]{NERI2023105703})
    Let $\LRS_k\subseteq\Fm^\nbf$ be a (doubly extended) linearized Reed--Solomon code. Then, $\LRS_k$ is MSRD.
\end{theorem}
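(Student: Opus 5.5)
Since $\LRS_k$ is a $k$-dimensional code of length $n$ over $\Fm^{\nbf}$, the Singleton-like bound already gives $d\le n-k+1$. The proof therefore reduces to the reverse inequality: every nonzero codeword has sum-rank weight at least $n-k+1$. Equivalently, for every nonzero $f\in\ore_{<k}$ we must show
\[
\sum_{i=1}^t \rk_{\Fq}\big(\ev_{a_i,b_{i,1}}(f),\dots,\ev_{a_i,b_{i,n_i}}(f)\big)\ \ge\ n-k+1 .
\]

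\textbf{Step 1: linearity within a block.} For fixed $a$, the map $b\mapsto \ev_{a,b}(f)=\sum_j f_j N_j(a)\sigma^j(b)$ is $\Fq$-linear in $b$; call it $\phi_{a}^f:\Fm\to\Fm$. Since $b_{i,1},\dots,b_{i,n_i}$ are $\Fq$-independent, the $\Fq$-span of the $i$-th block of the codeword is $\phi_{a_i}^f(\mathcal B_i)$, where $\mathcal B_i=\langle b_{i,1},\dots,b_{i,n_i}\rangle_{\Fq}$. Hence the rank of the $i$-th block equals
\[
n_i-\dim_{\Fq}\big(\ker\phi_{a_i}^f\cap\mathcal B_i\big).
\]
It therefore suffices to prove
\[
\sum_{i=1}^t \dim_{\Fq}\ker \phi_{a_i}^f \le k-1 = \deg f \quad (\text{or less}).
\]

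\textbf{Step 2 (key step): the root bound.} This is the main obstacle: a skew polynomial of degree $d$ has at most $d$ roots counted in this sense, summed over conjugacy classes with pairwise distinct norms. I would argue via the product (P-)independence theory of Lam and Leroy, used in \cite{MARTINEZPENAS2018587}. First, $\ev_{a,b}(f)=0$ with $b\neq0$ iff $x-\sigma(b)ab^{-1}$ right-divides $f$ in $\ore$. Next, the points $\sigma(b)ab^{-1}$ as $b$ ranges over $\Fm^*$ form the $\sigma$-conjugacy class of $a$, and conjugacy classes correspond exactly to norm values $N_m(a)$. So distinct norms give distinct classes. Within one class, the roots of $f$ coming from a $\Fq$-subspace $V$ of $b$'s have a minimal polynomial of degree $\dim V$ (a $q$-analogue of Lagrange interpolation, i.e.\ Moore-matrix invertibility). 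Across distinct classes these minimal polynomials are pairwise coprime, and their least common left multiple has degree equal to the sum of their degrees. That lclm right-divides $f$, which yields the required inequality. If citing results is not allowed, the cleanest self-contained route is to show that the $n\times k$ transposed generator matrix restricted to any choice of $\sum_i r_i=k$ directions (with $r_i\le n_i$) is invertible. One can do this by induction on $t$, peeling off one conjugacy class with a right division by its minimal polynomial.

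\textbf{Step 3: the doubly-extended case.} Here I would reduce to the previous argument by adding the classes of $0$ and $\infty$. The evaluation $\ev_{0,b_0}(f)=f_0b_0$ vanishes iff $x$ right-divides $f$, which is a class of size one. For $\infty$, I would use the reversal $f\mapsto x^{k-1}$-reciprocal, which swaps the roles of $f_0$ and $f_{k-1}$ while preserving the other classes up to inversion of $a$. Alternatively, I would use a degree drop: if $f_{k-1}=0$ then $\deg f\le k-2$, so the root budget over the remaining blocks decreases by one, compensating the zero coordinate at $\infty$. Combining these with Step 2 gives $d\ge n-k+1$, and hence MSRD.
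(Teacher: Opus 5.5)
The paper gives no proof of its own for this statement; it only cites \cite[Theorem 4]{MARTINEZPENAS2018587} and \cite[Theorem 4.6]{NERI2023105703}. Your Steps 1--2 reconstruct essentially the Lam--Leroy P-independence argument of the first reference. Three ingredients are correct: the reduction in Step 1, the identification of $\ev_{a,b}(f)=0$ (for $b\neq0$) with right divisibility of $f$ by $x-\sigma(b)ab^{-1}$, and the within-class bound $\dim_{\Fq}\ker\phi_a^f\le\deg f$. In Step 3, treating $\{0\}$ as an extra class (minimal polynomial $x$) and handling $\infty$ by the degree-drop count is also correct and sufficient. You can drop the reversal alternative: it is not needed, and it would require care because coefficient reversal does not respect $x\alpha=\sigma(\alpha)x$.

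The weak point is the one step that carries the real content, namely additivity of degrees across distinct classes. Pairwise right-coprimality does not imply $\deg\mathrm{lclm}(g_1,\dots,g_t)=\sum_i\deg g_i$ for three or more skew polynomials. For example, take $b_1,b_2$ linearly independent over $\Fq$. The linear polynomials $x-\sigma(b)b^{-1}$, $b\in\{b_1,b_2,b_1+b_2\}$, are pairwise coprime, yet their lclm has degree $2$. Likewise, your ``peeling'' induction never explains why right-dividing $f$ by the minimal polynomial $g_t$ of $\ker\phi_{a_t}^f$ leaves the root dimensions in the other classes unchanged.

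The missing ingredient is the following. The map $f\mapsto\phi_a^f$ is a ring homomorphism $\ore\to\mathrm{End}_{\Fq}(\Fm)$ with $\phi_a^{x^m}=N_m(a)\,\id$. Hence $g_t$ right-divides the central polynomial $x^m-N_m(a_t)$; write $x^m-N_m(a_t)=u\,g_t$. If $\phi_{a_i}^{g_t}(b)=0$, then $0=\phi_{a_i}^{u}(\phi_{a_i}^{g_t}(b))=(N_m(a_i)-N_m(a_t))\,b$, which forces $b=0$. So $\phi_{a_i}^{g_t}$ is bijective for $i\neq t$. Writing $f=h\,g_t$ then gives $\dim\ker\phi_{a_i}^f=\dim\ker\phi_{a_i}^h$ for $i<t$ and $\deg h=\deg f-\dim\ker\phi_{a_t}^f$, which closes the induction. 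The class of $0$ is handled the same way, via $x\mid x^m$ and $N_m(a_i)\neq0$.

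In your lclm language the same fact reads as follows. The lclm of $g_1,\dots,g_s$ right-divides $\prod_{i\le s}(x^m-N_m(a_i))$, which is coprime to $x^m-N_m(a_{s+1})$ in the centre $\Fq[x^m]$. A B\'ezout identity there forces $\mathrm{gcrd}=1$, and Ore's formula $\deg\mathrm{lclm}(u,v)=\deg u+\deg v-\deg\mathrm{gcrd}(u,v)$ gives additivity. With this argument, or a direct citation of Lam--Leroy's theorem that a union of P-independent sets from distinct conjugacy classes is P-independent, your proof is complete.
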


\section{The Geometry of Linearized Reed--Solomon Codes} \label{The Geometry of Linearized Reed-Solomon Codes}
It is well known that a $k$-dimensional Reed--Solomon code over $\Fq^n$ corresponds to a set of points in $\mathbb{P}^{k-1}(\Fq)$ lying on a {\em rational normal curve} (see ~\cite[p. 120]{arbarello2013geometry}). In the rank-metric setting, MRD codes over $\Fmn$ can be similarly associated with finite geometric objects, namely linear sets scattered with respect to hyperplanes in $\mathbb{P}^{k-1}(\Fm)$ (see~\cite{sheekey2019linear}). However, up to now, no specific geometric characterization of Gabidulin codes was known, except for the work of~\cite{astore2025geometric}. In this section, we investigate the geometry of linearized Reed--Solomon codes, developing a generalization of the connection existing in the Hamming-metric setting and providing a foundation for a deeper geometric understanding of Gabidulin codes, which will be developed further in the next section, specifically in Subsection~\ref{LRS codes with an arbitrary number of blocks}.

Let $j$ be a natural number coprime with $m$. Consider the vectors $\albf\coloneqq(\alpha_1,\alpha_2,\dots, \alpha_k)$, $\bebf\coloneqq(\beta_1,\beta_2,\dots,\beta_k)\in\Z^k$, where 
$$\alpha_i\coloneqq\frac{q^{(i-1)j}-1}{q^j-1}\text{ and }\beta_i\coloneqq\frac{q^{(k-1)j}-1}{q^j-1}-\alpha_i$$
for all $i\in\{1,\dots,k\}$. 
\begin{definition}
   The $q^j$-{\em rational normal curve} $X_{q^j}\subseteq\mathbb{P}^{k-1}(\overline{\F}_{q^m})$ is the curve defined by the monomial map
\begin{align}\label{eq:parametrization_rat_norm_curve}
    \varphi:\,\mathbb{P}^1(\overline{\F}_{q^m})&\longrightarrow\mathbb{P}^{k-1}(\overline{\F}_{q^m}) \\
    [a:b]&\longmapsto[a^{\alpha_1}b^{\beta_1}:\cdots:a^{\alpha_k}b^{\beta_k}], \nonumber
\end{align}
where we use the convention ``$0^0 = 1$''.
\end{definition}
Let $\sigma:\overline{\F}_{q^m}\longrightarrow\overline{\F}_{q^m}$ be the automorphism given by $\sigma(a)=a^{q^j}$ for all $a\in\overline{\F}_{q^m}$. Note that the restriction of $\sigma$ to $\F_{q^m}$ is a generator of the Galois group $\Gal(\Fm/\Fq)$. Using the notion of truncated norm introduced above, the curve $X_{q^j}$ can be equivalently described as the set 
\[X_{q^j}=\left\{\left[N_0(a):N_1(a):\cdots:N_{k-1}(a)\right]:a\in \overline{\F}_{q^m}\right\} \cup \{P_\infty\},\]
where $P_\infty\coloneqq[0:\cdots:0:1]$ and the $N_i$'s are the truncated norms defined in (\ref{eq:partial_norms}) and (\ref{eq:norm}).

The $q^j$-rational normal curve can be realized as a monomial projection of the rational normal curve of degree $\alpha_k$. However, this property is shared by a large family of monomial curves and therefore does not uniquely characterize it. The terminology is instead justified by the theorem below, which provides a determinantal representation of the curve. This representation is given in terms of a matrix of forms that acts as a natural $q^j$-analogue of the classical Hankel matrix defining standard rational normal curves.
\begin{proposition}\label{prop: determinantal variety}
    The vanishing ideal of the $q^j$-rational normal curve $X_{q^j}$ is generated by the $2\times2$ minors of the matrix
    \[\begin{pmatrix}
        x_2 & x_3 & x_4 & \cdots & x_k\\
        x_1^{q^j} & x_2^{q^j} & x_3^{q^j} & \cdots & x_{k-1}^{q^j}
      \end{pmatrix}.\] Moreover, these minors are linearly independent
    homogeneous polynomials of degree $q^j+1$.
\end{proposition}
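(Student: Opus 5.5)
Throughout, write $Q\coloneqq q^j\geq 2$. For $1\le r<s\le k-1$, the $2\times 2$ minor on columns $r,s$ of the matrix in the statement is
\[
f_{r,s}\coloneqq x_{r+1}x_s^{Q}-x_{s+1}x_r^{Q}.
\]
Each $f_{r,s}$ is clearly homogeneous of degree $Q+1$. The plan is to prove, in order: (i) linear independence of the $f_{r,s}$, (ii) $f_{r,s}\in I(X_{q^j})$, (iii) the zero locus of $J\coloneqq(f_{r,s})_{r<s}$ is exactly $X_{q^j}$, and (iv) the ideal-theoretic statement $J=I(X_{q^j})$. I expect (iv) to be the main obstacle.

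\textbf{Steps (i)--(iii).} For (i), the $2\binom{k-1}{2}$ monomials $x_{r+1}x_s^Q$ and $x_{s+1}x_r^Q$ are pairwise distinct. Since $Q\ge 2$, the exponent pattern $(1,Q)$ determines the variable with exponent $1$ and the one with exponent $Q$. Hence each $f_{r,s}$ contains monomials that appear in no other minor, which gives linear independence.

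For (ii), use the parametrization by truncated norms. At a point $[N_0(a):\cdots:N_{k-1}(a)]$ we have $N_i(a)=a\,\sigma(N_{i-1}(a))$, so the first row of the matrix equals $a$ times the second row, and the matrix has rank $\le 1$. At $P_\infty$ the second row is zero, since $x_1,\dots,x_{k-1}$ vanish there. Thus all $f_{r,s}$ vanish on $X_{q^j}$. Equivalently, each $f_{r,s}$ is a binomial whose two monomials have the same bidegree under $\varphi$: both give $a^{\alpha_{r+1}+Q\alpha_s}b^{\dots}$, by the identity $\alpha_{i+1}=1+Q\alpha_i$.

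For (iii), let $P=[x_1:\cdots:x_k]$ annihilate $J$, so the matrix has rank $\le1$.
\begin{itemize}
\item If $x_1\neq0$, normalize $x_1=1$. The second row is then nonzero, so $(x_2,\dots,x_k)=\lambda(x_1^Q,\dots,x_{k-1}^Q)$ for some $\lambda$. Inductively this gives $x_{i+1}=N_i(\lambda)$.
\item If $x_1=0$, let $i$ be minimal with $x_i\neq 0$. If $i\le k-1$, the minor on columns $i-1,i$ equals $x_i\cdot x_i^Q-x_{i+1}\cdot 0=x_i^{Q+1}$, which forces $x_i=0$, a contradiction. Hence $i=k$, so $P=P_\infty$.
\end{itemize}

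\textbf{Step (iv), the main obstacle.} By (iii) and the Nullstellensatz, $I(X_{q^j})=\sqrt J$, so it remains to show that $J$ contains $I(X_{q^j})$. Since $X_{q^j}$ is the closure of the image of the monomial map $\varphi$, its ideal is the toric ideal $I_A$ of
\[
A=\begin{pmatrix}1&\cdots&1\\ \alpha_1&\cdots&\alpha_k\end{pmatrix}.
\]
This ideal is spanned by the binomials $x^u-x^v$ with $\deg u=\deg v$ and $\sum u_i\alpha_i=\sum v_i\alpha_i$. I would show that every such binomial lies in $J$ by a standard-monomial argument. Fix a term order in which the leading term of $f_{r,s}$ is $x_{s+1}x_r^Q$. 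Moving from $x_{s+1}x_r^{Q}$ to $x_{r+1}x_s^{Q}$ pushes weight towards the middle indices, which suggests a lex-type order with $x_1>\cdots>x_k$, but the precise order has to be checked. The goal is then the following: in each degree $d$, the monomials not divisible by any leading term $x_{s+1}x_r^Q$ ($r<s$) have pairwise distinct $\alpha$-weights. Once this holds, reducing both monomials of a binomial in $I_A$ to normal form modulo $J$ produces standard monomials of equal weight, hence equal monomials, so the binomial lies in $J$.

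The key combinatorial claim is therefore a uniqueness statement for normal forms: the weight $\sum u_i\alpha_i$ should behave like a base-$Q$ expansion, using $\alpha_{i+1}=Q\alpha_i+1$. The avoided patterns say that, roughly, only one variable of index below the top nonzero index may carry exponent $\ge Q$. This makes the digit structure rigid, and I expect it to yield injectivity by a carry argument. That argument is the delicate part of the proof.

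If the term-order approach becomes messy, I would fall back on a dimension count. Namely, I would compute $\dim(S/J)_d$ directly by counting standard monomials and compare it with $\dim(S/I_A)_d$, which equals the number of distinct attainable weights in degree $d$. The inclusion $J\subseteq I_A$ together with equality of these dimensions in every degree then gives $J=I_A$.
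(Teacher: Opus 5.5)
Your steps (i)--(iii) are correct and essentially coincide with the paper's argument. Independence follows from pairwise disjoint monomial supports, and the zero locus is found by the same case split $x_1\neq 0$ versus $x_1=0$; your use of the minor on columns $i-1,i$ spells out what the paper calls immediate. There is a small slip in (i): for $s=r+1$ the monomial $x_{r+1}x_s^{Q}$ is the pure power $x_s^{Q+1}$, not of pattern $(1,Q)$, although it is still distinct from all the others.

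The genuine gap is step (iv), which is the entire content of the statement beyond $\sqrt{J}=I(X_{q^j})$. You neither fix the term order nor prove that standard monomials of equal degree have distinct $\alpha$-weights; both are left as expectations (``has to be checked'', ``I expect''). Your fallback dimension count is the same claim in disguise, since it amounts to showing that the number of standard monomials in degree $d$ equals the number of attainable weights. So, as written, $J=I(X_{q^j})$ is not established.

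The plan does work, and the missing lemma is short.

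\emph{Leading terms.} Use lex with $x_1>\cdots>x_k$. The exponent vectors of $x_r^{Q}x_{s+1}$ and $x_{r+1}x_s^{Q}$ first differ at index $r$, so $\mathrm{LT}(f_{r,s})=x_r^{Q}x_{s+1}$. Hence $x^u$ is standard iff, whenever $u_p\geq Q$ with $p\leq k-2$, one has $u_t=0$ for all $t\geq p+2$.

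\emph{Base-$Q$ identity.} The structure you anticipated is the identity $(Q-1)\sum_i u_i\alpha_i=\sum_i u_iQ^{i-1}-|u|$. So in fixed degree, equal weight means equal $\sum_i u_iQ^{i-1}$.

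\emph{Injectivity.} Let $x^u\neq x^{u'}$ be standard of the same degree and weight, and let $p$ be the first index where they differ; note $p\leq k-1$ because the degrees agree. Then $\sum_{i\geq p}u_iQ^{i-p}=\sum_{i\geq p}u'_iQ^{i-p}$ and $\sum_{i\geq p}u_i=\sum_{i\geq p}u'_i$. Reducing modulo $Q$ and swapping if necessary, $u_p=u'_p+aQ$ with $a\geq 1$. Therefore $u_t=0$ for $t\geq p+2$, by standardness if $p\leq k-2$ and vacuously if $p=k-1$. The two equations become $a+u_{p+1}=\sum_{i>p}u'_iQ^{i-p-1}$ and $aQ+u_{p+1}=\sum_{i>p}u'_i$. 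The first right-hand side dominates the second, so $a\geq aQ$, a contradiction.

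\emph{Conclusion.} The rewriting $x_r^{Q}x_{s+1}\mapsto x_{r+1}x_s^{Q}$ preserves degree and weight and terminates. Hence both monomials of any binomial of $I_A$ reduce modulo $J$ to the same standard monomial, so $I_A\subseteq J$.

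Completed this way, your route genuinely differs from the paper's. The paper argues that the scheme defined by $J$ is reduced: its reduction $X_{q^j}$ is irreducible, and the Jacobian of the minors at $P_0$ has rank $k-2$. That is quicker, but it tacitly needs $J$ to be saturated and free of embedded components, since a non-reduced scheme with irreducible reduction can still have regular points. This does hold here, for instance by Eagon--Northcott: $J$ has the expected codimension $k-2$, so the quotient is Cohen--Macaulay. Your argument needs none of this, and it additionally shows that the minors form a lex Gr\"obner basis of the toric ideal.
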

\begin{proof}
  Let us first prove the linear independence of the generators. The minors have the shape:
  \begin{equation}\label{eq:binomials}
    P_{uv} \coloneqq x_{u}x_{v}^{q^j} - x_{u-1}^{q^j}x_{v+1}, \qquad \text{for}\qquad 2 \leq u \leq v \leq k-1.
  \end{equation}
  Observe that no monomial can occur in two distinct minors above, which
  proves the independence of the $P_{uv}$'s when $2 \leq u \leq v \leq k-1$.

  Let $I$ be the ideal generated by the $2\times 2$ minors of the matrix displayed above and let $Z(I)$ be the variety defined by $I$. Consider a point $[a_1:\dots:a_k]\in Z(I)(\overline{\F}_{q^m})$. If $a_1=0$, we immediately conclude that $a_2=\dots=a_{k-1}=0$. Hence, the only point obtained in this case is $P_\infty$. If $a_1=1$, considering the defining equations involving $x_1$, we obtain $a_i=N_{i-1}(a_2)$, for all $i\in\{1,\dots,k\}$. As a consequence, we conclude that $Z(I)$ coincides with $X_{q^j}$ when regarded as varieties over $\overline{\F}_{q^m}$.
  
  Therefore, to show that $I=I(X_{q^j})$, it only remains to prove
  that $I$ is radical. Equivalently, we need to prove that the
  subscheme $\mS$ of $\mathbb{P}^{k-1}$ defined by $I$ is reduced.
  Suppose it is not, since the the underlying
  reduced scheme $\mS_{\text{red}}$ (see \cite[Proposition~3.50]{GW20} for a definition)
  is $X_q$ which is irreducible, then $\mS$ would have no regular point. Thus,
  we are reduced to prove the existence of a regular point for $\mS$. Namely
  we will prove that $\mS$ is regular at $P_0 \coloneqq [1 : 0 : \cdots : 0]$.
  From \cite[Corollary~6.29]{GW20}, $\mS$ is regular at $P_0$ if and only if
  its tangent space at $P_0$ has dimension $1$ and from \cite[Proposition~6.10(2)]{GW20}
  this tangent space can be obtained from the kernel of the Jacobian matrix of the $P_{uv}$'s introduced in (\ref{eq:binomials}). Since
  \[
    {\left(\frac{\partial P_{uv}}{\partial x_i}(P_0)\right)}_{i=1}^k = \left\{
      \begin{array}{cc}
        (0, \dots, 0, \underbrace{-1}_{v+1}, 0 \dots, 0) & \text{if}\ u=2,\\
        (0,\dots, 0) & \text{otherwise},
      \end{array}
      \right.
    \]
    the tangent space of $\mS$ at $P_0$ is the projectivisation of the kernel of
    the matrix
    \[
      \begin{pmatrix}
        0 & 0 & 1 & 0 & \cdots & 0 \\
        0 & 0 & 0 & \ddots &  \ddots & \vdots \\
        0 & 0 & \vdots & \ddots & \ddots   & 0 \\
        0 & 0 & 0 & \dots  & 0 & 1
      \end{pmatrix}.
    \]
    The above matrix has a kernel of dimension $2$, and hence the projectivisation of the
    kernel has dimension $1$, which, from \cite[Proposition~6.10(2)]{GW20}, shows that the
    tangent space of $\mS$ at $P_0$ has dimension $1$ and hence that $\mS$ is regular at $P_0$.
\end{proof}
\begin{example}
    To illustrate the previous proposition, let us set $j=1$ and $k=4$. Then, $\sigma$ is simply the Frobenius automorphism and $N_i(a)=a^{\frac{q^i-1}{q-1}}$, for all $i\geq0$ and $a\in\overline{\F}_{q^m}$. Consider the ideal $I$ generated by the $2\times2$ minors of the matrix
    \[\begin{pmatrix}
        x_2 & x_3 & x_4\\x_1^q & x_2^q & x_3^q
    \end{pmatrix},\] 
    and its associated variety 
    \begin{equation*}
        Z(I)=\{[x_1:x_2:x_3:x_4]\in\mathbb{P}(\overline{\F}_{q^m})\st x_2^{q+1}-x_1^qx_3=0, \ x_3^{q+1}-x_2^qx_4=0, \ x_2x_3^q-x_4x_1^q=0\}.
    \end{equation*}
    Let $[a_1:a_2:a_3:a_4]$ be a point in $Z(I)$. If $a_1=0$, we immediately get that $a_2=a_3=0$, obtaining the point $[0:0:0:1]=P_\infty$. Otherwise,
        \[\begin{cases}
            x_1=1 \\
            x_3=x_2^{q+1} \\
            x_3^{q+1}=x_2^qx_4 \\
            x_4=x_2x_3^q \\
        \end{cases}\implies \
        \begin{cases}
            x_1=1 \\
            x_3=x_2^{q+1}=x_2^{\frac{q^2-1}{q-1}} \\
            x_4=x_2^{q^2+q+1}=x_2^{\frac{q^3-1}{q-1}} \\
        \end{cases}.\] 
        Hence, we get all the points of the form \[\left[1:x_2:x_2^{\frac{q^2-1}{q-1}}:x_2^{\frac{q^3-1}{q-1}}\right]=\left[N_0(a_2):N_1(a_2):N_2(a_2):N_3(a_2)\right],\] for $a_2\in\overline{\F}_{q^m}$. In conclusion, $$Z(I)=\left\{\left[N_0(a):N_1(a):N_2(a):N_3(a)\right]:a\in \overline{\F}_{q^m}\right\} \cup \{P_\infty\}.$$
\end{example} \smallskip

Recall that an \emph{$\Fm$-rational point} of $\mathbb{P}^{k-1}$ is a point whose homogeneous coordinates are proportional to a vector in $\F_{q^m}^k$. Then, the {\em $\Fm$-rational points of $X_{q^j}$} are \[X_{q^j}(\Fm)=\left\{\left[N_0(a):N_1(a):\cdots:N_{k-1}(a)\right]:a\in \Fm\right\} \cup \{P_\infty\}.\] In particular, observe that $\lvert X_{q^j}(\Fm)\rvert=q^m+1$.

For all $a\in\Fm$, we denote by $\mV_a(\Fm)$ the set
\begin{equation*}
    \mV_a(\Fm)\coloneqq \left\{\left[b:N_1(a)\sigma(b):N_2(a)\sigma^2(b):\cdots:N_{k-1}(a)\sigma^{k-1}(b)\right]:b\in\Fm^*\right\}.
\end{equation*}

\begin{lemma} \label{lemma: different norms different sets}
  Let $a_1,a_2\in\Fm$. Then, the following statements are equivalent
  \begin{enumerate}
  \item\label{item:norms} $N_m(a_1)=N_m(a_2)$.
  \item\label{item:same_sets} $\mV_{a_1}(\Fm)=\mV_{a_2}(\Fm)$;
  \item\label{item:intersection} $\mV_{a_1}(\Fm) \cap \mV_{a_2}(\Fm) \neq \emptyset$;
  \end{enumerate}
\end{lemma}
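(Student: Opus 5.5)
Note that $\mV_a(\Fm)$ is the projectivization of the $\Fq$-subspace $\{(b,N_1(a)\sigma(b),\dots,N_{k-1}(a)\sigma^{k-1}(b)) : b\in\Fm\}$. For $a=0$ it is the single point $[1:0:\cdots:0]$. Degenerate cases such as $k=1$, or $a_1=0\neq a_2$, are handled directly: if $a_1=0\ne a_2$, a point of $\mV_{a_2}$ has second coordinate $N_1(a_2)\sigma(b)\neq0$, so the two sets are disjoint and the norms differ. I would therefore assume $k\ge2$ and $a_1,a_2\neq0$, and prove the cycle (\ref{item:norms})$\Rightarrow$(\ref{item:same_sets})$\Rightarrow$(\ref{item:intersection})$\Rightarrow$(\ref{item:norms}). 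The middle implication is trivial, since the sets are nonempty.

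\textbf{(\ref{item:norms})$\Rightarrow$(\ref{item:same_sets}).} Since $\sigma|_{\Fm}$ generates the Galois group, Hilbert 90 for the cyclic extension $\Fm/\Fq$ applies. If $N_m(a_1)=N_m(a_2)$, then $N_m(a_2/a_1)=1$, so $a_2=a_1\sigma(c)/c$ for some $c\in\Fm^*$. The truncated norms telescope:
\[
N_i(a_2)=N_i(a_1)\prod_{l=0}^{i-1}\frac{\sigma^{l+1}(c)}{\sigma^l(c)}=N_i(a_1)\,\frac{\sigma^i(c)}{c}.
\]
Hence, for every $b\in\Fm^*$,
\[
N_i(a_2)\sigma^i(b)=c^{-1}\,N_i(a_1)\sigma^i(cb)\quad\text{for all } i.
\]
So the point of $\mV_{a_2}$ with parameter $b$ equals the point of $\mV_{a_1}$ with parameter $cb$. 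As $b\mapsto cb$ is a bijection of $\Fm^*$, this gives $\mV_{a_2}(\Fm)\subseteq\mV_{a_1}(\Fm)$, and the reverse inclusion follows by symmetry.

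\textbf{(\ref{item:intersection})$\Rightarrow$(\ref{item:norms}).} This is the step that needs care. Suppose
\[
[b:N_1(a_1)\sigma(b):\cdots]=[b':N_1(a_2)\sigma(b'):\cdots]
\]
with $b,b'\in\Fm^*$. Since $k\ge 2$, comparing the first two coordinates gives a proportionality constant $\lambda=b'/b$ and the relation
\[
N_1(a_2)\sigma(b')=\lambda N_1(a_1)\sigma(b),
\]
that is, $a_2\sigma(b')/b'=a_1\sigma(b)/b$. Taking $N_m$ of both sides and using $N_m(\sigma(x)/x)=1$ yields $N_m(a_1)=N_m(a_2)$. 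Only the first two coordinates are used here, so the argument works for any $k\ge2$, and the whole difficulty reduces to treating the degenerate cases above correctly.
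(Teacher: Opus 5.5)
Your proof is correct and follows essentially the same route as the paper's: Hilbert 90 plus telescoping of the truncated norms for (\ref{item:norms})$\Rightarrow$(\ref{item:same_sets}), and comparing the ratio of the first two coordinates and then taking norms for (\ref{item:intersection})$\Rightarrow$(\ref{item:norms}). One small correction: the case $k=1$ cannot be ``handled directly''. There every $\mV_a(\Fm)$ is the single point of $\mathbb{P}^0$, so (\ref{item:same_sets}) and (\ref{item:intersection}) always hold while (\ref{item:norms}) may fail; hence $k\ge 2$ is a standing hypothesis (implicit in the paper) rather than a degenerate case to treat.
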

\begin{proof}
(\ref{item:norms}) $\Rightarrow$ (\ref{item:same_sets}).    Let $a_1,a_2\in\Fm$ be such that $N_m(a_1)=N_m(a_2)$. By Hilbert 90 theorem over finite fields (see~\cite[Exercise 2.33]{lidl1994introduction}), this equality holds if and only if $a_1=a_2\frac{\sigma(u)}{u}$ for some $u\in\Fm^*$. As a consequence, for all $b\in\Fm^*$, it holds that
    \begin{gather*}
        \left[b:N_1(a_1)\sigma(b):\cdots:N_{k-1}(a_1)\sigma^{k-1}(b)\right] \\[-6pt] \verteq \\[-4pt]
        \left[b:N_1(a_2)N_1\left(\frac{\sigma(u)}{u}\right)\sigma(b):\cdots:N_{k-1}(a_2)N_{k-1}\left(\frac{\sigma(u)}{u}\right)\sigma^{k-1}(b)\right] \\[-4pt] \verteq \\[-4pt]
        \left[b:N_1(a_2)\frac{\sigma(u)}{u}\sigma(b):\cdots:N_{k-1}(a_2)\frac{\sigma(u)\sigma^2(u)\cdots\sigma^{k-1}(u)}{u\,\sigma(u)\cdots\sigma^{k-2}(u)}\sigma^{k-1}(b)\right] \\[-4pt] \verteq \\[-4pt]
        \left[b:\frac{N_1(a_2)\sigma(ub)}{u}:\cdots:\frac{N_{k-1}(a_2)\sigma^{k-1}(ub)}{u}\right] \\[-4pt] \verteq \\[-4pt]
        \left[ub:N_1(a_2)\sigma(ub):\cdots:N_{k-1}(a_2)\sigma^{k-1}(ub)\right]
      \end{gather*} which yields the desired implication.

      (\ref{item:same_sets}) $\Rightarrow$ (\ref{item:intersection}) is obvious.
      
      (\ref{item:intersection}) $\Rightarrow$ (\ref{item:norms}). Let
      $a_1,a_2\in\Fm$ and $b_1, b_2\in\Fm^*$ such that
      \[\left[b_1:N_1(a_1)\sigma(b_1):\cdots:N_{k-1}(a_1)\sigma^{k-1}(b_1)\right]=\left[b_2:N_1(a_2)\sigma(b_2):\cdots:N_{k-1}(a_2)\sigma^{k-1}(b_2)\right].\]
      In particular,
      $N_1(a_1)\sigma(b_1)b_1^{-1}=N_1(a_2)\sigma(b_2)b_2^{-1}$, that
      is
      $a_1\sigma(b_1)b_1^{-1}=a_2\sigma(b_2)b_2^{-1}$. Hence,
      \[a_1=a_2\frac{\sigma(b_2)\sigma(b_1)^{-1}}{b_2b_1^{-1}}=a_2\frac{\sigma\left(b_2b_1^{-1}\right)}{b_2b_1^{-1
          }},\] and we conclude that $N_m(a_1)=N_m(a_2)$ using Hilbert 90 theorem for finite fields the other way around.
\end{proof}

For the sake of completeness, we restate and prove the following result, originally appearing as \cite[Proposition 3.3]{donati2018generalization} and \cite[Remark 4.5]{santonastaso2023subspace}, in the notation of the present paper.
\begin{lemma} \label{lemma: LRS partition normal curve}
    Let $\mathcal{N}$ be a set of $q$ elements in $\Fm$ with pairwise distinct norms. Then, \[X_{q^j}(\Fm)=\bigsqcup_{a\in \mathcal{N}}\mV_a(\Fm)\sqcup \{P_\infty\}.\]
\end{lemma}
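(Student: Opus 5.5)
The plan is to show that every $\Fm$-rational point of $X_{q^j}$ other than $P_\infty$ lies in exactly one $\mV_a(\Fm)$ with $a\in\mN$, and that no $\mV_a(\Fm)$ meets $P_\infty$.

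\emph{Norm values of $\mN$.} The norm $N_m$ takes values in $\Fq$, so there are exactly $q$ possible values. Because the $q$ elements of $\mN$ have pairwise distinct norms, $N_m$ restricted to $\mN$ is a bijection onto $\Fq$. In particular exactly one element of $\mN$ is $0$, since $N_m(a)=0$ forces $a=0$.

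\emph{Inclusion $\supseteq$ and disjointness.} First note that the first coordinate of a point of $\mV_a(\Fm)$ is $b\neq0$, so $P_\infty\notin\mV_a(\Fm)$. Next, for $b\in\Fm^*$, put $c\coloneqq a\sigma(b)/b$. Then $N_i(c)=N_i(a)\sigma^i(b)/b$, by the same telescoping used in the proof of Lemma~\ref{lemma: different norms different sets}. Hence
\[
[b:N_1(a)\sigma(b):\cdots]=[N_0(c):N_1(c):\cdots:N_{k-1}(c)]\in X_{q^j}(\Fm),
\]
which gives $\mV_a(\Fm)\subseteq X_{q^j}(\Fm)$. Disjointness of $\mV_a(\Fm)$ and $\mV_{a'}(\Fm)$ for distinct $a,a'\in\mN$ follows from (\ref{item:intersection})$\Rightarrow$(\ref{item:norms}) of Lemma~\ref{lemma: different norms different sets}, since their norms differ.

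\emph{Inclusion $\subseteq$.} Take a point $[N_0(c):\cdots:N_{k-1}(c)]$ with $c\in\Fm$.
\begin{itemize}
\item If $c=0$, the point is $[1:0:\cdots:0]$. This lies in $\mV_0(\Fm)$ by taking $b=1$, using $N_i(0)=0$ for $i\geq1$.
\item If $c\neq0$, choose $a\in\mN$ with $N_m(a)=N_m(c)$. By Hilbert 90 there is $u\in\Fm^*$ with $c=a\sigma(u)/u$. The computation above with $b=u$ then shows the point lies in $\mV_a(\Fm)$.
\end{itemize}
Alternatively, a cardinality check closes the argument. Each $\mV_a(\Fm)$ with $a\neq0$ has $(q^m-1)/(q-1)$ points, since $b$ and $\lambda b$ give the same point exactly when $\lambda\in\Fq^*$ (this uses $k\geq2$ and $\sigma$ fixing only $\Fq$). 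Also $|\mV_0(\Fm)|=1$. The total is $(q-1)\frac{q^m-1}{q-1}+1+1=q^m+1=|X_{q^j}(\Fm)|$.

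\emph{Main obstacle.} The only subtle point is the degenerate element $a=0$. There $\mV_0(\Fm)$ collapses to the single point $[1:0:\cdots:0]$, so I would treat it separately as above. Everything else is a direct application of Hilbert 90 and Lemma~\ref{lemma: different norms different sets}.
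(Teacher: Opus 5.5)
Your proof is correct and essentially follows the paper's argument: disjointness via Lemma~\ref{lemma: different norms different sets}, the inclusion $X_{q^j}(\Fm)\subseteq\bigsqcup_{a\in\mN}\mV_a(\Fm)\sqcup\{P_\infty\}$ (which the paper calls immediate and you justify explicitly with Hilbert 90), and the count $(q-1)\frac{q^m-1}{q-1}+1+1=q^m+1=\lvert X_{q^j}(\Fm)\rvert$. The only difference is that you also prove the reverse inclusion directly via the substitution $c=a\sigma(b)/b$, whereas the paper obtains it from the cardinality comparison alone, which you also give as an alternative.
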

\begin{proof}
  Let $\mV_\mN$ denote the set $\bigcup_{a\in \mathcal{N}}\mV_a(\Fm)$. Note that by definition $\lvert \mV_a(\Fm)\rvert=\frac{q^m-1}{q-1}$, if $a\in\mathcal{N}\setminus\{0\}$, while $\lvert \mV_0(\Fm)\rvert=1$. From Lemma~\ref{lemma: different norms different sets} the sets $\mV_a(\Fm)$ are pairwise disjoint for $a\in\mN$, and \[\lvert\mV_\mN\rvert=(q-1)\frac{q^m-1}{q-1}+1=q^m.\]
  At the same time, by the definition of $X_{q^j}$ it immediately follows that $X_{q^j}(\Fm)\subseteq \mV_\mN\sqcup\{P_\infty\}$. Therefore, since $|\,X_{q^j}(\Fm)\,|=q^m+1$, we conclude with a counting argument.
\end{proof}

The $\Fm$-rational points of the curve $X_{q^j}$ are therefore partitioned into linear sets associated with Gabidulin codes (together with two distinguished points). In the next proposition, we establish a stronger result: if the linear set arising from an MRD code is contained in the curve, then it must necessarily coincide with one of the components of this partition.
\begin{proposition}\label{lemma: onlygab}
    Let $k\geq3$ and let $\mC\subseteq\Fm^m$ be a $k$-dimensional MRD code whose associated set of points $\mL_{\mU_\mC}$ is contained in $X_q$. Then, there exists $a\in\Fm^*$ such that $\mL_{\mU_\mC}=\mathcal{V}_a(\Fm)$ and $\mC$ is a Gabidulin code.
\end{proposition}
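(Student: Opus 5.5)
We work with $j=1$ and $X_q$ as in the statement. Let $\mU=\mU_\mC\subseteq\Fm^k$ be the $m$-dimensional $\Fq$-subspace spanned by the columns of a generator matrix of $\mC$. Since $\mC$ is MRD with $n=m$, the linear set $\mL_{\mU}$ is scattered, so it has exactly $\frac{q^m-1}{q-1}$ points. By Lemma~\ref{lemma: LRS partition normal curve}, for a fixed set $\mN$ of $q$ elements with pairwise distinct norms,
\[X_q(\Fm)=\bigsqcup_{a\in\mN}\mV_a(\Fm)\sqcup\{P_\infty\}.\]
The plan is to show that $\mU$ is contained in the $\Fq$-span of the vectors $(b,N_1(a)\sigma(b),\dots,N_{k-1}(a)\sigma^{k-1}(b))$ for a single $a\neq 0$. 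Once this holds, dimension counting gives $\mL_\mU=\mV_a(\Fm)$, and $\mC$ is equivalent to the code generated by $D M$, which is a Gabidulin code by Remark~\ref{remark: blocks are gen gab}.

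\textbf{Step 1: the linear structure.} Pick $u,v\in\mU$ linearly independent over $\Fq$. Every $\Fq$-combination $\lambda u+\mu v$ must project to a point of $X_q$.
\begin{itemize}
\item First dispose of the special points. If $u$ represents $P_\infty$ or $\mV_0=\{[1:0:\cdots:0]\}$, then adding $v$ and requiring all combinations to remain on the curve forces contradictions through the minors $P_{uv}$ of Proposition~\ref{prop: determinantal variety}, since $k\ge3$.
\item So generically, write $u=c(1,x,N_2(x),\dots)$ and $v=d(1,y,N_2(y),\dots)$ with $x,y\neq0$.
\item Use the first three coordinates and the equation $x_1^q x_3=x_2^{q+1}$. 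Substituting $u+\lambda v$, $\lambda\in\Fq$, gives
\[(c+\lambda d)^q\bigl(cN_2(x)+\lambda dN_2(y)\bigr)=(cx+\lambda dy)^{q+1}.\]
\item Expand in $\lambda$. Because $\lambda^q=\lambda$ and the identity must hold for all $\lambda\in\Fq$ as a polynomial of degree $\le 2$ (for $q>2$; handle $q=2$ separately), the cross term yields
\[c^q d\,y^{q+1}+d^q c\,x^{q+1}=c^q d\,x^q y+d^q c\,x y^q,\]
which factors as
\[\bigl(c^{q-1}y^q-d^{q-1}x^q\bigr)(y-x)\cdot(\text{unit})=0\]
or a similar product.
\item Conclude that either $x=y$ (same point, excluded by scatteredness) or $c^{q-1}y=d^{q-1}x$ after taking $q$-th roots. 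Hence $y/x=\sigma(d/c)/(d/c)$, i.e.\ $x$ and $y$ differ by a Hilbert-90 factor. Equivalently, $cx=\sigma(c)\,a$ and $dy=\sigma(d)\,a$ for a common $a$, which by Lemma~\ref{lemma: different norms different sets} means $N_m(x)=N_m(y)$.
\end{itemize}

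\textbf{Step 2: concluding.} This shows that all points of $\mL_\mU$ lie in a single $\mV_a(\Fm)$. Rescaling, we may write every $w\in\mU$ as $w=(b,a\sigma(b),\dots)$ for $b$ ranging over an $\Fq$-subspace $B\subseteq \Fm$. Moreover, $\mU$ being closed under addition forces $B$ to be closed under addition, with the same $a$; this is where the pairwise computation must be applied to $u$, $v$ and $u+v$ simultaneously. Since $\dim_{\Fq}\mU=m$, we get $B=\Fm$, hence $\mL_\mU=\mV_a(\Fm)$. The generator matrix is then $D_aM$ with $M$ a Moore matrix, so $\mC$ is Gabidulin.

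\textbf{Main obstacle.} The hard part is Step 1: extracting the common parameter $a$ cleanly from the degree-$(q+1)$ equations, including the small field case $q=2$ and the boundary points $P_\infty$ and $[1:0:\cdots:0]$. For these, I would first try using more minors, such as $x_2x_3^q=x_1^qx_4$, or all of $\lambda\in\Fq$ simultaneously, to kill the degenerate cases.
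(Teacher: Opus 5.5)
Your argument is correct, and it takes a different route from the paper's. One computational slip needs fixing. For $\lambda\in\Fq$, the constant and $\lambda^2$ terms of your identity cancel identically, so $\lambda=1$ already suffices for every $q$; no separate case $q=2$ is needed. The cross-term identity factors as
\[
cd\,(y-x)^q\bigl(c^{q-1}y-d^{q-1}x\bigr)=0,
\]
not as you wrote it. Since $x\neq y$ by scatteredness, this gives $c^{q-1}y=d^{q-1}x$ directly, with no $q$-th roots.

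Equivalently, $w_2/w_1^{q}$ takes a single value $a\neq 0$ on all of $\mU\setminus\{0\}$. So every $w\in\mU$ equals $(w_1,N_1(a)\sigma(w_1),\dots,N_{k-1}(a)\sigma^{k-1}(w_1))$. Then $w\mapsto w_1$ is an injective $\Fq$-linear map $\mU\to\Fm$, hence bijective, and your extra closure argument for $B$ is unnecessary. Your exclusion of $P_\infty$ and $P_0$ works as you indicate; $P_0$ is even covered by the same identity with $x=0$.

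The paper never works with individual vectors of $\mU$. It imports from \cite{astore2025geometric} that an MRD code satisfies $\dim I(\mL_\mU)_{q+1}\le\binom{k-1}{2}$, with equality exactly for Gabidulin codes. It gets equality from $I(X_q)_{q+1}\subseteq I(\mL_\mU)_{q+1}$ and Proposition~\ref{prop: determinantal variety}. It then finds which $\mV_a(\Fm)$ occurs by three steps:
\begin{enumerate}
\item showing that a projectivity $A$ carrying $\mV_1(\Fm)$ to $\mL_\mU$ stabilizes $X_q$;
\item lifting $A$ to $\mathbb{P}^1$ through the normalization;
\item using that $P_\infty$ is the unique singular point (a forward reference) to force the lift to be diagonal.
\end{enumerate}
That route showcases the degree-$(q+1)$ forms as the invariant the rest of the paper is built on. The price is an external characterization and fairly heavy geometric machinery.

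Your route is elementary and self-contained. It needs only the single minor $x_1^qx_3-x_2^{q+1}$ together with the parametrization of $X_q$, and it reads off $a$ explicitly. It shows that the given generator matrix is literally $D_aM(\bbf)$, with $\bbf$ its first row. It also uses the MRD hypothesis only through the scatteredness of $\mL_\mU$.
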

\begin{proof}
  By \cite[Corollary 2.7]{astore2025geometric}, any $k$-dimensional MRD code $\mC$ satisfies $\dim{(I(\mL_{\mU_\mC})}_{q+1})\leq\binom{k-1}{2}$. Moreover, $\mC$ is a Gabidulin code if and only if $\dim{(I(\mL_{\mU_\mC})}_{q+1})=\binom{k-1}{2}$. Let $\mC\subseteq\Fm^m$ be an MRD code whose associated set of points $\mL_{\mU_\mC}$ is contained in $X_q$. Then, ${I(X_q)}_{q+1}\subseteq {I(\mL_{\mU_\mC})}_{q+1}$, which implies $\dim({I(\mL_{\mU_\mC})}_{q+1})\geq\binom{k-1}{2}$ by Proposition~\ref{prop: determinantal variety}. Therefore, we conclude that $\mC$ is a Gabidulin code and ${I(\mL_{\mU_\mC})}_{q+1}={I(X_q)}_{q+1}$.

  In particular, there exists a matrix $A\in\mathrm{PGL}_k(\Fm)$ that maps $\mathcal{V}_1(\Fm)$, which is the set of points associated with a standard Gabidulin code over $\Fm^m$, to $\mL_{\mU_\mC}$. By the argument above, we know that ${I(AX_q)}_{q+1}\subseteq {I(\mL_{\mU_\mC})}_{q+1}={I(X_q)}_{q+1}$, and since the Hilbert sequence is invariant under change of coordinates, we obtain ${I(AX_q)}_{q+1}={I(X_q)}_{q+1}$ since both spaces have the same dimension. Since both ideals are generated in degree $q+1$, we conclude that $I(AX_q)=I(X_q)$, and therefore $AX_q=X_q$. Hence, the matrix $A$ induces an automorphism of $X_q$. Since $\mathbb{P}^1$ is the normalization of $X_q$ (see \cite[Section~2.5.2]{S13} for a definition), from \cite[Theorem~2.21]{S13} and its subsequent corollary, the map $A$ canonically lifts to an automorphism of $\mathbb{P}^1$, \emph{i.e.}, there exists an automorphism $\tilde A:\mathbb{P}^1\longrightarrow\mathbb{P}^1$ such that the following diagram commutes 
    \begin{equation}\label{eq:diagram}
    \begin{tikzcd}
    \mathbb{P}^1 \arrow[r, "\tilde{A}"] \arrow[d, "\varphi"'] & \mathbb{P}^1 \arrow[d, "\varphi"] \\
    X_q \arrow[r, "A"']                                     & X_q
  \end{tikzcd}\ , \qquad \text{where}\ \varphi : \mathbb{P}^1 \rightarrow
  X_q\  \text{is the map \eqref{eq:parametrization_rat_norm_curve}}.
\end{equation}
Being an automorphism of $\mathbb{P}^1$, such map can therefore be represented by a matrix $\tilde{A}={(\tilde a_{ij})}\in \mathrm{PGL}_2(\Fm)$ as $[x:y] \mapsto [x:y]\cdot\tilde A$. Since the only singular point of $X_q$ is $P_{\infty}=[0:\dots:0:1]$ (a fact we prove in Proposition~\ref{prop:hilbpolyofS} to avoid a technical detour here), it follows that $A$ must fix $P_{\infty}$. Moreover, $P_{\infty} = \varphi([1:0])$, and hence $[1:0] = [1:0]\cdot\tilde A$, which yields $\tilde a_{12}=0$.

From the commutativity of the diagram \eqref{eq:diagram}, using the notation of \eqref{eq:parametrization_rat_norm_curve}, we have that for any $[u:v] \in \mathbb{P}^1(\overline{\F}_{q^m})$,
\begin{equation}\label{eq:polynomial_identity}
  [u^{\alpha_1}v^{\beta_1} : \cdots : u^{\alpha_k}v^{\beta_k}] \cdot A = 
  [(\tilde{a}_{11}u+\tilde{a}_{21}v)^{\alpha_1}(\tilde{a}_{22}v)^{\beta_1} : \cdots : (\tilde{a}_{11}u+\tilde{a}_{21}v)^{\alpha_k}(\tilde{a}_{22}v)^{\beta_k}].
\end{equation}
Therefore, the linear span of the monomials of the left-hand side must coincide with that of the polynomials of the right-hand side. In particular, the former contains  $(\tilde a_{11}u+\tilde a_{12}v)^{\alpha_k}$. If $\tilde a_{12}\neq 0$, this would determine monomials that do not appear in the linear span of the monomials of the left-hand side of (\ref{eq:polynomial_identity}).
Hence, we conclude that $\tilde A \in \mathrm{PGL}_2(\Fm)$ can be represented as
     \begin{equation*}
        \tilde A=\begin{pmatrix}
            a & 0 \\ 0 & 1
        \end{pmatrix},
    \end{equation*}
    for some $a\in\Fm^*$. This shows that $A$ sends $\mathcal{V}_1(\Fm)$ onto $\mathcal{V}_a(\Fm)$, which henceforth equals $\mL_{\mU_\mC}$.
\end{proof}
\begin{remark}
    We point out that the assumption $k\geq3$ in the previous lemma is a necessary condition. Indeed, for $k=2$ the curve $X_q$ is just $\mathbb{P}^1(\overline{\F}_q)$. This implies that for every $2$-dimensional MRD code, independently on the choice of the generator matrix, the associated set of points $\mathcal{L}_{\mathcal{U}_{\mC}}$ is contained in $X_q$.
\end{remark}
We conclude this section by showing that the associated set of projective points of a doubly-extended linearized Reed--Solomon code lies on a $q^j$-rational normal curve.
\begin{theorem} \label{theorem: linear set of lrs}
    Let $\mC$ be a (doubly-extended) linearized Reed--Solomon code over $\Fm^{\mathbf{n}}$, defined as an evaluation code over the skew polynomial ring $\ore$ with $\sigma: a\longmapsto a^{q^j}$. Then, there exists a generator matrix $G=(G_1|\cdots|G_t)\in\Fm^{k\times \mathbf{n}}$ of $\mC$ such that for every $i\in\{1,\dots,t\}$, either $\mL_{\mU_i}=\{P_{\infty}\}$, where $\mU_i\coloneqq\colspan_{\Fq}(G_i)$, or there exists $a_i\in\Fm$ such that $\mL_{\mU_i}\subseteq\mV_{a_i}(\Fm)$ and $N_m(a_i)\neq N_m(a_j)$ for $i\neq j$. As a consequence, $$\mL_G=\bigsqcup_{i=1}^t\mL_{\mU_i}\subseteq X_{q^j}(\Fm).$$ In particular, if $\mC$ is a doubly-extended linearized Reed--Solomon code with $q+1$ blocks and parameters $n_1=\dots=n_{q-1}=m$, then there exists a generator matrix $G$ of $\mC$ for which $\mL_G=X_{q^j}(\Fm)$ and there exist elements $a_1,\dots,a_{q-1}\in\Fm^*$ with pairwise distinct norms such that $\mL_{\mU_i}=\mV_{a_i}(\Fm)$, for $i\in\{1,\dots,q-1\}$.
\end{theorem}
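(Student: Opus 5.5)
The plan is to take as $G$ the generator matrix coming directly from the evaluation map. For an ordinary block this is $G_i = D_i M_i$ as in \eqref{eq: gen matrix LRS}. For the two extra blocks of a doubly-extended code, evaluating the monomial basis $1,x,\dots,x^{k-1}$ gives the columns $\ev_{0,b_0}(x^l)_{l}$ and $(0,\dots,0,b_\infty)^\top$.

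The column computation proceeds block by block.

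For an ordinary block $i$, a vector of $\mU_i=\colspan_{\Fq}(G_i)$ has the form $\sum_l \lambda_l \mathbf{g}_{i,l}$ with $\lambda_l\in\Fq$. Since $\sigma$ and $N_r(a_i)$ are $\Fq$-linear in $b$, this vector equals
\[\bigl(b, N_1(a_i)\sigma(b),\dots,N_{k-1}(a_i)\sigma^{k-1}(b)\bigr)^\top, \qquad b=\sum_l\lambda_l b_{i,l}\in\Fm.\]
Because the $b_{i,l}$ are $\Fq$-linearly independent, $b\neq 0$ whenever the vector is nonzero. Hence $\mL_{\mU_i}\subseteq\mV_{a_i}(\Fm)$, and if $n_i=m$ the $b$'s range over all of $\Fm^*$, giving equality.

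For the block at $0$, $N_r(0)=0$ for $r\ge1$, so the column is $(b_0,0,\dots,0)^\top$. Scaling by $\Fq$ gives $\mL=\{[1:0:\cdots:0]\}=\mV_0(\Fm)$, so we take $a_{t-1}=0$.

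For the block at $\infty$ we get $\mL=\{P_\infty\}$.

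The evaluation pair condition says the norms $N_m(a_i)$ of the ordinary blocks are nonzero and pairwise distinct. Since $N_m(0)=0$, the norm of the added $0$-block is also distinct from all of them. By Lemma~\ref{lemma: different norms different sets}, the sets $\mV_{a_i}(\Fm)$ for distinct norms are pairwise disjoint. Moreover, $P_\infty$ is not in any $\mV_a(\Fm)$, since the first coordinate of every point there is $b\neq0$. Together these give the disjointness of the union. Each $\mV_a(\Fm)$ lies in $X_{q^j}(\Fm)$, because
\[[b:N_1(a)\sigma(b):\cdots]=[1:N_1(ab')\cdots]\]
after normalizing by $b$; more precisely, one takes $a'=a\sigma(b)/b$, which has the same norm, and applies the identity $N_r(a)\sigma^r(b)/b=N_r(a\sigma(b)/b)$ already used in the proof of Lemma~\ref{lemma: different norms different sets}. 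This yields $\mL_G\subseteq X_{q^j}(\Fm)$.

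For the final claim, consider $q+1$ blocks with $n_1=\dots=n_{q-1}=m$. The first $q-1$ blocks give $\mL_{\mU_i}=\mV_{a_i}(\Fm)$ with $q-1$ distinct nonzero norms. Adding $a=0$ produces a set $\mN$ of $q$ elements with pairwise distinct norms, so Lemma~\ref{lemma: LRS partition normal curve} gives
\[\mL_G=\bigsqcup_{a\in\mN}\mV_a(\Fm)\sqcup\{P_\infty\}=X_{q^j}(\Fm).\]

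The main point needing care is the $\Fq$-linearity identification of $\mU_i$ with $\{(N_r(a_i)\sigma^r(b))_r : b\in\langle b_{i,1},\dots,b_{i,n_i}\rangle_{\Fq}\}$, together with the handling of the two degenerate blocks. Both are routine once written out.
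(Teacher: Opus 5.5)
Your proof is correct and follows the paper's argument essentially step by step. You take the evaluation generator matrix \eqref{eq: gen matrix LRS}, read off $\mL_{\mU_i}\subseteq\mV_{a_i}(\Fm)$ (with equality when $n_i=m$) together with the two degenerate blocks $\{[1:0:\cdots:0]\}$ and $\{P_\infty\}$, get disjointness from Lemma~\ref{lemma: different norms different sets}, and settle the $q+1$-block case with Lemma~\ref{lemma: LRS partition normal curve}, exactly as the paper does. Your added details are correct and are left implicit in the paper: identifying the $0$-block with $\mV_0(\Fm)$ (so $a_{t-1}=0$ has a norm distinct from the others), noting $P_\infty\notin\mV_a(\Fm)$, and checking $\mV_a(\Fm)\subseteq X_{q^j}(\Fm)$ directly via $a'=a\sigma(b)/b$.
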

\begin{proof}
    Let $\mC$ be a linearized Reed--Solomon code over $\Fm^{\mathbf{n}}$ and consider the generator matrix described in~\eqref{eq: gen matrix LRS} for some evaluation pair $(\abf,\bbf)\in\Fm^t\times\Fm^\nbf$. Recalling that the entries of $\abf$ have pairwise distinct norms over $\Fq$, it immediately follows that $$\mL_{\mU_i}\subseteq\mV_{a_i}(\Fm),$$ for all $i\in\{1,\dots,t\}$, where the equality holds if $n_i=m$. On the other hand, if $\mC$ is doubly-extended, then the previous result still holds for all $i\in\{1,\dots,t-2\}$, while $\mL_{\mU_{t-1}}=\{[1:0:\cdots:0]\}$ and $\mL_{\mU_t}=\{P_{\infty}\}$. In both cases, Lemma~\ref{lemma: different norms different sets} implies $\mL_{\mU_i}\cap\mL_{\mU_j}=\emptyset$ if $i,j\in\{1,\dots,t\}$ and $i\neq j$. \\ Finally, if $\mC$ is a doubly-extended linearized Reed--Solomon code with $q+1$ blocks and parameters $n_1=\dots=n_{q-1}=m$, then $\mL_{\mU_i}=\mV_{a_i}$ for all $i\in\{1,\dots,q-1\}$. Therefore, by Lemma~\ref{lemma: LRS partition normal curve} we get $\lvert \mL_G\rvert=q^m+1$ and hence the desired equality.
\end{proof}

\begin{remark}
    In~\cite[Remark 4.8]{NERI2023105703}, it was observed that the linear set associated with a doubly-extended linearized Reed--Solomon code of dimension 2 is the projective line. Theorem~\ref{theorem: linear set of lrs} extends this result to arbitrary dimensions. In particular, for $k=2$, the $q^j$-rational normal curve coincides with the projective line.
\end{remark}

\subsection{Characterization of Linearized Reed--Solomon Codes}

In this subsection, we provide a characterization of linearized Reed--Solomon codes of dimension $k\leq m$, building on the geometric interpretation of this class of codes. This result yields an effective method to determine whether a given generator matrix of a sum-rank metric code defines a linearized Reed--Solomon code, and hence a practical way to differentiate them from non-structured codes.

\begin{theorem} \label{theorem: characterization LRS}
    Let $k\in\{3,\dots,m\}$, $t\leq q-1$ and assume that $n_i=m$ for all $i\in\{1,\dots,t\}$. Let $G=(G_1|\cdots|G_t)$ be a generator matrix of the $[\nbf,k]_{q^m/q}$ code $\mC$. Let $\mC_i=\rowspan_{\Fm}(G_i)$ and $\mU_i=\colspan_{\Fq}(G_i)$. Then, $\mC$ is a linearized Reed--Solomon code, defined as an evaluation code over the skew polynomial ring $\ore$ with $\sigma: a\longmapsto a^q$, if and only if all the following hold for all $i,j\in\{1,\dots,t\}$ with $i\neq j$:
    \begin{enumerate}
        \item $\mC_i$ is a $k$-dimensional Gabidulin code over $\Fm^m$;
        \item $\mL_{\mU_i}\,\cap\,\mL_{\mU_j}=\emptyset$;
        \item ${I(\mL_{\mU_i})}_{q+1}={I(\mL_{\mU_j})}_{q+1}$.
    \end{enumerate}
\end{theorem}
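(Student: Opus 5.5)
The proof splits into the two directions. The forward direction is read off from the construction; the converse uses the fact that the degree-$(q+1)$ forms vanishing on a Gabidulin linear set cut out a $q$-rational normal curve, and then invokes Proposition~\ref{lemma: onlygab}.

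\emph{Necessity.} Suppose $\mC=\LRS_k(\abf,\bbf)$ with $\sigma(a)=a^q$. Take the generator matrix \eqref{eq: gen matrix LRS}, so $G_i=D_iM_i$.
\begin{itemize}
\item By Remark~\ref{remark: blocks are gen gab}, each $\mC_i=\rowspan_{\Fm}(G_i)$ is a Gabidulin code $\Gab_k(\bbf_i)$, since $D_i$ only changes the basis. This gives (1).
\item Since $n_i=m$, Theorem~\ref{theorem: linear set of lrs} gives $\mL_{\mU_i}=\mV_{a_i}(\Fm)$, and the norms $N_m(a_i)$ are pairwise distinct. Lemma~\ref{lemma: different norms different sets} then gives (2).
\end{itemize}
Before checking (3), note that conditions (1)–(3) are invariant under a change of generator matrix $G\mapsto AG$ with $A\in\GL_k(\Fm)$, because each ideal is simply transformed by the same $A$. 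So we may work with this specific $G$. Each $\mV_{a_i}(\Fm)$ lies on $X_q$, so $I(X_q)_{q+1}\subseteq I(\mL_{\mU_i})_{q+1}$. By Proposition~\ref{prop: determinantal variety}, the left side has dimension $\binom{k-1}{2}$: the minors $P_{uv}$ with $2\le u\le v\le k-1$ number exactly $\binom{k-1}{2}$. By \cite[Corollary 2.7]{astore2025geometric}, a Gabidulin code attains exactly $\dim I(\mL_{\mU_i})_{q+1}=\binom{k-1}{2}$. Hence $I(\mL_{\mU_i})_{q+1}=I(X_q)_{q+1}$ for every $i$, which gives (3).

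\emph{Sufficiency.} Assume (1)–(3).
\begin{enumerate}
\item By (1), $\mC_1$ is a Gabidulin code. Since $n_1=m$, after replacing $G$ by $AG$ we may assume $G_1$ is a matrix $M_1$ as in \eqref{eq: gen matrix LRS} with $\Fq$-basis $b_{1,1},\dots,b_{1,m}$. Then $\mL_{\mU_1}=\mV_1(\Fm)$.
\item The argument used for necessity shows $I(\mL_{\mU_1})_{q+1}=I(X_q)_{q+1}$. By (3), $I(\mL_{\mU_i})_{q+1}=I(X_q)_{q+1}$ for every $i$.
\item By Proposition~\ref{prop: determinantal variety}, $I(X_q)$ is generated in degree $q+1$. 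Hence every $\mL_{\mU_i}$ lies in the common zero set of $I(X_q)_{q+1}$, which is $X_q$.
\item Apply Proposition~\ref{lemma: onlygab} to $\mC_i$; this is valid since $k\ge3$ and $\mC_i$ is MRD. It gives $a_i\in\Fm^*$ with $\mL_{\mU_i}=\mV_{a_i}(\Fm)$, and we may take $a_1=1$.
\item By (2) and Lemma~\ref{lemma: different norms different sets}, the norms $N_m(a_i)$ are pairwise distinct.
\end{enumerate}

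It remains to recover the LRS generator matrix block by block, which I expect to be the main obstacle. Knowing the set $\mL_{\mU_i}=\mV_{a_i}(\Fm)$ tells us the projective points, but the columns of $G_i$ are only determined up to scalars, not up to $\Fq$-structure. I would argue as follows.
\begin{itemize}
\item The set of vectors $\{(b,N_1(a_i)\sigma(b),\dots,N_{k-1}(a_i)\sigma^{k-1}(b)) : b\in\Fm\}$ is an $\Fq$-subspace $W_i$ of dimension $m$, with $\mL_{W_i}=\mV_{a_i}(\Fm)$.
\item The space $\mU_i$ also has dimension $m$ and $\mL_{\mU_i}=\mL_{W_i}$.
\item Since $k\le m$, the linear set $\mL_{W_i}$ is scattered (MRD/Gabidulin), so it has exactly $(q^m-1)/(q-1)$ points. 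This forces $\mU_i=\lambda_iW_i$ for some $\lambda_i\in\Fm^*$. This uses the known fact that a maximum scattered linear set of rank $m$ determined by a Gabidulin code has a unique defining subspace up to $\Fm^*$-scalars, i.e.\ the linear set is \emph{simple}; I would cite this or prove it via the $q+1$ form argument.
\item The scalar $\lambda_i$ is absorbed by replacing $b$ with $\lambda_i$-related elements. Concretely, $\lambda_i(b,N_1(a_i)\sigma(b),\dots)$ lies in $\mV_{a_i'}$-type vectors with $a_i'$ of the same norm, via the Hilbert~90 computation in Lemma~\ref{lemma: different norms different sets}.
\end{itemize}
After these adjustments, $G_i=D_iM_iP_i$ with $P_i\in\GL_m(\Fq)$. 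So $G$ generates, up to the sum-rank isometries of Theorem~\ref{theorem:characterizationisometries}, the code $\LRS_k(\abf,\bbf)$, which finishes the converse.
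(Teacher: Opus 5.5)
Your proof is correct and follows the paper's own argument. Necessity comes from Remark~\ref{remark: blocks are gen gab}, Theorem~\ref{theorem: linear set of lrs} and comparing the $\binom{k-1}{2}$ minors with the Gabidulin value of $\dim I(\mL_{\mU_i})_{q+1}$. Sufficiency comes from the normalization $\mL_{\mU_1}=\mV_1(\Fm)$, the containment $\mL_{\mU_i}\subseteq X_q$ (since $I(X_q)$ is generated in degree $q+1$), Proposition~\ref{lemma: onlygab} and Lemma~\ref{lemma: different norms different sets}.

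Your extra last step, recovering the subspace $\mU_i$ and not just the point set $\mL_{\mU_i}$, is one the paper passes over. It closes without invoking simplicity of linear sets:
\begin{itemize}
\item Put $W_a\coloneqq\{(b,N_1(a)\sigma(b),\dots,N_{k-1}(a)\sigma^{k-1}(b)) : b\in\Fm\}$. Condition (1) already gives $\mU_i=B_iW_1$ with $B_i\in\GL_k(\Fm)$.
\item The proof of Proposition~\ref{lemma: onlygab}, read at matrix level, forces $B_i=\lambda_i\,\mathrm{diag}(N_0(a_i),\dots,N_{k-1}(a_i))$.
\item Hence $\mU_i=\lambda_iW_{a_i}=W_{a_i\lambda_i/\sigma(\lambda_i)}$, and the norm is unchanged.
\item Since $M_iP_i$ is again a Moore matrix, the $P_i$ are absorbed into the evaluation vectors, so $\mC$ is literally an LRS code rather than one only up to isometry.
\end{itemize}
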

\begin{proof}
 First, assume that $\mC$ is an $[\nbf,k]_{q^m/q}$ linearized Reed--Solomon code. As shown in Remark~\ref{remark: blocks are gen gab}, the first condition is a direct consequence of the definition of linearized Reed--Solomon codes, while the second relation has already been proved in Theorem~\ref{theorem: linear set of lrs}. For the last point, note that changing the basis of the subcode $\mC_i$ does not affect the geometric properties of its associated object $\mL_{\mU_i}$ for all $i\in\{1,\dots,t\}$. Hence, we know from \cite[Corollary 3.6]{astore2025geometric} that \[\dim I(\mL_{\mU_i})_{q+1}=\binom{k-1}{2}.\] Moreover, since $\mL_{\mU_i}\subseteq\mL_G$, it also holds that $I(\mL_G)_{q+1}\subseteq I(\mL_{\mU_i})_{q+1}$. Since $I(\mL_G)$ contains all the $\binom{k-1}{2}$ linearly-independent polynomials $\{x_{u}x_{v}^q - x_{u-1}^q x_{v+1} \st 2 \leq u \leq v \leq k-1\}$ determined in Proposition~\ref{prop: determinantal variety}, we conclude by a counting argument that $I(\mL_G)_{q+1}=I(\mL_{\mU_i})_{q+1}$. \\ Let us now assume that all the above conditions hold. Up to a change of coordinate, we set $\mathcal{L}_{\mathcal{U}_1}=\mathcal{V}_1(\Fm)$. The third condition states that $I(\mL_{\mU_G})_{q+1}=I(\mL_{\mU_i})_{q+1}$. In particular, this implies that $\mC$ can be represented as a concatenation of $t$ Gabidulin codes, whose associated sets of projective points $\mL_{\mU_i}$ lie on the same rational normal curve $X_q(\Fm)$. Then, by Proposition~\ref{lemma: onlygab}, the first condition implies that $\mL_{\mU_i}=\mV_{a_i}(\Fm)$ for some $a_i\in\Fm^*$ and for all $i\in\{1,\dots,t\}$. To conclude it is enough to note that the second condition ensures that the norms of the elements $a_i$, with $i\in\{1,\dots,t\}$, are all pairwise distinct by means of Lemma~\ref{lemma: different norms different sets}. Due to Lemma~\ref{lemma: LRS partition normal curve} and Theorem~\ref{theorem: linear set of lrs}, it then follows that \[\mL_{\mU_G}=\bigsqcup_{i=1}^t\mV_{a_i}(\Fm)=X_q(\Fm)\setminus\{P_0,P_\infty\},\] hence the corresponding code $\mC$ is a linearized Reed--Solomon code.
\end{proof}

\begin{remark} \label{remark:concatenation not LRS}
    Note that the theorem's second condition directly holds if $\mC$ is known to be MSRD, see \cite[Section 5]{NERI2023105703}. Moreover, the third condition ensures that the linear sets associated with each component of $\mC$ all lie on the same $q$-rational normal curve, instead of being arbitrarily distributed in $\mathbb{P}^{k-1}(\Fm)$. In light of Theorem~\ref{theorem: linear set of lrs}, this corresponds to excluding the possibility that $\mC$ is a simple concatenation of Gabidulin codes, with no relations of the form~\eqref{eq: gen matrix LRS} above connecting them.
    
    At the same time, the third condition can be further interpreted from a purely coding-theoretic point of view. By the theorem's proof and \eqref{eq: Fq-Hilbert seq}, it is immediate to conclude that, if $\mC$ is a linearized Reed--Solomon code whose parameters satisfy the hypothesis above, then \[\dim{\left({\mC^{\mathrm{H}}}\right)}^{(q+1)} = \binom{k+q}{q+1}-\binom{k-1}{2},\] where we recall that $\mC^{\mathrm{H}}$ denotes the associated Hamming-metric code with $\mC$.
\end{remark}

\begin{remark}
    From a computational point of view, the conditions stated in Theorem~\ref{theorem: characterization LRS} can be checked in \emph{polynomial time} with respect to the defining parameters of the code. More precisely, the first condition can be verified in polynomial time using the most suitable characterization of Gabidulin codes among those listed in \cite[Theorem 6.5]{NERI2020418}. The second condition holds if $\dim_{\Fq}(\mU_i\, \cap \, \mU_j)=0$ for all $i,j\in\{1,\dots,t\}$ with $i\neq j$. Since this is equivalent to determining whether $\dim_{\Fq}(\mU_i\, \oplus \, \mU_j)=n_i+n_j$ for the same index choices, this step requires the computation of $\binom{t}{2}$ ranks. Assuming that the first two conditions have been verified, we can adopt the following strategy for the third one. Given a generator matrix $G=(G_1|\cdots|G_t)$ of the code under analysis, there exists some matrix $A_1\in\GL_k(\Fm)$ such that $A_1G_1$ is in systematic form. Then, by \cite[Corollary 3.6]{astore2025geometric}, we know that \[I\Big(\mL_{\mU_{A_1G_i}}\Big)_{q+1}=I\Big(\mL_{\mU_{A_1G_1}}\Big)_{q+1}=\underbrace{\langle x_{u}x_{v}^q - x_{u-1}^q x_{v+1}\st 2\leq u \leq v \leq k-1 \rangle_{\Fm}}_{\coloneqq\mP_1},\] for all $i\in\{2,\dots,t\}$. At the same time, for all $i\in\{2,\dots,t\}$, there exists some matrix $A_i\in\GL_k(\Fm)$ such that $A_iA_1G_i$ is in systematic form. Hence, the $(q+1)$-th part of the vanishing ideal of $\mL_{\mU_{A_iA_1G_1}}$ is exactly $\mP_i=\mP_1$. Let $A_i=(a_{j,l})_{j,l\in\{1,\dots,k\}}$ and consider the change of variables $x_j\longmapsto\bar x_j$ such that $\bar x_j=a_{1,j}x_1+\dots+a_{k,j}x_k$ for $j\in\{1,\dots,k\}$. Then, we can derive from $\mP_1$ a new vector space $\mP_{1,i}$ of polynomials in $\bar x_1,\dots,\bar x_k$. Therefore, the third condition of Theorem~\ref{theorem: characterization LRS} holds if $\mP_i=\mP_{1,i}$ for all $i\in\{2,\dots,t\}$. Determining the matrices $A_1,\dots,A_t$ required to put the subgenerator matrices into systematic form is essentially equivalent to performing $t$ Gaussian eliminations of $k\times m$ matrices over $\Fm$. Similarly, verifying whether the equalities $\mP_i=\mP_{1,i}$ hold for all $i\in\{2,\dots,t\}$ is straightforward, since it amounts to checking whether two vector spaces, given by their bases, are equal.
\end{remark}

\section{The Hilbert Sequence of a Linearized Reed--Solomon Code} \label{The Hilbert Sequence of a Linearized Reed-Solomon Code}

The aim of this section is to advance further in the identification of invariants for linearized Reed--Solomon codes by studying the homogeneous forms that vanish on their associated geometric object. As observed in Remark~\ref{remark:concatenation not LRS}, the homogeneous forms of degree $q+1$ play a key role in the characterization of linearized Reed--Solomon codes. It is then natural to investigate whether forms of higher degrees could reveal other distinctive information about the code under analysis. In particular, in this section we study the $(q^m/q)$-Hilbert sequence of doubly-extended linearized Reed--Solomon codes with parameters $t=q+1$ and $\nbf=(m,\dots,m,1,1)$. By Proposition~\ref{prop: H seq and CM reg}, this is equivalent to studying the Hilbert function of the coordinate ring of their associated set of projective points. In the previous section, we showed that this set coincides with $X_{q^j}(\Fm)$, the set of $\Fm$-rational points of the $q^j$-rational normal curve. Now, our goal is to study its Hilbert function.

To simplify the computations, from now on we will only consider the Frobenius automorphism $\sigma$ ($j=1$), but with due care our techniques also apply to any generator of the Galois group $\Gal(\Fm/\Fq)$. Moreover, if $k=2$, we simply have that $X_q=\mathbb{P}^1(\overline{\F}_{q^m})$, while, for $k=3$, $X_q$ reduces to a plane curve. In particular, since it is generated by a single polynomial of degree $q+1$, the Hilbert function is given by the formula $$\HF(d)=\binom{2+d}{2}-\binom{2+d-(q+1)}{2},$$ where the second term is meant to be $0$ if $2+d-(q+1)<2$. Moreover, we immediately conclude that the Hilbert regularity is $q$. Therefore, from now on, we assume $k\in\{4,\dots,m\}$.

In the following, we denote by $$S\coloneqq\Fm[x_1,\dots,x_k]/I(X_q) \ \text{ and } \ S_{\Fm}\coloneqq\Fm[x_1,\dots,x_k]/I(X_q(\Fm))$$ the homogeneous coordinate rings of $X_q$ and $X_q(\Fm)$, respectively. Since $X_q(\Fm)$ is contained in $X_q$, we immediately obtain $\HF_S(d)\geq\HF_{S_{\Fm}}(d)$ for all $d\geq0$. Even though the two Hilbert functions will eventually diverge from each other, we are interested in understanding until which point their equality holds. 

In Proposition~\ref{prop: determinantal variety} we proved that $X_q$ is a toric variety (see \cite[Chapter 4]{sturmfels1996grobner}). Therefore, it follows by~\cite[Lemma 4.1]{sturmfels1996grobner} that $I(X_q)$ is spanned as an $\Fm$-vector space by the set $\{x^\mathbf{u}-x^\mathbf{v}:\mathbf{u},\mathbf{v}\in\Z_{\geq0}^k\text{ such that }A\mathbf{u}=A\mathbf{v}\}$, where
\begin{equation} \label{eq: matrix A}
    A\coloneqq\begin{pmatrix}
    \albf \\
    \bebf
    \end{pmatrix} =\begin{pmatrix}
        0 & 1 & \frac{q^2-1}{q-1} & \cdots &\frac{q^{k-1}-1}{q-1} \\
        \frac{q^{k-1}-1}{q-1} & \frac{q^{k-1}-1}{q-1}-1 & \frac{q^{k-1}-1}{q-1}-\frac{q^2-1}{q-1} & \cdots & 0
    \end{pmatrix}.\end{equation}
In the next lemma, we provide a similar characterization for $I(X_q(\Fm))$.
\begin{lemma}\label{lemma:fbasis}
    The vanishing ideal of $X_q(\Fm)$ is a binomial ideal and is generated as an infinite dimensional $\overline{\F}_{q^m}$-vector space by all homogeneous binomials $x^{\gabf_1}-x^{\gabf_2}$, with $\gabf_1,\gabf_2\in\Z_{\geq0}^k$ that vanish on $P_0=[1:0:\cdots:0]$ and $P_\infty=[0:\cdots:0:1]$, and $\langle \gabf_1,\albf\rangle=\langle \gabf_2,\albf\rangle\pmod{q^m-1}$, where $\albf=\left(0,1,\frac{q^2-1}{q-1},\dots,\frac{q^{k-1}}{q-1}\right)$.
\end{lemma}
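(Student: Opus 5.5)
The proof rests on the standard fact that the vanishing ideal of a finite set of points is spanned by its homogeneous components, together with an analysis of how monomials evaluate on $X_q(\Fm)$. First I will parametrize the points. By Lemma~\ref{lemma: LRS partition normal curve} and the description of $X_q(\Fm)$, the points are $P_\infty$ and $P_a\coloneqq[N_0(a):\cdots:N_{k-1}(a)]$ for $a\in\Fm$, with $P_0$ corresponding to $a=0$. For $a\in\Fm^*$ we have $N_{i-1}(a)=a^{\alpha_i}$, so a monomial $x^{\gabf}$ evaluated at the representative $(a^{\alpha_1},\dots,a^{\alpha_k})$ equals $a^{\langle\gabf,\albf\rangle}$. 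This value depends only on $\langle\gabf,\albf\rangle$ modulo $q^m-1$, because $a^{q^m-1}=1$.

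For the inclusion of the binomial span in $I(X_q(\Fm))$, take a homogeneous binomial $x^{\gabf_1}-x^{\gabf_2}$ vanishing at $P_0$ and $P_\infty$ and satisfying the congruence. It vanishes at every $P_a$ with $a\neq0$ by the computation above. It vanishes at $P_0$ and $P_\infty$ by hypothesis. Since the binomial is homogeneous, the choice of representative does not matter.

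For the reverse inclusion, it suffices to treat a homogeneous $f=\sum_{\gabf}c_{\gabf}x^{\gabf}\in I(X_q(\Fm))_d$, because the ideal is homogeneous. I will sort the monomials of degree $d$ into three groups:
\begin{itemize}
\item monomials nonzero at $P_0$, of which there is only $x_1^d$;
\item monomials nonzero at $P_\infty$, of which there is only $x_k^d$;
\item all remaining monomials, which vanish at both $P_0$ and $P_\infty$.
\end{itemize}
Evaluating $f$ at $P_0$ forces $c_{x_1^d}=0$, and evaluating at $P_\infty$ forces $c_{x_k^d}=0$. (When $d=0$ the constant term is killed as well.) So $f$ is a combination of monomials vanishing at $P_0$ and $P_\infty$.

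Next I group these monomials by the residue $r=\langle\gabf,\albf\rangle \bmod (q^m-1)$. Evaluating at all $P_a$ gives $\sum_r\big(\sum_{\gabf\in\text{class } r}c_{\gabf}\big)a^r=0$ for every $a\in\Fm^*$. The characters $a\mapsto a^r$ for $r\in\Z/(q^m-1)$ are linearly independent functions on $\Fm^*$, being distinct powers of the same degree below $q^m-1$ (a Vandermonde argument). Hence each class sum $\sum_{\gabf\in\text{class } r}c_{\gabf}$ vanishes.

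I then fix a representative $\gabf_0$ in each class and write the class part of $f$ as $\sum c_{\gabf}(x^{\gabf}-x^{\gabf_0})$, using that the class sum is zero. Each of these binomials is homogeneous of degree $d$, vanishes at $P_0$ and $P_\infty$, and satisfies the congruence. This gives the spanning statement, and in particular shows that the ideal is binomial.

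The argument also works over $\overline{\F}_{q^m}$, since the same evaluation argument applies to coefficients in the algebraic closure. The step needing the most care is the boundary case $x_1^d$ and $x_k^d$, together with the convention $0^0=1$ in the evaluation at $P_0$. Beyond that, the main point is the character independence, which holds because all exponents are taken modulo $q^m-1$. Incidentally, the statement's formula for $\albf$ has a typo in its last entry, which should be $\frac{q^{k-1}-1}{q-1}$.
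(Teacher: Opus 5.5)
Your proof is correct, and it reaches the lemma by a different route from the paper's.

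\textbf{The paper's route.} The paper first proves binomiality structurally. Set-theoretically $X_q(\Fm)=X_q\cap\mathbb{P}^{k-1}(\Fm)$, the ideal of $\mathbb{P}^{k-1}(\Fm)$ is generated by the Fermat binomials, $I(X_q)$ is binomial, and sums and radicals of binomial ideals are binomial (Eisenbud--Sturmfels). Knowing the ideal is spanned by its homogeneous binomials, it only has to test which $x^{\gabf_1}+bx^{\gabf_2}$ vanish on the parametrization. This forces $b=-1$ and the congruence modulo $q^m-1$.

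\textbf{Your route.} You never assume binomiality; it comes out as a consequence rather than being an input. You take an arbitrary homogeneous $f$ in the ideal and remove $x_1^d$ and $x_k^d$ by evaluating at $P_0$ and $P_\infty$. You then use the linear independence of the characters $a\mapsto a^r$ of the cyclic group $\Fm^*$ to kill each residue-class coefficient sum, and telescope each class into binomials.

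\textbf{What each buys.} Your argument is self-contained. It needs no Nullstellensatz to identify $I(X_q(\Fm))$ with $\sqrt{I(X_q)+I(\mathbb{P}^{k-1}(\Fm))}$, no radical theorem, and no result on Fermat generators. It also treats the points $P_0,P_\infty$ and the two boundary monomials explicitly. Moreover, it gives a sharper conclusion: for $d\ge 1$, $I(X_q(\Fm))_d$ is cut out by explicit, independent linear conditions on the coefficients. Hence its codimension is $2$ plus the number of residues modulo $q^m-1$ attained by $\langle\gabf,\albf\rangle$ over degree-$d$ monomials other than $x_1^d,x_k^d$. This is exactly the count underlying Proposition~\ref{prop:regTd}. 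The paper's route is shorter given its citations and places the lemma within the general theory of binomial and toric ideals.

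You are also right that the last entry of $\albf$ in the statement should be $\frac{q^{k-1}-1}{q-1}$.
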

\begin{proof}
    The variety $X_q(\Fm)$ is set-theoretically obtained by intersecting $X_q$ and $\mathbb{P}^{k-1}(\Fm)$. Moreover, by~\cite[Corollary 2.6]{beelen2019vanishing} the ideal $I(\mathbb{P}^{k-1}(\Fm))$ is generated by the Fermat polynomials, hence it is also a binomial ideal. Finally, since the sum of binomial ideals is binomial and the radical of a binomial ideal is binomial~\cite[Theorem 3.1]{eisenbud1996binomial}, we conclude that $I(X_q(\Fm))$ is a binomial ideal.
    
    Since $I(X_q(\Fm))$ is a binomial ideal, it is generated as an infinite dimensional $\Fm$-vector space by the set of its homogeneous binomials. Fix a homogeneous binomial $x^{\gabf_1}+bx^{\gabf_2}\in I(X_q(\Fm))$, with $\gabf_1,\gabf_2\in\Z_{\geq0}^k$. By construction, it vanishes on 
    $$X_q(\Fm)=\left\{\left[N_0(a):N_1(a):\cdots:N_{k-1}(a)\right]:a\in \Fm\right\} \cup \{P_\infty\}.$$
    As a consequence,
    $$a^{\langle \gabf_1,\albf\rangle}+ba^{\langle \gabf_2,\albf\rangle}=0\text{, for all }a\in\Fm,$$
    which implies
    $$\langle \gabf_1,\albf\rangle=\langle \gabf_2,\albf\rangle\pmod{q^m-1}\text{ and }b=-1,$$ for $\albf=\left(0,1,\frac{q^2-1}{q-1},\dots,\frac{q^{k-1}}{q-1}\right)$. With a similar argument, one can check that any homogeneous binomial $x^{\gabf_1}-x^{\gabf_2}$, with $\gabf_1,\gabf_2\in\Z_{\geq0}^k$ that vanish on $P_0$ and $P_\infty$, satisfying $\langle \gabf_1,\albf\rangle=\langle \gabf_2,\albf\rangle\pmod{q^m-1}$ belongs to $I(X_q(\F_{q^m}))$.
\end{proof}
\begin{proposition}\label{proposition:coincidences}
    Let $D$ be the smallest integer for which there exist $\gabf_1,\gabf_2\in\Z^{k}_{\geq0}$ with $\lvert \gabf_1\rvert=\lvert\gabf_2\rvert=D$, $\langle \gabf_1,\albf\rangle=\langle \gabf_2,\albf\rangle\pmod{q^m-1}$, and $\langle \gabf_1,\albf\rangle\neq\langle \gabf_2,\albf\rangle$. Then, for all $d<D$ it holds that \[\HF_S(d)=\HF_{S_{\Fm}}(d).\]
\end{proposition}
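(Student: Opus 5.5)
Since $X_q(\Fm)\subseteq X_q$, we have $I(X_q)\subseteq I(X_q(\Fm))$. It therefore suffices to show that for every $d<D$ the degree-$d$ parts agree, $I(X_q)_d=I(X_q(\Fm))_d$; the Hilbert functions then coincide. The plan is to compare the two ideals through their binomial spanning sets: for $I(X_q)$ we use the toric description via the matrix $A$ in \eqref{eq: matrix A} (\cite[Lemma 4.1]{sturmfels1996grobner}), and for $I(X_q(\Fm))$ we use Lemma~\ref{lemma:fbasis}.

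\textbf{Step 1.} Take a homogeneous binomial $x^{\gabf_1}-x^{\gabf_2}$ of degree $d<D$ lying in $I(X_q(\Fm))$. By Lemma~\ref{lemma:fbasis} it satisfies $\langle\gabf_1,\albf\rangle\equiv\langle\gabf_2,\albf\rangle\pmod{q^m-1}$. Since $d<D$, the minimality of $D$ upgrades this congruence to the equality $\langle\gabf_1,\albf\rangle=\langle\gabf_2,\albf\rangle$.

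\textbf{Step 2.} We now check that $A\gabf_1=A\gabf_2$. The rows of $A$ are $\albf$ and $\bebf$, and $\beta_i=\alpha_k-\alpha_i$. Hence $\langle\gabf,\bebf\rangle=\alpha_k\lvert\gabf\rvert-\langle\gabf,\albf\rangle$, so equality of degrees together with equality of the $\albf$-pairings gives equality of the $\bebf$-pairings. Therefore $x^{\gabf_1}-x^{\gabf_2}\in I(X_q)$. The conditions in Lemma~\ref{lemma:fbasis} about vanishing at $P_0$ and $P_\infty$ are not needed here, because the toric membership criterion already covers these points.

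\textbf{Step 3.} A homogeneous binomial ideal is spanned in each degree by its homogeneous binomials of that degree, and both ideals are homogeneous. So Steps 1 and 2 give $I(X_q(\Fm))_d\subseteq I(X_q)_d$, and with the reverse inclusion we get equality. Consequently $\HF_S(d)=\binom{k+d-1}{d}-\dim I(X_q)_d=\HF_{S_{\Fm}}(d)$.

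The main obstacle I expect is the reduction to binomials in Step 3. We need that $I(X_q(\Fm))_d$ is spanned by the degree-$d$ binomials in the ideal, not just that the ideal is generated by binomials. To handle this I would apply Lemma~\ref{lemma:fbasis}'s statement degreewise: a binomial ideal with binomial generators has each graded piece spanned by products of monomials with generators, and these products are themselves homogeneous binomials. A related subtlety is monomials lying in the ideal, i.e.\ "binomials" with one term equal to zero. These also vanish on $X_q$, provided we check that a monomial vanishing on $X_q(\Fm)$ vanishes at $P_0$, at $P_\infty$, and at a point with all coordinates nonzero. The last case is impossible for a monomial, so any monomial in $I(X_q(\Fm))$ vanishes on the torus part of $X_q$. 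It also vanishes on $X_q$ itself, because $X_q$ is the closure of that torus orbit together with $P_0,P_\infty$. Hence such monomials lie in $I(X_q)$ as well.
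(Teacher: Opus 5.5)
Your proof is correct and takes essentially the paper's approach: the paper's proof just says the statement ``immediately follows from Lemma~\ref{lemma:fbasis} and the paragraph just above it'' (the toric description of $I(X_q)$), and your Steps 1--3 spell that out, using $d<D$ to turn $\langle\gabf_1,\albf\rangle\equiv\langle\gabf_2,\albf\rangle \pmod{q^m-1}$ into equality and $\beta_i=\alpha_k-\alpha_i$ to get $A\gabf_1=A\gabf_2$, so every spanning binomial of $I(X_q(\Fm))_d$ lies in $I(X_q)_d$. Your discussion of monomials can be cut: $[1:1:\cdots:1]\in X_q(\Fm)$ has all coordinates nonzero, so $I(X_q(\Fm))$ contains no monomials at all.
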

\begin{proof}
    The statement immediately follows from Lemma~\ref{lemma:fbasis} and the paragraph just above it.
\end{proof}
\begin{remark}\label{remark:Dexists}
    Note that $D$ is always a natural number. Suppose that there are no $\gabf_1,\gabf_2\in\Z^{k}_{\geq0}$ with $\lvert \gabf_1\rvert=\lvert\gabf_2\rvert$, $\langle \gabf_1,\albf\rangle=\langle \gabf_2,\albf\rangle\pmod{q^m-1}\}$, and $\langle \gabf_1,\albf\rangle\neq\langle \gabf_2,\albf\rangle$. This implies that $X_q=X_q(\Fm)$, but this is a contradiction since $X_q$ one has dimension 1 while the other has dimension $0$.
\end{remark}
Although we do not have a closed formula for $D$, we can easily establish a lower bound.
The Hilbert functions of $S$ and $S_{\Fm}$ coincide up to the smallest degree $D$ for which $I(X_q(\Fm))_D\setminus I(X_q)_D$ becomes non-empty. If this is the case, \emph{i.e.}, there exists a polynomial $p(x_1,\dots,x_k)$ of degree $D$ in $I(X_q(\Fm))\setminus I(X_q)$, then Bézout's theorem \cite[Theorem 18.3]{harris2013algebraic} implies \[|\,X_q(\Fm)\,|\leq |\,V(p)\cap X_q \,| \leq \deg(p)\deg(X_q),\] where $V(p)$ is the variety of points in $\mathbb{P}^{k-1}(\Fm)$ that annihilate $p$. Therefore, \[\deg(p)\geq \frac{|\,X_q(\Fm)\,|}{\deg(X_q)}=\frac{(q^m+1)(q-1)}{q^{k-1}-1},\text{ that is } D\geq\left\lceil\frac{(q^m+1)(q-1)}{q^{k-1}-1}\right\rceil.\]

As a consequence, if $k\ll m$, the two Hilbert functions $\HF_S$ and $\HF_{S_{\Fm}}$ coincide for many values. We can then simply focus on $\HF_S(d)$.
\begin{proposition}\label{prop:hilbpolyofS}
  The Hilbert polynomial of $S$ is \[\HP_S(d)=\frac{q^{k-1}-1}{q-1}d+1-\delta(X_q,P_\infty),\] where $\delta(X_q,P_\infty)$ is the delta invariant of singularity of $X_q$ at the point $P_\infty$
  (see for instance \cite[Section~IV.1.2]{S59} for a definition).
\end{proposition}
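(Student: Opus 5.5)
The plan is to compare $X_q$ with its normalization $\varphi:\mathbb{P}^1\to X_q$ given by \eqref{eq:parametrization_rat_norm_curve}. We use the standard formula for a reduced projective curve $C$: for $d\gg0$,
\[
\HF(d)=\chi(\mathcal{O}_C(d))=\deg(C)\,d+1-p_a(C),\qquad p_a(C)=g(\widetilde C)+\sum_P\delta(C,P).
\]
Here $\widetilde C=\mathbb{P}^1$ has genus $0$. By Proposition~\ref{prop: determinantal variety}, $I(X_q)$ is the (radical) ideal of the curve, so $S$ is its coordinate ring. Hence it suffices to prove two things: $\deg X_q=\frac{q^{k-1}-1}{q-1}$, and $P_\infty$ is the only singular point of $X_q$. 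We will also show that $\varphi$ is birational, which is needed for it to be the normalization.

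\emph{Degree and birationality.} The map $\varphi$ is given by monomials of degree $\alpha_k=\frac{q^{k-1}-1}{q-1}$ in $a,b$ with no common factor, since $\alpha_1=0$ and $\beta_k=0$. It is injective: on the affine chart $b=1$ the second coordinate is $a^{\alpha_2}=a$, and $[1:0]\mapsto P_\infty$ is the only point sent to $P_\infty$. So $\varphi$ is birational onto its image. Moreover $\varphi^*\mathcal{O}(1)=\mathcal{O}_{\mathbb{P}^1}(\alpha_k)$. A general hyperplane therefore pulls back to $\alpha_k$ points, giving $\deg X_q=\alpha_k$. Equivalently, the leading coefficient of the Hilbert polynomial of the toric ring attached to the matrix $A$ in \eqref{eq: matrix A} is the normalized length of the segment, namely $\alpha_k$.

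\emph{Smoothness away from $P_\infty$.} On the chart $x_1=1$, the curve is the graph $a\mapsto(a,N_2(a),\dots,N_{k-1}(a))$ of polynomial functions of $x_2=a$. So it is smooth there, and $\varphi$ restricts to an isomorphism $\mathbb{A}^1\cong X_q\cap\{x_1\neq0\}$. This can be checked directly with the Jacobian argument used at $P_0$ in Proposition~\ref{prop: determinantal variety}. The minors $x_2^{q+1}-x_1^qx_3,\ x_3x_2^q-x_1^qx_4,\dots$ have partial derivatives $-x_1^q\ne0$ with respect to $x_3,\dots,x_k$ respectively, which forces the tangent space to be one-dimensional. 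The only point with $x_1=0$ is $P_\infty$, so $\delta(X_q,P)=0$ for $P\neq P_\infty$.

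\emph{Assembling.} Consider the exact sequence $0\to\mathcal{O}_{X_q}\to\varphi_*\mathcal{O}_{\mathbb{P}^1}\to\mathcal{Q}\to0$, where $\mathcal{Q}$ is a skyscraper sheaf at $P_\infty$ of length $\delta(X_q,P_\infty)$. Twisting by $\mathcal{O}(d)$ and taking Euler characteristics gives
\[
\chi(\mathcal{O}_{X_q}(d))=\chi(\mathcal{O}_{\mathbb{P}^1}(\alpha_k d))-\delta=\alpha_k d+1-\delta.
\]
By Serre vanishing, for $d\gg0$ this equals $\HF_S(d)$, which yields the claim. In the same breath we also obtain the fact, used in Proposition~\ref{lemma: onlygab}, that $P_\infty$ is the unique singular point.

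The main obstacle is justifying that $\HF_S(d)=h^0(\mathcal{O}_{X_q}(d))$ for large $d$. This holds for the saturated ideal $I(X_q)$, which is our case since it is the full vanishing ideal.

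For $k\ge4$, $P_\infty$ is indeed singular (at $[1:0]$ the local parametrization has exponents $\beta_i$, whose smallest nonzero value is $q^{k-2}>1$), but the statement only requires recording its $\delta$-invariant, not computing it.
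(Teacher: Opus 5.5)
Your argument is correct, but it reaches the formula by a different route from the paper. The paper's proof is essentially a chain of citations:
\begin{itemize}
\item it states the degree $\frac{q^{k-1}-1}{q-1}$ without proof;
\item it quotes \cite{elias2022sumsets} for the identity $g_a(X_q)=\sum_{P\in\mathrm{Sing}(X_q)}\delta(X_q,P)$ for monomial curves, for the fact that only $P_0$ and $P_\infty$ can be singular, and for the semigroup formulas $\delta(X_q,P_0)=|\N\setminus\Gamma(\albf)|$ and $\delta(X_q,P_\infty)=|\N\setminus\Gamma(\bebf)|$;
\item it then uses $\alpha_2=1$ to get $\Gamma(\albf)=\N$, so $P_0$ contributes nothing.
\end{itemize}
You instead derive the genus formula yourself from the normalization sequence $0\to\mathcal{O}_{X_q}\to\varphi_*\mathcal{O}_{\mathbb{P}^1}\to\mathcal{Q}\to 0$, the projection formula and Serre vanishing. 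You prove the degree from $\varphi^*\mathcal{O}(1)=\mathcal{O}_{\mathbb{P}^1}(\alpha_k)$. You replace the local semigroup check at $P_0$ by the global observation that the chart $x_1\neq 0$ is the graph of polynomial functions of $x_2$. This is the same fact $\alpha_2=1$, exploited on the whole affine chart at once rather than at a single point.

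The paper's route is shorter and produces a computable combinatorial expression for $\delta(X_q,P_\infty)$. Yours is self-contained. Together with your closing remark that the smallest nonzero $\beta_i$ is $q^{k-2}>1$, it also explicitly shows that $P_\infty$ is the unique singular point. That is exactly the fact Proposition~\ref{lemma: onlygab} borrows from this proposition, and the paper leaves the singularity of $P_\infty$ implicit.

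Two small slips. First, injectivity of $\varphi$ alone does not give birationality in characteristic $p$ (Frobenius is injective). Birationality really rests on the chart isomorphism $\mathbb{A}^1\cong X_q\cap\{x_1\neq 0\}$ with inverse given by $x_2$, which you do establish in the next paragraph. Second, your second minor should read $x_2x_3^q-x_1^qx_4$: the minors involving $x_1$ are $x_2x_v^q-x_1^qx_{v+1}$, and the polynomial $x_3x_2^q-x_1^qx_4$ you wrote does not vanish on $X_q$. This does not affect the Jacobian rank count. With the correct minors, the columns indexed $3,\dots,k$ form a diagonal block with entries $-x_1^q$, since $\partial(x_2x_v^q)/\partial x_v=q\,x_2x_v^{q-1}=0$.
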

\begin{proof}
    Since $X_q$ is a projective curve of degree $\frac{q^{k-1}-1}{q-1}$, its Hilbert polynomial $\HP_S(d)$ has degree one \cite[Theorem~7.5]{H77}, with leading coefficient $\frac{q^{k-1}-1}{q-1}$. Moreover, its constant term is equal to $1-g_a(X_q)$, where $g_a(X_q)$ is the arithmetic genus of $X_q$. By~\cite[Section~3]{elias2022sumsets}, \[g_a(X_q)=\sum_{P\in\mathrm{Sing}(X_q)}\delta(X_q,P)\,,\] where $\mathrm{Sing}(X_q)$ is the set of singular points of $X_q$ and $\delta(X_q,P)$ is the $\delta$-invariant of singularity of $X_q$ at the point $P$. Since $X_q$ is a monomial curve, its only two potential singular points are $P_0\coloneqq[1:0:\cdots:0]$ and $P_\infty\coloneqq[0:\cdots:0:1]$ (see~\cite[Section 3]{elias2022sumsets}). Let $\Gamma(\albf)$ and $\Gamma(\bebf)$ be the semigroups generated by $\albf=(\alpha_1,\dots,\alpha_k)$ and $\bebf=(\beta_1,\dots,\beta_k)$ respectively. Then, \[\delta(X_q,P_0)=\lvert\N\setminus \Gamma(\albf)\rvert \ \text{ and } \ \delta(X_q,P_\infty)=\lvert\N\setminus \Gamma(\bebf)\rvert\,.\] Since $\alpha_2=1$, $\Gamma(\albf)=\N$ and $P_0$ is a smooth point. In conclusion, $g_a(X_q)=\delta(X_q,P_\infty)$.
\end{proof}
We remark that obtaining a compact closed formula for $\delta(X_q,P_\infty)$ is, in general, not straightforward. In practice, one must proceed case by case and choose the most suitable strategy for its computation.

\subsection{Sumsets and Hilbert Regularity}
The goal of this section is to compute the Hilbert regularity of the coordinate ring $S$. To do so, we state the problem in terms of sumsets. We refer to~\cite[Chapter 1]{nathanson1996additive} for a general introduction to the topic. 

As in the previous subsection, we denote by $\Sa$ the non-ordered set of entries $\alpha_1,\dots,\alpha_k$ of the vector $\albf=\left(0,1,\frac{q^2-1}{q-1},\cdots,\frac{q^{k-1}-1}{q-1}\right)$. For every non-negative integer $s$, define the $s$-fold sumset of $\Gamma(\albf)$ as $0\Sa\coloneqq\{0\}$ and \[s\Sa\coloneqq\{\alpha_{i_1}+\dots+\alpha_{i_s}:1\leq i_1\leq\dots\leq i_s\leq k\}\] for all $s\geq1$. Then, there exists a direct relation among the $s$-fold sumsets of $\Gamma(\albf)$ and the Hilbert function of $S$, that is \[\HF_S(s)=\lvert\, s\Sa\,\rvert,\] see~\cite[Section 2]{eliahou2022iterated} or~\cite[Proposition 2.3]{elias2022sumsets}. In addition, the Castelnuovo--Mumford regularity of $S$ can as-well be expressed in terms of sumsets. 
\begin{theorem}[{\cite[Remark 3.6 and Theorem 3.7]{gimenez2023castelnuovo}}]\label{theorem:gimenez}
    The Castelnuovo--Mumford regularity of $S$ is \[\CMR(S)=\max\big\{\mathrm{AP}(\Gamma(\albf)), \,\mathrm{E}(\Gamma(\albf))\big\},\]
    where
    \begin{equation*}
        \begin{split}
            \mathrm{AP}(\Gamma(\albf))&\coloneqq\max\{s\in\N:\exists z\in s\Sa\text{ such that }z\notin (s-1)\Sa\text{ and }z-\alpha_k\notin (s-1)\Sa\}\\
            \mathrm{E}(\Gamma(\albf))&\coloneqq\max\{s\in\N:\exists z\in s\Sa\text{ such that }z-\alpha_k\in s\Sa\setminus (s-1)\Sa\},
        \end{split}
    \end{equation*}
    and $\mathrm{E}(\Gamma(\albf))\coloneqq-\infty$ if the above set on the right hand side is empty.
\end{theorem}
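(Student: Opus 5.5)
This statement is quoted from \cite{gimenez2023castelnuovo}, so the paper can simply cite it. If I had to prove it myself, I would reduce to the combinatorics of the graded semigroup
\[\mathcal{A}\coloneqq\{(s,z)\st s\in\N,\ z\in s\Sa\}\subseteq\N^2,\]
using the identification $S\cong\Fm[\mathcal{A}]$ with grading by $s$. This identification holds because $X_q$ is the toric curve attached to the matrix $A$ in~\eqref{eq: matrix A}, and it underlies the equality $\HF_S(s)=\lvert s\Sa\rvert$.

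The next step is a Noether normalization by the two extremal variables. The generator $(1,0)$ corresponds to $x_1$ and $(1,\alpha_k)$ to $x_k$, so $S$ is a finitely generated graded module over $R\coloneqq\Fm[x_1,x_k]$. By the invariance of regularity under finite maps, $\CMR(S)$ can be computed over $R$. The monomials $x_1$ and $x_k$ add $(1,0)$ and $(1,\alpha_k)$, so they preserve $z \bmod \alpha_k$. Hence $S$ splits as an $R$-module into
\[S=\bigoplus_{r\in\Z/\alpha_k\Z} M_r,\]
where $M_r$ is spanned by the monomials $(s,z)$ with $z\equiv r$. Then $\CMR(S)=\max_r \CMR(M_r)$.

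Each $M_r$ is a torsion-free rank-one graded $R$-module and can be handled by hand. I would write $z=r+\ell\alpha_k$ and index the monomials of $M_r$ by pairs $(s,\ell)$. Under this indexing, $x_1$ and $x_k$ act as the shifts $(s,\ell)\mapsto(s+1,\ell)$ and $(s,\ell)\mapsto(s+1,\ell+1)$. The computation then has three parts.
\begin{enumerate}
\item The minimal generators of $M_r$ are exactly the $z\in s\Sa$ with $z\notin(s-1)\Sa$ (not reachable through $x_1$) and $z-\alpha_k\notin(s-1)\Sa$ (not reachable through $x_k$). Their maximal degree is $\mathrm{AP}(\Gamma(\albf))$.
\item Since $\dim R=2$, local cohomology gives $\CMR(M_r)=\max\{\mathrm{end}(H^0_{\mathfrak m})\,,\ \mathrm{end}(H^1_{\mathfrak m})+1,\ \mathrm{end}(H^2_{\mathfrak m})+2\}$. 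Here $H^0_{\mathfrak m}=0$ by torsion-freeness. To compute $H^1_{\mathfrak m}$, I would embed $M_r$ into its $\mathfrak m$-saturation $\widetilde M_r$. By Čech cohomology with respect to $x_1,x_k$, $\widetilde M_r$ consists of the monomials $(s,\ell)$ that become members of $M_r$ after multiplication by a power of $x_1$ and also by a power of $x_k$. This gives $H^1_{\mathfrak m}(M_r)=\widetilde M_r/M_r$ in degrees relevant to the regularity.
\item The top degree of $\widetilde M_r/M_r$ would be matched with the condition defining $\mathrm{E}$. A witness $z\in s\Sa$ with $z-\alpha_k\in s\Sa\setminus(s-1)\Sa$ produces the monomial $(s-1,z-\alpha_k)$: it lies in $\widetilde M_r\setminus M_r$, since it reappears in $M_r$ after multiplication by $x_1$ (giving $z-\alpha_k\in s\Sa$) and by $x_k$ (giving $z\in s\Sa$). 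Its degree is $s-1$, so it contributes $s$ to the regularity after the $+1$ shift. Conversely, every monomial of $\widetilde M_r\setminus M_r$ of maximal degree would be shown to arise this way.
\end{enumerate}
Finally I would control $H^2_{\mathfrak m}$ by comparing $\widetilde M_r$ with a free rank-one $R$-module. Its contribution should already be dominated by the generator degrees, so that $\CMR(S)=\max\{\mathrm{AP},\mathrm{E}\}$.

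The main obstacle is the exact correspondence in the second and third parts between the holes $\widetilde M_r\setminus M_r$ and the set defining $\mathrm{E}(\Gamma(\albf))$, together with the off-by-one bookkeeping in the degree shifts. The first attempt would be a direct combinatorial argument on the staircase of $M_r$ in the $(s,\ell)$-plane. A maximal-degree hole must sit just below a corner where both the $x_1$-translate and the $x_k$-translate lie in $M_r$, and this is exactly the defining condition of $\mathrm{E}$. The same case analysis, with the shifts checked carefully, should also show that $H^2_{\mathfrak m}$ never exceeds the bound $\max\{\mathrm{AP},\mathrm{E}\}$.
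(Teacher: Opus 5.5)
The paper gives no proof of this statement; it imports it from \cite{gimenez2023castelnuovo}. Your sketch is a correct, self-contained route. Decomposing $S$ over the Noether normalization $R=\Fm[x_1,x_k]$ by $z \bmod \alpha_k$ turns each $M_r$ into a shifted monomial ideal in two variables. The two invariants then become the top minimal-generator degree and the top degree of $H^1_{\mathfrak m}$ plus one.

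The step you flag as the main obstacle is in fact immediate.
\begin{itemize}
\item \emph{The $H^1$ term.} Since $\widetilde M_r/M_r$ has finite length, take a hole $\mu=(s-1,z')\in\widetilde M_r\setminus M_r$ of maximal degree. Then $x_1\mu$ and $x_k\mu$ lie in the $R$-module $\widetilde M_r$ and have larger degree, so by maximality they lie in $M_r$. This says exactly that $z'\in s\Sa\setminus(s-1)\Sa$ and $z'+\alpha_k\in s\Sa$, i.e.\ $z=z'+\alpha_k$ witnesses $\mathrm{E}(\Gamma(\albf))\ge s$. With your converse this gives $\max_r\bigl(\mathrm{end}\,H^1_{\mathfrak m}(M_r)+1\bigr)=\mathrm{E}(\Gamma(\albf))$, both sides being $-\infty$ when there are no holes.
\item \emph{The $H^2$ term.} In the coordinates $(s-\ell,\ell)$, the saturation $\widetilde M_r$ is the free module generated by the single monomial $(a_{\min},b_{\min})$, built from the minimal coordinates occurring in $M_r$. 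So $H^2_{\mathfrak m}(M_r)\cong H^2_{\mathfrak m}(\widetilde M_r)$ contributes exactly $a_{\min}+b_{\min}$. This is at most the degree of every element of $M_r$, hence at most $\mathrm{AP}(\Gamma(\albf))$.
\item \emph{The lower bound.} You should state explicitly that regularity is bounded below by the maximal degree of a minimal generator. This gives $\CMR(S)\ge\mathrm{AP}(\Gamma(\albf))$ and closes the argument.
\end{itemize}

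Compared with the bare citation, your route costs a page but makes visible where $\mathrm{AP}$ and $\mathrm{E}$ come from: minimal generators versus holes of the saturation. It also has one concrete payoff inside this paper. Your framework shows that $\mathrm{E}(\Gamma(\albf))=-\infty$ holds if and only if $H^1_{\mathfrak m}(S)=0$, i.e.\ if and only if $S$ is Cohen--Macaulay. So Lemma~\ref{lemma:E} by itself already yields the Cohen--Macaulay property, and the appeal to the criterion of \cite{goto1976affine} in the proof of Theorem~\ref{theorem:CMregMonomialcurve} could be dropped.
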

In light of this result, our goal is then to estimate $\mathrm{AP}(\Gamma(\albf))$ and $\mathrm{E}(\Gamma(\albf))$.

\begin{example}
    Before proceeding, we present a simple example to illustrate the above definitions and provide some intuition for the results that follow. Let $q=2$, $k=5$. Then $\albf=(0,1,3,7,15)$ and, by commutativity, for all $s\geq1$, \[s\Sa=\{\nu_1\cdot0+\nu_2\cdot1+\nu_3\cdot3+\nu_4\cdot7+\nu_5\cdot15\st \nu_1+\dots+\nu_5=s\}.\] In general, an integer may admit more than one representation as a linear combination of $\alpha_1,\dots,\alpha_k$. In particular, observe that whenever $z\in\N$ is such that $z\in s\Sa$ for some $s>0$, then $z=z+0=z+1\cdot\alpha_1\in (s+1)\Sa$. Since to determine $\mathrm{AP}(\Sa)$ and $\mathrm{E}(\Sa)$ we require that $z\notin(s-1)\Sa$, we can restrict to the representations having $\nu_1=0$. At the same time, to guarantee that $z\notin(s-1)\Sa$, we have to determine the representation of $z$ having minimum sum of coefficients. Let us consider all the possible representations of the number $z=8$ with $\nu_1=0$:
    \begin{align*}
        8 &= 8\cdot1 = 8\cdot\alpha_2\in 8\Sa \\
        &= 1\cdot 3+5\cdot1 = 1\cdot\alpha_3 + 5\cdot\alpha_2\in 6\Sa \\
        &= 2\cdot 3+2\cdot1 = 2\cdot\alpha_3 + 2\cdot\alpha_2\in 4S\Sa \\
        &= 1\cdot 7+1\cdot1 = 1\cdot\alpha_4 + 1\cdot\alpha_2\in 2\Sa
    \end{align*}
    In conclusion, since $8-15\notin\Gamma(\albf)$, we have that
    \begin{gather*}
        \max\{s\in\N:8\notin (s-1)\Sa\text{ and }8-\alpha_k\notin (s-1)\Sa\}=2 \\
        \{s\in\N:8-\alpha_k\in s\Sa\setminus (s-1)\Sa\}=\emptyset.
    \end{gather*}

    Let us now consider $z=20$. Then, one can verify that the representation of $z$ with minimum sum of coefficients is $20=1\cdot\alpha_5+1\cdot\alpha_3+2\cdot\alpha_2$, implying that $\min\{s\in\N\st 20\in s\Sa\}=4$. However, $20-15=5$ and its minimum representation is $5=1\cdot\alpha_3+2\cdot\alpha_1$, meaning that $5\in s\Sa$ for all $s\geq3$. Therefore, $z=20$ does not contribute in determining $\mathrm{AP}(\Gamma(\albf))$ and $\mathrm{E}(\Gamma(\albf))$.
\end{example}

\begin{proposition}\label{prop:reduced representation}
    Let $z,\nu_2,\dots,\nu_k$ be positive integers such that $z=\nu_2\alpha_2+\cdots+\nu_k\alpha_k$. Then, there exist some positive integers $\bar\nu_2,\dots,\bar\nu_k$ with $\bar\nu_j\leq q$ for all $j\in\{2,\dots,k-1\}$, and with the property that $\bar\nu_i=q$ implies $\bar\nu_j=0$ for all $j\in\{2,\dots,i-1\}$, such that $z=\bar\nu_2\alpha_2+\cdots+\bar\nu_k\alpha_k$ and $\sum_{i=2}^k\bar\nu_i\leq\sum_{i=2}^k\nu_i$.
\end{proposition}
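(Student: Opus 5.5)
The key identity is
\[
q\alpha_i=\alpha_{i+1}-1=\alpha_{i+1}-\alpha_2 \qquad \text{for } i\in\{2,\dots,k-1\},
\]
which follows from $\alpha_i=\frac{q^{i-1}-1}{q-1}$: indeed $q\alpha_i+1=\frac{q^i-q+q-1}{q-1}=\alpha_{i+1}$. Hence
\[
(q+1)\alpha_i=\alpha_{i+1}+(\alpha_i-1),
\]
and for $i\ge 3$ we have $\alpha_i-1=q\alpha_{i-1}$. Iterating gives
\[
(q+1)\alpha_i=\alpha_{i+1}+q\alpha_{i-1}=\alpha_{i+1}+\alpha_i-\alpha_2 .
\]
We will use the following two exchange moves. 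They are meant to be applied to nonnegative integer coefficients; the word ``positive'' in the statement should be read as ``non-negative''. Each move preserves $z$ and strictly decreases either $\sum\nu$ or a suitable potential.

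\emph{Move (a)}: if $\nu_i\ge q+1$ for some $i\in\{2,\dots,k-1\}$, replace $(q+1)\alpha_i$ by $\alpha_{i+1}+q\alpha_{i-1}$ when $i\ge3$, and by $\alpha_3$ when $i=2$ (note $(q+1)\alpha_2=\alpha_3$). For $i\ge 3$ the coefficient sum stays the same, $q+1\mapsto q+1$; for $i=2$ it drops from $q+1$ to $1$. Actually a cleaner choice avoids the $i\ge 3$ case altogether: $q\alpha_i+\alpha_2=\alpha_{i+1}$. So whenever $\nu_i\ge q$ and $\nu_2\ge1$ (with $i\ge2$, and $\nu_2\ge q+1$ if $i=2$), we trade $q$ copies of $\alpha_i$ and one copy of $\alpha_2$ for one copy of $\alpha_{i+1}$. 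This decreases the coefficient sum by $q$.

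\emph{Move (b)}, for the case $\nu_i\ge q+1$ with $\nu_2=0$: write $(q+1)\alpha_i=\alpha_{i+1}+q\alpha_{i-1}$, which keeps the sum fixed. Then cascade downwards using $q\alpha_{j}=\alpha_{j+1}-\alpha_2$ in reverse. This step is awkward, so instead we use
\[
\alpha_i=q\alpha_{i-1}+\alpha_2 ,
\]
which lets us ``borrow'' a copy of $\alpha_2$. This is the main obstacle. We handle it by the general identity
\[
(q+1)\alpha_i=\alpha_{i+1}+(q-1)\alpha_i+\dots
\]
but more concretely we do the following. Assume $\nu_i\ge q+1$ and $\nu_2=\dots=\nu_{i-1}=0$ is false. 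Then some $\nu_j>0$ with $2\le j<i$, and we use $\alpha_j+q\alpha_i$. By induction on $j$, $\alpha_j+q\alpha_i=\alpha_{i+1}+(\alpha_j-\alpha_2)$, and $\alpha_j-\alpha_2=\alpha_j-1=q\alpha_{j-1}$. So
\[
q\alpha_i+\alpha_j=\alpha_{i+1}+q\alpha_{j-1},
\]
where for $j=2$ the last term is $q\alpha_1=0$. This move keeps or decreases the sum (from $q+1$ to $1+q$, or to $1$ when $j=2$). It also moves weight to higher indices.

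To prove termination, define the potential
\[
\Phi=\sum_{i\ge 2}\nu_i\,(k+1-i)\quad\text{(or lexicographic order on }(\nu_k,\nu_{k-1},\dots)).
\]
Each application of the move $q\alpha_i+\alpha_j\to\alpha_{i+1}+q\alpha_{j-1}$ increases $\nu_{i+1}$ and leaves the higher coordinates unchanged, so the vector $(\nu_k,\dots,\nu_2)$ increases lexicographically. Since $\nu_k\le z/\alpha_k$ and all coordinates are bounded by $z$, the process terminates. The coefficient sum never increases.

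At termination no move applies. Concretely, for every $i\in\{2,\dots,k-1\}$ with $\nu_i\ge q$, every $j<i$ has $\nu_j=0$; and applying the move with $j=i$ (which needs $\nu_i\ge q+1$) shows $\nu_i\le q$. This is exactly the required normal form. The remaining bookkeeping is the edge cases $i=2$ (where $\alpha_1=0$ makes the sum drop) and $i=k-1$ (the move targets $\alpha_k$, which has no bound); I expect these to need only routine checking.
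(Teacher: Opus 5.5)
Your argument is correct and is essentially the paper's proof. The reduction uses the same exchange identities $q\alpha_i+\alpha_j=\alpha_{i+1}+q\alpha_{j-1}$ for $2\le j\le i\le k-1$ (with $\alpha_1=0$, so the coefficient sum drops by $q$ when $j=2$ and is unchanged otherwise). Your termination argument, the strict lexicographic increase of $(\nu_k,\dots,\nu_2)$ inside the finite box $\{0,\dots,z\}^{k-1}$, does the same job as the paper's graded reverse lexicographic order. Two small corrections. Drop the potential $\Phi=\sum_i\nu_i(k+1-i)$: it is not decreasing, since it increases by $(q-1)(i-j+1)$ under the moves with $j\ge 3$. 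Also, the move for $j<i$ should be triggered by $\nu_i\ge q$ rather than $\nu_i\ge q+1$, which is what your final paragraph actually uses.
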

\begin{proof}
  We order $\mathbb{N}^{k-1}$ with the graded reverse lexicographic order:
  $
    (\mu_2, \dots, \mu_k) \prec_{\text{grevlex}} (\eta_2, \dots, \eta_k)$
    if either $\sum_{i=2}^k \mu_i < \sum_{i=2}^k \eta_i$ or
    $\sum_{i=2}^k \mu_i = \sum_{i=2}^k \eta_i$
    and 
    $\mu_s > \eta_s$ for the largest index $s$ such that $\mu_s \neq \eta_s$.
  
    Let $z=\nu_2\alpha_2+\cdots+\nu_k\alpha_k$. First of all, suppose that $\nu_2\geq q+1$. Since $(q+1)\alpha_2=\alpha_3$, it is immediate to see that the coefficients $\nu_2,\nu_3$ can be substituted with $\nu_2-(q+1), \nu_3+1$, whose sum is smaller than $\nu_2+\nu_3$. Suppose now that there exist two indices $2\leq i <j\leq k-1$ such that $\nu_i>0$ and $\nu_j\geq q$. Since
    \begin{equation*}
        q\alpha_j+\alpha_i=q\frac{q^{j-1}-1}{q-1}+\frac{q^{i-1}-1}{q-1}=\frac{q^{j}-1}{q-1}+q\frac{q^{i-2}-1}{q-1}=\alpha_{j+1}+q\alpha_{i-1},
    \end{equation*}
    we can further decrease the sequence $(\nu_2,\dots,\nu_k)$ with respect to $\prec_{\text{grevlex}}$ into
    \[(\nu_2,\dots,\nu_{i-1}+q,\nu_i-1,\dots, \nu_j-q,\nu_{j+1}+1,\dots,\nu_k).\]
    Similarly, if there exists an index $j\in\{3,\dots,k-1\}$ such that $\nu_j\geq q+1$, since \[(q+1)\alpha_j=(q+1)\frac{q^{j-1}-1}{q-1}=\frac{q^j-1}{q-1}+\frac{q^{j-1}-1}{q-1}-1=\frac{q^j-1}{q-1}+q\frac{q^{j-2}-1}{q-1}=\alpha_{j+1}+q\alpha_{j-1},\] we can move from $\nu_2,\dots,\nu_{k}$ to $\nu_2,\dots, \nu_{j-1}+q,\nu_j-(q+1),\nu_{j+1}+1,\dots,\nu_k$, and achieve again a reduction with respect to
    $\prec_{\text{grevlex}}$.
    By iterating these transformations, the sequence of tuples $(\nu_2, \dots, \nu_k)$
    strictly decreases with respect to $\prec_{\text{grevlex}}$. Therefore, the process terminates and we obtain $\bar\nu_2,\dots,\bar\nu_k$ satisfying $\bar\nu_i\leq q$ for all $j\in\{3,\dots,k-1\}$ and that $\bar\nu_i=q$ implies $\bar\nu_j=0$ for $2\leq j<i$, with the additional property that $\sum_{i=2}^k\bar\nu_i\leq\sum_{i=2}^k\nu_i$.
\end{proof}

\begin{lemma}\label{lemma:AP}
    It holds that $\mathrm{AP}(\Gamma(\albf))= (q-1)(k-2)+1$.
\end{lemma}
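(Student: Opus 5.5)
We describe, for each $z$, the least $s$ with $z\in s\Sa$, using a canonical ``skew $q$-ary'' representation. Then we read $\mathrm{AP}(\Gamma(\albf))$ off that description.

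For $z\in\N$ write $\ell(z)\coloneqq\min\{s\st z\in s\Sa\}$. Since $\alpha_1=0$, we have $z\in s\Sa$ if and only if $s\geq \ell(z)$. So the condition ``$z\in s\Sa$, $z\notin(s-1)\Sa$'' means $\ell(z)=s$. The condition ``$z-\alpha_k\notin(s-1)\Sa$'' means that either $z<\alpha_k$, or $z-\alpha_k\notin\Gamma(\albf)$, or $\ell(z-\alpha_k)\geq s$.

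Call a tuple $(\nu_2,\dots,\nu_k)$ \emph{reduced} if it has the shape produced by Proposition~\ref{prop:reduced representation}: $\nu_j\leq q$ for $2\leq j\leq k-1$, and $\nu_i=q$ forces $\nu_j=0$ for all $j<i$.

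\emph{Step 1: uniqueness of reduced representations.} We show by induction on $j$ that reduced tuples supported on indices $2,\dots,j$ represent every integer of $[0,\alpha_{j+1})$ exactly once.
\begin{itemize}
\item For the inductive step, split according to the top coefficient $\nu_j$. If $\nu_j\in\{0,\dots,q-1\}$ and the lower part is reduced, then by induction the values fill the disjoint blocks $\nu_j\alpha_j+[0,\alpha_j)$, which together cover $[0,q\alpha_j)$.
\item If $\nu_j=q$, the lower part must vanish, giving the single value $q\alpha_j$.
\item Since $q\alpha_j+1=\alpha_{j+1}$, these values fill $[0,\alpha_{j+1})$ bijectively.
\end{itemize}
We expect this to be the main technical point. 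With it in hand, reduced representations with $\nu_k=0$ are exactly the base-type expansions of integers in $[0,\alpha_k)$.

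\emph{Step 2: upper bound.} Let $z$ satisfy the conditions defining $\mathrm{AP}(\Gamma(\albf))$ with $s=\ell(z)$, and take a representation of $z$ of length $s$ (dropping $\alpha_1$). Proposition~\ref{prop:reduced representation} turns it into a reduced representation whose coefficient sum is at most $s$, hence exactly $s$ by minimality.
\begin{itemize}
\item If $\bar\nu_k\geq1$, then $z-\alpha_k\in(s-1)\Sa$, which is excluded. Hence $\bar\nu_k=0$.
\item The largest coefficient sum of a reduced tuple on indices $2,\dots,k-1$ is obtained in one of two ways. Taking every $\nu_j=q-1$ gives $(q-1)(k-2)$. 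Taking $\nu_i=q$, with zeros below index $i$ and $q-1$ above, gives $q+(q-1)(k-2-i+1)$, which is largest at $i=2$ and equals $(q-1)(k-2)+1$.
\end{itemize}
Thus $s\leq (q-1)(k-2)+1$.

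\emph{Step 3: lower bound.} Take
\[
z\coloneqq q\alpha_2+(q-1)(\alpha_3+\dots+\alpha_{k-1}).
\]
\begin{itemize}
\item This representation is reduced and has coefficient sum $(q-1)(k-2)+1$.
\item Using $(q-1)\alpha_j=q^{j-1}-1$, a short computation gives $\alpha_k-z=k-2>0$. Hence $z<\alpha_k$, and the condition on $z-\alpha_k$ holds automatically.
\item Suppose some representation of $z$ had length $s'<(q-1)(k-2)+1$. Reducing it via Proposition~\ref{prop:reduced representation} keeps $\nu_k=0$, because $z<\alpha_k$. This yields a reduced representation with sum at most $s'$. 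By the uniqueness in Step~1 it must coincide with the one above, which is a contradiction.
\end{itemize}
Hence $\ell(z)=(q-1)(k-2)+1$. So $z$ witnesses this value of $s$ in the definition of $\mathrm{AP}(\Gamma(\albf))$, which together with Step~2 proves the lemma.
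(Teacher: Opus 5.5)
Your proposal is correct and follows essentially the same route as the paper. The upper bound uses the reduction of Proposition~\ref{prop:reduced representation}, with $\nu_1=0$ forced by minimality and $\nu_k=0$ forced by $z-\alpha_k\notin(s-1)\Sa$, and the lower bound uses the same witness $z=q\alpha_2+(q-1)(\alpha_3+\dots+\alpha_{k-1})=\alpha_k-(k-2)$. Your Step~1, uniqueness of reduced representations of integers in $[0,\alpha_k)$, rigorously justifies the step $z\notin(v-1)\Sa$, which the paper only sketches.
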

\begin{proof}
  We begin by proving that $\mathrm{AP}(\Gamma(\albf))\leq (q-1)(k-2)+1$. Let $z=\alpha_{i_1}+\dots+\alpha_{i_{v+1}}\in(v+1)\Sa$ and assume that $z\notin v\Sa$. For $j\in\{1,\dots,k\}$, denote by $\nu_j$ the number of occurrences of $\alpha_j$ in the list $\alpha_{i_1},\dots,\alpha_{i_{v+1}}$. If $\nu_1>0$, then $z=z-\alpha_1\in v\Sa$. Hence, it follows that $\nu_1=0$. Moreover, by Proposition~\ref{prop:reduced representation}, we can assume $\nu_2,\dots,\nu_k$ to be such that $\nu_i\leq q$ for all $i\in\{2,\dots,k-1\}$, with the additional property that $\nu_i=q$ implies $\nu_j=0$ for $j\in\{2,\dots,i-1\}$. Finally, if $z-\alpha_k\notin v\Sa$, we can further restrict ourselves to the case $\nu_k=0$. Therefore, we conclude that
  \[ \mathrm{AP}(\Gamma(\albf))\leq v+1\leq q+(q-1)(k-3) = (q-1)(k-2)+1.\]
  On the other side, to prove that $\mathrm{AP}(\Gamma(\albf))\geq (q-1)(k-2)+1$, we only need to show that there exists $z\in v\Sa$ such that $z\notin (v-1)\Sa$ and $z-\alpha_k\notin (v-1)\Sa$, with $v=(q-1)(k-2)+1$. Consider $z=q\alpha_2+(q-1)\alpha_3+\dots+(q-1)\alpha_{k-1}$. It follows by definition that $z\in v\Sa$. Moreover, using the fact that \[q\alpha_j=q\frac{q^{j-1}-1}{q-1}=\frac{q^{j}-1}{q-1}+1=\alpha_{j+1}-1\] for all $j\in\{1,\dots,k-1\}$, we obtain
    \begin{align*}
        z &=q\alpha_2+(q-1)\alpha_3+\dots+(q-1)\alpha_{k-1} \\
        &=(\alpha_3-1)+(\alpha_4-1-\alpha_3)+\dots+(\alpha_k-1-\alpha_{k-1}) \\
        &=\alpha_k-(k-2)<
        \alpha_k.
    \end{align*} Hence, $z-\alpha_k\notin(v-1)\Sa$. It remains to show that $z\notin(v-1)\Sa$. If this was the case, then it would be possible to reduce the sum of the coefficients in the representation of $z$ above. Let $ \nu_2\alpha_2+\dots+ \nu_{k}\alpha_{k}$ be such an alternative representation of $z$ with $\sum_{i=2}^{k}\nu_i=v-1$. By Proposition~\ref{prop:reduced representation}, we can force $\nu_2\leq q$ and $\nu_j\leq q-1$ for all $j\in\{3,\dots,k-1\}$. Moreover, since $z<\alpha_k$, we must have $\nu_k=0$. Since we are assuming that $z\in(v-1)\Sa$, either $\nu_2<q$ or there exists an index $j\in\{2,\dots,k-1\}$ for which $\nu_j<q-1$, and this leads to a contradiction.
\end{proof}
\begin{lemma}\label{lemma:E}
    We have that $\mathrm{E}(\Gamma(\albf))=-\infty$.
\end{lemma}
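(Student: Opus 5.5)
We must show that no $s$ and no $z\in s\Sa$ satisfy $z-\alpha_k\in s\Sa\setminus(s-1)\Sa$. The plan is to argue by contradiction, via the reduced representations of Proposition~\ref{prop:reduced representation}.

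Suppose $w\coloneqq z-\alpha_k\in s\Sa\setminus(s-1)\Sa$ and $z\in s\Sa$.

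First, $w\notin(s-1)\Sa$ means that the minimal number of summands needed to write $w$ with elements of $\Sa\setminus\{0\}$ is exactly $s$. If it were smaller, we could pad with $\alpha_1=0$ and land in $(s-1)\Sa$.

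Second, $z\in s\Sa$ gives a representation $z=\sum_{j\ge2}\nu_j\alpha_j$ with $\sum_j\nu_j\le s$. The key idea is to compare this with $w$. If $\nu_k\ge1$, then $w=z-\alpha_k$ has a representation with at most $s-1$ nonzero summands, so $w\in(s-1)\Sa$, a contradiction. Hence we may assume that $z$ admits a representation of length $\le s$ with $\nu_k=0$. We want it to remain true after reduction, so we apply Proposition~\ref{prop:reduced representation} to it. The reduction moves can create an $\alpha_k$ term, namely through $\alpha_{k-1}\to\alpha_k$. If they do, we again get $z$ written with at most $s$ summands including some $\alpha_k$, hence $w\in(s-1)\Sa$, a contradiction. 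So the reduced representation of $z$ also has $\bar\nu_k=0$.

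Third, a representation with $\bar\nu_k=0$ obeying the constraints of Proposition~\ref{prop:reduced representation} has value strictly less than $\alpha_k$. The largest such value is $q\alpha_{k-1}$ when only $\bar\nu_{k-1}=q$ is nonzero. Otherwise it is bounded by $q\alpha_2+(q-1)(\alpha_3+\dots+\alpha_{k-1})=\alpha_k-(k-2)$, as computed in the proof of Lemma~\ref{lemma:AP}. Here $q\alpha_{k-1}=\alpha_k-1$. More carefully, if $\bar\nu_i=q$ for some $i$, all smaller indices vanish and higher ones are at most $q-1$. A telescoping sum using $q\alpha_j=\alpha_{j+1}-1$ then gives a total of at most $\alpha_k-1$. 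So $z<\alpha_k$, which contradicts $w=z-\alpha_k\ge0$ (elements of sumsets are nonnegative). Hence the defining set of $\mathrm{E}$ is empty and $\mathrm{E}(\Gamma(\albf))=-\infty$.

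The main obstacle is the bound in the third step: the constraints in Proposition~\ref{prop:reduced representation} must force a value below $\alpha_k$ whenever $\bar\nu_k=0$. This needs the stronger fact that $\bar\nu_j\le q-1$ for indices above one equal to $q$. The statement of that proposition only literally gives $\bar\nu_j\le q$. However, its proof eliminates any pair with $\nu_i>0$, $\nu_j\ge q$, $i<j\le k-1$, so two entries equal to $q$ cannot coexist. I would make this explicit and then run the telescoping estimate $\sum\bar\nu_j\alpha_j\le q\alpha_i+(q-1)(\alpha_{i+1}+\dots+\alpha_{k-1})=\alpha_k-(k-i)$.
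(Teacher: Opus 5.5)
Your proof is correct and follows the paper's argument. You reduce a representation of $z$ with at most $s$ nonzero summands via Proposition~\ref{prop:reduced representation}, note via $q\alpha_j=\alpha_{j+1}-1$ that the part without $\alpha_k$ is at most $\alpha_k-1$, so $z\ge\alpha_k$ forces $\bar\nu_k\ge 1$ and hence $z-\alpha_k\in(s-1)\Sa$; you only split off the unreduced case $\nu_k\ge1$ first, and your worry about the proposition's statement is unnecessary, since two indices in $\{2,\dots,k-1\}$ with value $q$ are already impossible by its literal conclusion (the larger one forces the smaller to vanish).
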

\begin{proof}
  Suppose there exist $s,z \in \mathbb{N}$ such that
  $z=\alpha_1\nu_1+\dots+\alpha_k\nu_k\in s\Sa$ and $z-\alpha_k\in s\Sa\setminus (s-1)\Sa$. By Proposition~\ref{prop:reduced representation}, we can assume that there exists an index $i\geq 2$ such that $\nu_i\leq q$, $\nu_{h}=0$ for all $h\in\{1,\dots,i-1\}$, and $\nu_j\leq q-1$ for all $j\in\{i+1,\dots,k-1\}$. At the same time, since $z-\alpha_k\in s\Sa\setminus (s-1)\Sa$, we get $z\geq \alpha_k$. Since \[\nu_2\alpha_2+\dots+\nu_{k-1}\alpha_{k-1}\leq\max_{i\in\{2,\dots,k-1\}}\{q\alpha_i+(q-1)\alpha_{i+1}+\dots+(q-1)\alpha_{k-1}\}<\alpha_k,\] it then follows that $\nu_k\geq 1$. Therefore, $z-\alpha_k\in (s-1)\Sa$, that is in contradiction with our assumption. We conclude that $$\{s\in\N:\exists z\in s\Sa\text{ such that }z-\alpha_k\in s\Sa\setminus (s-1)\Sa\}=\emptyset,$$ and therefore, $\mathrm{E}(\Gamma(\albf))=-\infty$.
\end{proof}

Combining Theorem~\ref{theorem:gimenez} with Lemma~\ref{lemma:AP} and Lemma~\ref{lemma:E}, we finally conclude that $$\CMR(S)=(q-1)(k-2)+1.$$ In the next theorem we show that $S$ is Cohen-Macaulay and, as a consequence, we determine the value of the Hilbert regularity of $S$.
\begin{theorem}\label{theorem:CMregMonomialcurve}
    With the notation above, the Hilbert regularity of the homogeneous coordinate ring $S$ of $X_q$ is \[\HR(S)=(q-1)(k-2).\]
\end{theorem}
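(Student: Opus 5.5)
The plan is to pass from the Castelnuovo--Mumford regularity $\CMR(S)=(q-1)(k-2)+1$, already obtained from Theorem~\ref{theorem:gimenez} together with Lemma~\ref{lemma:AP} and Lemma~\ref{lemma:E}, to the Hilbert regularity. The bridge is the standard fact that for a one-dimensional Cohen--Macaulay standard graded ring the Hilbert regularity equals $\CMR(S)-\dim S+1=\CMR(S)-1$; here $\dim S=2$ (Krull dimension of the cone over a curve), so with depth $2$ one obtains $\HR(S)=\CMR(S)-1=(q-1)(k-2)$. More concretely: if $S$ is Cohen--Macaulay of dimension $2$, then the $h$-polynomial of $S$ has nonnegative coefficients and degree exactly $\CMR(S)$ (the regularity equals the degree of the $h$-polynomial in the CM case). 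Writing $\HF_S(d)=\sum_i h_i\binom{d-i+1}{1}$ shows that $\HF_S(d)=\HP_S(d)$ exactly for $d\geq \deg h-1$, and fails at $d=\deg h-2$ because the last term contributes $h_{\deg h}\neq 0$ there.

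\textbf{Step 1: Cohen--Macaulayness.} For a projective monomial curve, $S$ is Cohen--Macaulay iff the condition $\mathrm{E}(\Gamma(\albf))=-\infty$ holds. Equivalently, in the Goto--Suzuki--Watanabe/Trung--Hoa criterion, $z\in\Gamma$ with $z+\alpha_k$ and $z+$ (the shift in the $\bebf$ direction) in the semigroup of the cone forces $z$ in the cone semigroup. The cited work \cite{gimenez2023castelnuovo} presents $\mathrm{E}$ precisely as the obstruction to being Cohen--Macaulay, so I will invoke that remark (or the Trung--Hoa criterion) together with Lemma~\ref{lemma:E}. If a self-contained argument is preferred, I would show directly that $x_1,x_k$ form a regular sequence on $S$. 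Via the monomial description, $S\cong\Fm[t^{(\alpha_i,\beta_i)}]$, and it suffices to show that $x_k\cdot f\in (x_1)$ implies $f\in(x_1)$. Translated to exponents, this says $z-\alpha_k\in s\Sa$, $z\in(s+1)\Sa$ forces $z-\alpha_k\in$ the appropriate sumset, which is exactly the content of the proof of Lemma~\ref{lemma:E}. This is the step I expect to be the main obstacle: it requires matching the paper's sumset conventions to the regular-sequence condition.

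\textbf{Step 2: Hilbert series.} With $S$ Cohen--Macaulay of dimension $2$ and $x_1,x_k$ a regular sequence of linear forms, the quotient $\bar S=S/(x_1,x_k)$ is Artinian. Its Hilbert function is $h_d=\HF_{\bar S}(d)$, and $\CMR(S)=\max\{d: h_d\neq 0\}=(q-1)(k-2)+1$. One can double check this by noting that $h_d$ counts elements of $d\Sa$ not in $(d-1)\Sa$ nor in $\alpha_k+(d-1)\Sa$, whose top degree is exactly $\mathrm{AP}$.

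\textbf{Step 3: Conclusion.} The Hilbert series is $H_S(t)=h(t)/(1-t)^2$, so for $d\geq0$
\[\HF_S(d)=\sum_{i=0}^{r}h_i\,(d-i+1)_+ ,\qquad r=(q-1)(k-2)+1,\]
whereas $\HP_S(d)=\sum_i h_i(d-i+1)$. The two agree iff $d-i+1\geq0$ for all $i$ with $h_i\neq0$ and $d-i+1<0$ contributes, i.e.\ exactly when $d\geq r-1$. Since $h_r\neq0$, the equality fails at $d=r-2$, as $\HP_S(r-2)-\HF_S(r-2)=-h_r\neq0$. Hence $\HR(S)=r-1=(q-1)(k-2)$.
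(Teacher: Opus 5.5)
Your argument is correct. It has the same skeleton as the paper's proof: show that $S$ is Cohen--Macaulay, then pass from $\CMR(S)=(q-1)(k-2)+1$ to $\HR(S)$ using depth $2$. Both steps, however, are carried out differently.

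For Cohen--Macaulayness, the paper applies the Goto--Suzuki--Watanabe criterion $\Gamma'(\albf,\bebf)=\Gamma(\albf,\bebf)$ and checks it by rerunning the reduced-representation argument of Proposition~\ref{prop:reduced representation}. You instead use that $x_1,x_k$ is a homogeneous system of parameters of the semigroup ring, so $S$ is Cohen--Macaulay iff $x_k$ is a nonzerodivisor on $S/x_1S$. Since every multidegree piece of $S/x_1S$ is at most one-dimensional, this can be tested on monomials, where it becomes exactly the emptiness of the set defining $\mathrm{E}(\Gamma(\albf))$. So Cohen--Macaulayness is already the content of Lemma~\ref{lemma:E}, and your claim that $\mathrm{E}=-\infty$ characterizes the Cohen--Macaulay property is right.

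Fix the indices in that translation, though. For a monomial $f$ of degree $s$ with $\albf$-exponent $w\in s\Sa$, the condition $x_kf\in(x_1)$ means $w+\alpha_k\in s\Sa$, not $(s+1)\Sa$. The condition $f\notin(x_1)$ means $w\notin(s-1)\Sa$. With $z=w+\alpha_k$, a zero divisor therefore corresponds precisely to $z\in s\Sa$ and $z-\alpha_k\in s\Sa\setminus(s-1)\Sa$.

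For the final step, the paper simply cites $\HR(S)=1-\mathrm{depth}(S)+\CMR(S)$ from Eisenbud. You instead derive it from $h(t)/(1-t)^2$: $h_i\geq 0$ excludes cancellation, and $h_r\neq 0$ shows that $\HF_S$ and $\HP_S$ differ at $d=r-2$.

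Your route is more self-contained. Your observation that $h_d$ counts the elements of $d\Sa$ lying neither in $(d-1)\Sa$ nor in $\alpha_k+(d-1)\Sa$ even gives $\CMR(S)=\deg h=\mathrm{AP}(\Gamma(\albf))$ directly, bypassing Theorem~\ref{theorem:gimenez}. The paper's route has the merit of resting on standard, citable criteria from the literature.
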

\begin{proof}
    Let $\Gamma(\albf,\bebf)$ be the semigroup generated by the pairs $(\alpha_1,\beta_1),\dots,(\alpha_k,\beta_k)$. Consider the group
    $$G\coloneqq\left\{(n_1,n_2)\in\Z^2:n_1+n_2\equiv0\mod{\frac{q^{k-1}-1}{q-1}}\right\}$$ and the semigroup
    \begin{equation*}
        \Gamma'(\albf,\bebf)\coloneqq \left\{(n_1,n_2)\in G:(n_1,n_2)+\left(\frac{q^{k-1}-1}{q-1},0\right),(n_1,n_2)+\left(0,\frac{q^{k-1}-1}{q-1}\right)\in \Gamma(\albf,\bebf)\right\}.
    \end{equation*}
    In~\cite[Theorem 2.6]{goto1976affine} it was proved that $S$ is Cohen-Macaulay if and only if $\Gamma'(\albf,\bebf)\setminus \Gamma(\albf,\bebf)=\emptyset$, see also~\cite[Theorem 1.1]{reid2005non}.
    Let $(n_1,n_2)\in \Gamma'(\albf,\bebf)$. By definition
    \begin{equation*}
        \left(n_1+\frac{q^{k-1}-1}{q-1},n_2\right)=\sum_{i=1}^{k} \nu_i(\alpha_i,\beta_i)\in \Gamma(\albf,\bebf),
    \end{equation*} for some $\nu_i\geq0$. By Proposition~\ref{prop:reduced representation}, there exist $\bar \nu_2,\dots, \bar \nu_k$ such that 
    $$n_1+\frac{q^{k-1}-1}{q-1}=\bar \nu_2\alpha_2+\dots+\bar \nu_k\alpha_k,$$ $\sum_{i\geq2} \nu_i\geq\sum_{i\geq2}\bar \nu_i$, $\bar \nu_i\leq q$ for all $i\in\{2,\dots,k-1\}$, and such that $\bar \nu_i=q$ implies $\bar \nu_j=0$ for all $j\in\{2,\dots,i-1\}$. Then we have
    \begin{equation*}
        n_2=\left(\sum_{i=1}^{k}\nu_i\right)\frac{q^{k-1}-1}{q-1}-\sum_{i=2}^{k}\nu_i \alpha_i= \left(\sum_{i=1}^{k}\nu_i\right)\frac{q^{k-1}-1}{q-1}-\sum_{i=2}^{k}\bar \nu_i \alpha_i=\sum_{i=1}^k\bar\nu_i\beta_i,
    \end{equation*}
    where $\bar\nu_1=\sum_{i=1}^{k}\nu_i-\sum_{i=2}^{k}\bar\nu_i$.
    As a consequence, we get
    \begin{equation} \label{eq: elt repr}
        \left(n_1+\frac{q^{k-1}-1}{q-1},n_2\right)=\sum_{i=1}^{k} \bar\nu_i(\alpha_i,\beta_i)\in \Gamma(\albf,\bebf).
    \end{equation}
    If $\bar\nu_k=0$, we obtain from~\eqref{eq: elt repr} that
    \begin{align} \label{eq: semigroups}
        n_1+\frac{q^{k-1}-1}{q-1}&\leq\max_{i\in\{2,\dots,k-1\}} \left\{q\,\alpha_i+\sum_{j=i+1}^{k-1}(q-1)\alpha_j\right\} \nonumber \\
        &=\max_{i\in\{2,\dots,k-1\}} \left\{q\frac{q^{i-1}-1}{q-1}+\sum_{j=i+1}^{k-1}(q-1)\frac{q^{j-1}-1}{q-1}\right\} \nonumber \\
        &=\max_{i\in\{2,\dots,k-1\}} \left\{(q+\dots+q^{i-1})+\big(q^i+\dots+q^{k-2}-(k-1-i)\big)\right\} \nonumber \\
        &=\max_{i\in\{2,\dots,k-1\}} \left\{\frac{q^{k-1}-1}{q-1}-k+i\right\} = \frac{q^{k-1}-1}{q-1}-1.
    \end{align}
    Since $(n_1,n_2)\in \Gamma'(\albf,\bebf)$ implies $n_1+\frac{q^{k-1}-1}{q-1}\in\Gamma(\albf)$ and all the elements in $\albf$ are non-negative, it must be that $n_1\geq0$, which contradicts~\eqref{eq: semigroups}. Hence, $\bar\nu_k>0$ and we immediately derive from~\eqref{eq: elt repr} that \[(n_1,n_2)=\sum_{i=1}^{k-1} \bar\nu_i(\alpha_i,\beta_i)+(\bar\nu_k-1)(\alpha_k,\beta_k)\in \Gamma(\albf,\bebf).\]
    Therefore, we conclude that $\Gamma(\albf,\bebf)=\Gamma'(\albf,\bebf)$, and $S$ is Cohen-Macaulay. In particular, the depth of $S$ is equal to $2$ and by~\cite[Corollary 4.8]{eisenbud2005geometry} we conclude that
    \begin{equation*}
        \HR(S)=1-\mathrm{depth}(S)+\CMR(S)=(q-1)(k-2). \qedhere
    \end{equation*}
\end{proof} \smallskip

It remains to determine the Hilbert regularity of the homogeneous coordinate ring $S_{\Fm}$. To this end, we develop an analogue of Theorem~\ref{theorem:gimenez} in the context of finite abelian groups. \\ Denote by $\Z/(q^m-1)\Z$ the cyclic group of order $q^m-1$. We set $$T_1\coloneqq\left\{1, \frac{q^2-1}{q-1},\dots,\frac{q^{k-2}-1}{q-1}\right\}\subset\Z/(q^m-1)\Z.$$ For all $d\geq2$, we recursively define the set $T_d\subseteq\Z/(q^m-1)\Z$ as \[T_d\coloneqq T_{d-1}\cup (T_{d-1}+1)\cup \left(T_{d-1}+\frac{q^2-1}{q-1}\right)\cup\dots\cup\left(T_{d-1}+\frac{q^{k-1}-1}{q-1}\right).\] Note that $T_d$ does not equal $d\Sa$ modulo $q^m-1$ since $T_1$ is obtained from $\Gamma(\albf)$ by removing $0$ and $\frac{q^{k-1}-1}{q-1}$.
\begin{proposition}\label{prop:regTd}
    For all $d\geq1$, it holds that
    \begin{equation*}
        \HF_{S_{\Fm}}(d)=\lvert T_d\rvert+2,
    \end{equation*}
    and
    \begin{equation*}
        \HR(S_{\Fm})=\min\left\{d:\lvert T_d\rvert=q^{m}-1\right\}.
    \end{equation*}
\end{proposition}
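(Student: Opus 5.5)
The plan is to compute $\HF_{S_{\Fm}}(d)$ as the number of distinct evaluation vectors of degree-$d$ monomials on $X_q(\Fm)$. By Lemma~\ref{lemma:fbasis}, $I(X_q(\Fm))$ is spanned by binomials. Hence $(S_{\Fm})_d$ has a basis given by the classes of degree-$d$ monomials modulo the equivalence ``$x^{\gabf_1}\equiv x^{\gabf_2}$ iff $x^{\gabf_1}-x^{\gabf_2}$ vanishes on $X_q(\Fm)$''. So $\HF_{S_{\Fm}}(d)$ equals the number of such classes, which is the same as the number of distinct functions $X_q(\Fm)\to\Fm$ induced by degree-$d$ monomials, after normalizing the representative of each point.

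The first step is to sort monomials by their behaviour at $P_0=[1:0:\cdots:0]$ and $P_\infty=[0:\cdots:0:1]$. A degree-$d$ monomial is nonzero at $P_0$ only for $x_1^d$, and nonzero at $P_\infty$ only for $x_k^d$. Both of these are nonzero at every point $[N_0(a):\cdots:N_{k-1}(a)]$ with $a\neq0$.

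Every other monomial $x^{\gabf}$ vanishes at $P_0$ and $P_\infty$. On the points with $a\in\Fm^*$, using the affine representatives, it evaluates to $a^{\langle\gabf,\albf\rangle}$. It therefore depends only on $\langle\gabf,\albf\rangle\bmod (q^m-1)$. Distinct residues give distinct functions on $\Fm^*$, because the characters $a\mapsto a^e$ for $e\in\Z/(q^m-1)\Z$ are pairwise distinct.

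At this point some care with the points $a=0$ and $P_\infty$ is needed. The point $a=0$ is exactly $P_0$, since $N_i(0)=0$ for $i\geq1$. A monomial that is not a pure power of $x_1$ or $x_k$ involves either some $x_i$ with $2\le i\le k-1$, or both $x_1$ and $x_k$. In both cases it vanishes at $P_0$ and $P_\infty$. So the classes are: $\{x_1^d\}$, $\{x_k^d\}$, and one class for each residue in the set $R_d$ of values $\langle\gabf,\albf\rangle\bmod(q^m-1)$ taken by the remaining monomials. Moreover, $x_1^d$ and $x_k^d$ differ from each other and from all the others, since they alone are nonzero at $P_0$, respectively $P_\infty$. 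This gives $\HF_{S_{\Fm}}(d)=|R_d|+2$.

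The key step is to show $R_d=T_d$, by induction on $d$. For $d=1$, the admissible variables are $x_2,\dots,x_{k-1}$, with $\alpha$-values $1,\frac{q^2-1}{q-1},\dots,\frac{q^{k-2}-1}{q-1}$; this is exactly $T_1$. For $d\ge2$, I would argue by cases on an admissible monomial of degree $d$:
\begin{itemize}
\item If it contains some $x_i$ with $2\le i\le k-1$, I remove a different variable $x_\ell$ when possible. The quotient still contains $x_i$, so it is admissible of degree $d-1$, and the residue lies in $T_{d-1}+\alpha_\ell$. If instead the monomial is a power of $x_i$ alone, I remove $x_i$, and the residue lies in $T_{d-1}+\alpha_i$.
\item If it involves only $x_1$ and $x_k$, with both present, its residue is $c\,\alpha_k$ for some $1\le c\le d-1$. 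This is the delicate case: I would show $c\alpha_k\in T_{d-1}+\alpha_k$ or $T_{d-1}+0$, by writing $c\alpha_k$ with $c=1$ as $\alpha_{k-1}+(q-1)\cdot$ stuff. More simply, I would note that $x_1^{d-c}x_k^{c}$ and a suitable monomial containing middle variables agree modulo $q^m-1$ via the relation $\alpha_k=q\alpha_{k-1}+1$. Failing this, I would treat $x_1x_k$ as the base contribution $\alpha_k$ and check directly that $\alpha_k=\alpha_2+q\alpha_{k-1}$ can be reached.
\end{itemize}
Conversely, every element of $T_{d-1}+\alpha_\ell$ comes from an admissible monomial of degree $d-1$ multiplied by $x_\ell$, which is again admissible. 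This converse inclusion is immediate. The mixed $x_1,x_k$ case in the forward inclusion is where I expect the main obstacle, and I may need the degree condition $d\ge2$ explicitly.

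For the regularity statement: since $X_q(\Fm)$ is finite with $q^m+1$ points, the Hilbert polynomial is the constant $q^m+1$, and $\HF_{S_{\Fm}}$ is non-decreasing. So the Hilbert regularity is the first $d$ with $\HF_{S_{\Fm}}(d)=q^m+1$, that is, with $|T_d|=q^m-1$. This uses that $T_d\subseteq T_{d+1}$ and that $T_d$ stabilizes once it is the whole group.
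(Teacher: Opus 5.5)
Your reduction is correct up to the identification $R_d=T_d$. The evaluation argument, together with the fact that only $x_1^d$ is nonzero at $P_0$ and only $x_k^d$ at $P_\infty$, gives $\HF_{S_{\Fm}}(d)=|R_d|+2$. Here $R_d$ is exactly the image of the map $f_d$ in the paper's proof. The step you flag as delicate, however, cannot be completed, because it is false. Unwinding the recursion, $T_d$ is precisely the set of residues of degree-$d$ monomials containing at least one of $x_2,\dots,x_{k-1}$. Hence $R_d=T_d\cup\{c\alpha_k \st 1\le c\le d-1\}$, and the residues $c\alpha_k$ of the mixed monomials $x_1^{d-c}x_k^{c}$ are in general not in $T_d$.

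Already $d=2$ fails. As integers, the elements of $T_2$ are of three kinds. The first two are $\alpha_i$ and $\alpha_i+\alpha_j$ with $2\le i\le j\le k-1$, all at most $2\alpha_{k-1}<q\alpha_{k-1}+1=\alpha_k$. The third is $\alpha_i+\alpha_k>\alpha_k$. All of them are smaller than $q^m-1$ because $k\le m$, so no reduction occurs. Hence the residue $\alpha_k$ of $x_1x_k$ is not in $T_2$, and $\HF_{S_{\Fm}}(2)=|T_2|+3$. Concretely, for $q=2$, $k=m=4$ one has $\albf=(0,1,3,7)$ and $T_2=\{1,2,3,4,6,8,10\}$, while $\HF_{S_{\F_{16}}}(2)=\binom{5}{2}=10\neq 9$. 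At $d=3$ the residue $14$ of $x_1x_4^2$ is likewise missing from $T_3$, giving $16\neq 15$.

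Your proposed repairs cannot work in general. Since $\alpha_j\equiv 1\pmod q$ for $j\ge 2$ and every $\alpha_j\le\alpha_k$, any expression of the integer $c\alpha_k$ as a sum of nonzero $\alpha_j$'s using some $\alpha_i$ with $2\le i\le k-1$ has a number of summands $\equiv c \pmod q$ and strictly larger than $c$. It therefore has at least $c+q$ summands. So identities like $\alpha_k=\alpha_2+q\alpha_{k-1}$ only place $c\alpha_k$ in $T_d$ once $d\ge c+q$, unless reduction modulo $q^m-1$ intervenes, which it does not for small $d$.

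The paper's proof has the same hole: it asserts $|\{x^{\gabf_a}\st a\in\mathrm{Im}(f_d)\}|=|T_d|$ without justification. So the first formula of the statement is false as written. What your argument actually proves is the corrected identity $\HF_{S_{\Fm}}(d)=\bigl|T_d\cup\{\alpha_k,2\alpha_k,\dots,(d-1)\alpha_k\}\bigr|+2$. Your last paragraph then correctly gives $\HR(S_{\Fm})=\min\{d\st |R_d|=q^m-1\}$.

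Replacing $R_d$ by $T_d$ in this regularity formula needs an extra argument that neither you nor the paper supplies. From $T_d\subseteq R_d$ and $R_{d-1}+\alpha_2\subseteq T_d$ (multiply by $x_2$), the two minima differ by at most one. They coincide exactly when, at the threshold degree $d$, every residue $c\alpha_k$ with $1\le c\le d-1$ already lies in $T_d$, and this has to be verified.
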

\begin{proof}
    Recall that, from \cite[Chapter~9, Proposition~9]{CLO25}, for any degree compatible monomial order $\leq$, the Hilbert regularity of $S_{\Fm}$ is equal to the Hilbert regularity of $\Fm[x_1,\dots,x_k]/\mathrm{in}_{\leq}(I(X(\Fm)))$, where $\mathrm{in}_{\leq}(I(X(\Fm)))$ is the initial ideal of $I(X(\Fm))$. By Lemma~\ref{lemma:fbasis}, $\mathrm{in}_{\leq}(I(X_q(\Fm)))$ is generated by all monomials $x^{\gabf}$ of degree $d$ for which there exists a monomial $x^{\bm{\delta}}$ of degree $d$ such that $\gabf\geq\bm{\delta}$, $\langle\gabf,\albf\rangle\equiv\langle\bm{\delta},\albf\rangle\pmod{q^{m}-1}$, and such that $x^{\gabf}-x^{\bm{\delta}}$ vanishes on $P_0$ and $P_{\infty}$.  Therefore, to compute $\mathrm{HF}_{S_{\Fm}}(d)$, we need to determine the number of monomials of degree $d$ for which one of these properties does not hold. For any $d\geq 1$ consider the set
    \begin{equation*}
        Z_d\coloneqq\{\gabf\in\Z_{\geq0}^k:\lvert\gabf\rvert=d\}\setminus\{(d,0,\dots,0),(0,\dots,0,d)\}.
    \end{equation*}
    and the function $f_d:Z_d\longrightarrow \Z/(q^m-1)\Z,\gabf\longmapsto\langle\gabf,\albf\rangle$. Note that for every element $a$ in the image of $f_d$ there exists only one pre-image $\gabf_a$ in $Z_d$ that is minimal according to the order that has been fixed. Then,
    \begin{equation*}
        \{x^{\gabf_a}:a\in\mathrm{Im}(f_d)\}\cup\{x_1^d,x_k^d\}
    \end{equation*}
    is the set of monomials of degree $d$ that generate $(S_{\Fm})_d$. Since $$\lvert \{x^{\gabf_a}:a\in\mathrm{Im}(f_d)\}\rvert=\lvert T_d\rvert,$$ we obtain $\HF_{S_{\Fm}}(d)=\lvert T_d\rvert+2$. \\ In conclusion, note that $d\in T_d$ for any positive integer $d$, implying $T_d=\Z/(q^m-1)\Z$ for $d$ large enough. Recalling that $\lvert X_q(\Fm)\rvert=q^m+1$, we finally get 
    \begin{equation*}
        \HR(S_{\Fm})=\min\left\{d:\lvert T_d\rvert=q^{m}-1\right\}.\qedhere
    \end{equation*}
\end{proof}
\begin{proposition}\label{prop:hreg0dimension}
    Let $q,k,m$ be such that 
    \begin{equation}\label{eq:hypothesis}
         \left\lfloor\frac{(q^m-1)(q-1)}{q^{k-1}-1}\right\rfloor\geq(q-1)(k-2).
     \end{equation}
     Then, the Hilbert regularity of the homogeneous coordinate ring $S_{\Fm}$ of $X_q(\Fm)$ is \[\HR(S_{\Fm})=\left\lfloor\frac{(q^m-1)(q-1)}{q^{k-1}-1}\right\rfloor+(q-1)(k-2)+\delta,\]
     where $\delta$ can be either $0$ or $1$.
\end{proposition}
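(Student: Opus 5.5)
The plan is to use Proposition~\ref{prop:regTd}, which reduces the claim to finding the least $d$ with $T_d=\Z/(q^m-1)\Z$. Unwinding the recursion gives $T_d=T_1+(d-1)\Sa \pmod{q^m-1}$, because adding $0=\alpha_1$ and $\alpha_k$ is allowed from the second step on. So the problem is to understand which integers lie in $T_1+(d-1)\Sa$, and when their residues exhaust $\Z/(q^m-1)\Z$.

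The first step is a uniform bound on the number of summands needed for an integer. Write $N\coloneqq q^m-1$ and take $z\geq 0$. Divide $z=j\alpha_k+r$ with $0\le r<\alpha_k$. By Proposition~\ref{prop:reduced representation}, $r$ has a representation with $\nu_k=0$, $\nu_i\le q$, and the normalisation condition on the $\nu_i$. The computation in \eqref{eq: semigroups} shows such a representation uses at most $(q-1)(k-2)+1$ summands. Hence $z\in s\Sa$ for all $s\ge j+(q-1)(k-2)+1$.

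Lemma~\ref{lemma:AP} and the witness $r^\ast=\alpha_k-(k-2)$ from its proof show this bound is sharp for $r^\ast$. Indeed, $r^\ast+j\alpha_k$ needs exactly $j+(q-1)(k-2)+1$ summands. This is where Lemma~\ref{lemma:E} enters: it excludes a cheaper representation obtained by trading copies of $\alpha_k$, since the argument there shows any representation of a number $\ge\alpha_k$ can be reduced to one containing $\alpha_k$.

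The upper bound then follows. Since $T_1\ni 1$ and $T_1\subseteq[1,\alpha_{k-1}]$, the set $T_1+(d-1)\Sa$ contains every integer in an interval of the form $[1,(d-1-(q-1)(k-2))\alpha_k+c]$, for a small constant $c$ coming from the elements of $T_1$. Once this interval has length at least $N$, all residues are covered. This happens as soon as $d-1-(q-1)(k-2)\ge\lfloor N(q-1)/(q^{k-1}-1)\rfloor$, up to a correction of one depending on $c$ and on $N\bmod\alpha_k$; that correction is exactly the source of $\delta\in\{0,1\}$. This gives $\HR(S_{\Fm})\le\lfloor N/\alpha_k\rfloor+(q-1)(k-2)+1$.

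For the lower bound, I would exhibit a residue $\rho$ missing from $T_d$ when $d=\lfloor N/\alpha_k\rfloor+(q-1)(k-2)-1$. The natural candidate is $\rho\equiv j\alpha_k+r^\ast$ with $j$ chosen so that $j\alpha_k+r^\ast$ lies just below $N$. The difficulty is wraparound: $\rho$ is also represented by $\rho+N$, $\rho+2N$, and so on, and also by $\rho-N$ if that is nonnegative, and any of these might need fewer summands. My first attempt is to bound the summand count of each representative $\rho+\ell N$ from below using the division-by-$\alpha_k$ structure, that is, count $\lfloor(\rho+\ell N)/\alpha_k\rfloor$ plus the cost of the remainder. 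Here hypothesis~\eqref{eq:hypothesis} should be exactly what makes $\lfloor N/\alpha_k\rfloor$ dominate $(q-1)(k-2)$, so that adding $N$ always costs more than the saving from a cheaper remainder.

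I expect this wraparound analysis to be the main obstacle. Because $N$ and $\alpha_k$ interact arithmetically, I would not try to pin down $\delta$; I would only show the answer is one of the two consecutive values.
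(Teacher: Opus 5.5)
Your proposal is correct and follows essentially the paper's route: reduction to $T_d$ via Proposition~\ref{prop:regTd}, the split $z=\lfloor z/\alpha_k\rfloor\alpha_k+r$ with the reduced representation of $r$ costing at most $(q-1)(k-2)+1$ summands, the witness $r^\ast=\alpha_k-(k-2)$ from Lemma~\ref{lemma:AP}, and hypothesis~\eqref{eq:hypothesis} to exclude wrap-around savings; the only difference is that the paper packages this as $\HR(S_{\Fm})=\max\{\Sigma_z : z\le q^m-1\}$ rather than as separate upper and lower bounds. The wrap-around step you leave open (the paper only asserts it) closes directly for $z_0=(J-1)\alpha_k+r^\ast$: write $q^m-1=J\alpha_k+\rho$ with $0\le\rho<\alpha_k$, so that $z_0+(q^m-1)=(2J-1)\alpha_k+(r^\ast+\rho)$ with $r^\ast+\rho>0$, hence either a carry or a nonzero remainder adds a summand and this representative needs at least $2J$ summands, while $z_0+\ell(q^m-1)$ for $\ell\ge2$ needs at least $3J-1\ge 2J$; by~\eqref{eq:hypothesis}, $2J\ge J+(q-1)(k-2)$ exceeds your $d=J+(q-1)(k-2)-1$, whereas the bare floor bound $2J-1$ would fail exactly when $J=(q-1)(k-2)$.
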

\begin{proof}
    By Proposition~\ref{prop:reduced representation} there exist non-negative integers $\nu_1,\dots,\nu_k$ with minimal sum $\Sigma_z:=\nu_1+\dots+\nu_k$ such that $z=\nu_1\alpha_1+\cdots+\nu_k\alpha_k$, $\nu_j\leq q$ for all $j\in\{2,\dots,k-1\}$, and such that $\nu_i=q$ implies $\nu_j=0$ for $j<i$. Since
    \begin{align*}
        \nu_2\alpha_2&+\cdots+\nu_{k-1}\alpha_{k-1}\leq \max_{2\leq i\leq k-1} q\alpha_i+(q-1)(\alpha_{i+1}+\dots+\alpha_{k-2})\\
        &=\max_{2\leq i\leq k-1} q+\dots+q^{i-1}+(q^{i}-1)+\dots+(q^{k-2}-1)< \frac{q^{k-1}-1}{q-1}=\alpha_k,
    \end{align*}
    we get $\nu_k=\left\lfloor\frac{z}{\alpha_k}\right\rfloor$. Moreover, by~\eqref{eq:hypothesis}, we obtain that $\sum_z<\sum_{z+l(q^{m}-1)}$ for every positive integer $l$, see also Remark~\ref{remark:l>0}. As a consequence,  by Proposition~\ref{prop:regTd}, $\HR(S_{\Fm})$ can be expressed as
    \begin{equation}\label{eq:Xdsumz}
        \HR(S_{\Fm})=\min\left\{d:\lvert T_d\rvert=q^{m}-1\right\}=\max\{\Sigma_z\st z\leq {q^m-1}\}.
    \end{equation}
    Since the largest number that we want to represent is $q^m-1$, it follows that $\nu_k$ is going to be at most $\left\lfloor\frac{q^m-1}{\alpha_k}\right\rfloor$. Let $\bar z$ be the integer in $\left\{\left\lfloor\frac{q^m-1}{\alpha_k}\right\rfloor\alpha_k,\dots,q^m-1\right\}$ with maximum sum of the coefficients, and let $r\in\{0,\dots,\alpha_k-1\}$ be such that $\bar z=\left\lfloor\frac{q^m-1}{\alpha_k}\right\rfloor\alpha_k+r$. In particular, as we have seen above, it holds that $\Sigma_r\in\{0,\dots,(q-1)(k-2)+1\}$, which implies
    \begin{gather*}
        \max\{\Sigma_z\st z\in\Z/(q^m-1)\Z\} \\ \verteq \\[-10pt] \max\left\{\left\{\Sigma_{\bar z} \st\left\lfloor\frac{q^m-1}{\alpha_k}\right\rfloor\alpha_k\leq\bar z \leq q^m-1\right\},\left\{\Sigma_{\bar z} \st 0\leq\bar z<\left\lfloor\frac{q^m-1}{\alpha_k}\right\rfloor\alpha_k\right\}\right\} \\[-4pt] \verteq \\[-4pt] \max\left\{\left\lfloor\frac{q^m-1}{\alpha_k}\right\rfloor+\Sigma_r\,,\left\lfloor\frac{q^m-1}{\alpha_k}\right\rfloor-1+(q-1)(k-2)+1\right\} \\ \verteq \\[-10pt] \left\lfloor\frac{q^m-1}{\alpha_k}\right\rfloor+(q-1)(k-2)+\delta,
    \end{gather*}
    where $\delta$ is either $0$ or $1$. Hence, we conclude by~\eqref{eq:Xdsumz}.
\end{proof}
\begin{remark}\label{remark:l>0}
    Condition~\eqref{eq:hypothesis} in the statement of Proposition~\ref{prop:hreg0dimension} ensures that $\sum_{z+l(q^{m}-1)}>\sum_z$ for every $l>0$ and $z\in\N$. In general, this may not hold if \begin{equation*}
         \left\lfloor\frac{(q^m-1)(q-1)}{q^{k-1}-1}\right\rfloor\ll(q-1)(k-2).
     \end{equation*}
     In particular, for certain parameter choices, there may exist $l$ such that $\sum_{\bar z+l(q^{m}-1)}<\sum_{\bar z}$, where $\bar z\in\N$ achieves the maximum in~\eqref{eq:Xdsumz}. If this is the case, then the value of the Hilbert regularity should drop further. However, identifying the exact conditions under which this occurs is not straightforward. Since this behavior happens only rarely, we assumed Condition~\eqref{eq:hypothesis} in order to simplify both the statement and the proof of the proposition.
\end{remark}
\begin{remark}
    As mentioned above, we remark that the results of this section can be further generalized to the case $j>1$. More precisely, one may analogously study the $(q^m/q)$-Hilbert function of the homogeneous coordinate rings \[S\coloneqq\Fm[x_1,\dots,x_k]/I(X_{q^j}) \ \text{ and } \ S_{\Fm}\coloneqq\Fm[x_1,\dots,x_k]/I(X_{q^j}(\Fm)).\] To this end, one first needs to adapt the definition of the matrix $A$ in~\eqref{eq: matrix A} that allows to see $I(X_{q^j})$ as the span of an $\Fm$-vector space. Once this is done, the following results can be generalized accordingly.
\end{remark}

\subsection{Linearized Reed--Solomon Codes with an Arbitrary Number of Blocks} \label{LRS codes with an arbitrary number of blocks}
The analysis carried out in the previous subsection can be applied analogously to study the $(q^m/q)$-Hilbert sequence of a (doubly-extended) linearized Reed--Solomon code $\mC$ with $t<q+1$ blocks of maximal length. In this case, one considers the Hilbert sequence of the homogeneous coordinate ring associated with the set of projective points $\bigsqcup_{i=1}^t \mL_{\mU_i}\subset X_q(\Fm)$ and adapts the final results of the previous subsection to determine the corresponding Hilbert regularity. In particular, if $\mC$ is a standard linearized Reed--Solomon code, by Theorem~\ref{theorem: linear set of lrs}, for every $i \in \{1,\dots,t\}$ we have $\mL_{\mU_i} = \mV_{a_i}(\Fm)$ for some $a_i \in \Fm^*$. Moreover, by Proposition~\ref{prop: H seq and CM reg}, we already know that the final stable value reached by the $(q^m/q)$-Hilbert sequence of $\mC$ is $\lvert \, \bigsqcup_{i=1}^t \mV_{a_i}(\Fm)\, \rvert = t\frac{q^m-1}{q-1}$. The doubly-extended case is analogous. The only difference is that $\mL_{\mU_{t-1}} = P_0$ and $\mL_{\mU_t} = P_\infty$, so that the final stable dimension becomes $(t-2)\frac{q^m-1}{q-1}+2$.

By further restricting to a single block, the same strategies presented here can be adapted to study the $(q^m/q)$-Hilbert sequence of a $k$-dimensional (generalized) Gabidulin code $\mC$ over $\Fm^m$. In particular, up to a change of basis, one can assume that $\mC$ is generated by a matrix    
\[G=\left(\begin{array}{cccc}
    1 & & & \\
    & N_1(a) & & \\
    &  & \ddots & \\
    & & & N_{k-1}(a) \end{array}\right) \ \left(\begin{array}{ccc}
    b_1 & \cdots & b_m\\
    \sigma(b_1) & \cdots & \sigma(b_m) \\
    \vdots & & \vdots \\
    \sigma^{k-1}(b_1) & \cdots & \sigma^{k-1}(b_m) \\
\end{array}\right)\] of the form $\eqref{eq: gen matrix LRS}$ above, for some $a\in\Fm^*$ and some $\Fq$-linearly independent elements $b_1,\dots,b_m\in\Fm$. In this case, it follows by Theorem~\ref{theorem: linear set of lrs} that the geometric object associated with $\mC$ is $\mL_G=\mV_a(\Fm)$. Hence, studying the $(q^m/q)$-Hilbert sequence of $\mC$ reduces to studying the Hilbert sequence of $\mV_a(\Fm)$. In this way, the results presented here extend the analysis of the $(q^m/q)$-Hilbert sequence of Gabidulin codes carried out in~\cite{astore2025geometric}. Specifically, for the cases in which $k-1$ divides $m$, we can derive the following result.
\begin{theorem}
    Let $\mC$ be a $k$-dimensional Gabidulin code over $\Fm^m$, where $m$ is a positive multiple of $k-1$. Let also $a\in\Fm^*$ be such that $\mV_a(\Fm)$ is the geometric object associated with $\mC$. Then, the vanishing ideal of $\mV_a(\Fm)$ over $\Fm[x_1,\dots,x_k]$ is given by \[I(\mV_a(\Fm))=I(X_q)+\,(p_a)=\big(\{x_jx_{l-1}^q-x_{j-1}^qx_l\st1\leq j<l\leq k\}\big)+\,(p_a),\] where $(p_a)$ denotes the ideal generated by the polynomial \[p_a(x_1,\dots,x_k):=N_m(a)\,x_1^{\frac{q^m-1}{q^{k-1}-1}}-x_k^{\frac{q^m-1}{q^{k-1}-1}}\in\Fm[x_1,\dots,x_k].\]
\end{theorem}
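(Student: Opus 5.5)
The plan is to prove the two inclusions separately. Throughout, let $e\coloneqq\frac{q^m-1}{q^{k-1}-1}$, which is an integer because $(k-1)\mid m$. Let $J\coloneqq I(X_q)+(p_a)$. I read the displayed generators as the $2\times 2$ minors of Proposition~\ref{prop: determinantal variety}, so that the first summand is exactly $I(X_q)$.

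\emph{Inclusion $J\subseteq I(\mV_a(\Fm))$.} Since $\mV_a(\Fm)\subseteq X_q$, the ideal $I(X_q)$ vanishes on $\mV_a(\Fm)$, and it remains to check $p_a$. At the point $[b:N_1(a)\sigma(b):\cdots:N_{k-1}(a)\sigma^{k-1}(b)]$ we have $x_1=b$ and $x_k=N_{k-1}(a)b^{q^{k-1}}$.
\begin{itemize}
\item Since $(q^{k-1}-1)e=q^m-1$ and $b\in\Fm^*$, we get $b^{q^{k-1}e}=b^{e}b^{q^m-1}=b^e$.
\item Write $e=\sum_{i=0}^{m/(k-1)-1}q^{(k-1)i}$. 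Then
\[N_{k-1}(a)^{e}=\prod_{i}\prod_{j=0}^{k-2}\sigma^{(k-1)i+j}(a)=N_m(a).\]
\end{itemize}
Hence $x_k^e=N_m(a)\,x_1^e$ on $\mV_a(\Fm)$, so $p_a$ vanishes there.

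\emph{Zero set of $J$.} The point $P_\infty$ is excluded because there $x_1=0$ and $x_k\neq0$. The point $P_0$ is excluded because $N_m(a)\neq0$. So $Z(J)=X_q\cap V(p_a)$ consists of points $[1:N_1(c):\cdots:N_{k-1}(c)]$ with $c\in\overline{\F}_{q^m}$ and $N_{k-1}(c)^e=N_m(a)$. Instead of solving this equation, I will count. The curve $X_q$ is irreducible of degree $\alpha_k=\frac{q^{k-1}-1}{q-1}$, and $p_a$ does not vanish identically on it, since $p_a(P_\infty)\neq0$. By Bézout's theorem, the intersection has at most $\alpha_k e=\frac{q^m-1}{q-1}$ points counted with multiplicity. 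But $\mV_a(\Fm)$ already supplies $\frac{q^m-1}{q-1}$ distinct points of this intersection. Therefore $Z(J)=\mV_a(\Fm)$, and every local intersection multiplicity equals $1$.

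\emph{Radicality and saturation; this is the main obstacle.} By Theorem~\ref{theorem:CMregMonomialcurve}, the ring $S=\Fm[x_1,\dots,x_k]/I(X_q)$ is a Cohen--Macaulay domain of depth $2$. The element $p_a$ is nonzero in $S$, so it is a nonzerodivisor on $S$. Consequently $S/p_aS=\Fm[x_1,\dots,x_k]/J$ is Cohen--Macaulay of dimension $1$ and depth $1$. The irrelevant ideal is therefore not an associated prime, which means $J$ is saturated. Its Hilbert polynomial is the constant $\deg(X_q)\cdot e=\frac{q^m-1}{q-1}$, which is the degree of the zero-dimensional scheme defined by $J$. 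This scheme is supported on exactly $\frac{q^m-1}{q-1}$ distinct points, so each local length is $1$ and the scheme is reduced. A saturated ideal defining a reduced scheme is radical. By the Nullstellensatz, $J=I(Z(J))=I(\mV_a(\Fm))$.

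The delicate point is ensuring that no embedded or non-reduced structure hides in $J$. The Cohen--Macaulay property of $S$ established earlier handles saturation, and the multiplicity-one conclusion from the Bézout count handles reducedness. I would double-check that both arguments are applied over $\overline{\F}_{q^m}$ and then descend to $\Fm$, using that forming $I(\cdot)$ of an $\Fm$-rational point set commutes with field extension.
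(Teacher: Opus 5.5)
Your proof is correct and follows the same route as the paper: you check that $I(X_q)$ and $p_a$ vanish on $\mV_a(\Fm)$, then use B\'ezout's bound $\deg(X_q)\deg(p_a)=\frac{q^m-1}{q-1}=\lvert\mV_a(\Fm)\rvert$ to conclude that $X_q\cap V(p_a)$ is exactly $\mV_a(\Fm)$ with every multiplicity equal to one. Your extra step is also worthwhile. Using the Cohen--Macaulayness of $S$ from Theorem~\ref{theorem:CMregMonomialcurve} shows that $I(X_q)+(p_a)$ is saturated, which the paper's passage from ``transversal intersection'' to ``radical ideal'' tacitly needs, since otherwise an embedded component at the irrelevant ideal would not be ruled out.
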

\begin{proof}
    Let $P_{\bar a}=\left(1:\bar a:\dots:\bar a^{\frac{q^{k-1}-1}{q-1}}\right)\in X_q(\Fm)$. Then, by Lemma~\ref{lemma: different norms different sets}, $P_{\bar a}\in\mV_a(\Fm)$ if and only if $N_m(\bar a)=N_m(a)$. Equivalently, since \[N_m(\bar a)=\bar a^{\frac{q^m-1}{q-1}}=\bar a^{\frac{q^{k-1}-1}{q-1}\frac{q^m-1}{q^{k-1}-1}},\] $P_{\bar a}$ is a zero of $p_a(x_1,\dots,x_k)=N_m(a)\,x_1^{\frac{q^m-1}{q^{k-1}-1}}-x_k^{\frac{q^m-1}{q^{k-1}-1}}$, which is well-defined since $k-1$ divides $m$. Denote by $(p_a)$ the ideal generated by $p_a$ over $\Fm[x_1,\dots,x_k]$ and by $V(p_a)$ the variety defined by it. Then, $\mV_a(\Fm)\subseteq X_q(\Fm)\cap V(p_a)\subseteq X_q\cap V(p_a)$. Moreover, by Bézout's theorem, we have \[|\,X_q\cap V(p_a) \,| \leq \deg(X_q)\deg(p_a)= \frac{(q^{k-1}-1)(q^m-1)}{(q-1)(q^{k-1}-1)}=\frac{q^m-1}{q-1}=|\mV_a(\Fm)|.\] As a consequence, $\lvert X_q\cap V(p_a)\rvert=\lvert\mV_a(\Fm)\rvert$. Therefore, the intersection is reduced (\emph{i.e.}, every point has intersection multiplicity one) and is equal to $\mathcal{V}_a(\Fm)$. This implies that the two varieties intersect transversally, and $I(X_q)+(p_a)$ is the radical ideal defining $\mathcal{V}_a(\Fm)$.
\end{proof}
As a direct consequence of the previous theorem and Lemma~\ref{lemma: LRS partition normal curve}, we derive the following.
\begin{corollary} \label{cor: extra polynomials}
  Let $m$ be a positive multiple of $k-1$ and let $\mN^*$ be a set of $t\leq q-1$ elements in $\Fm^*$ with pairwise distinct norms. For all $a\in\mN^*$, let
  \[ p_a(x_1,\dots,x_k) \coloneqq N_m(a)\,x_1^{\frac{q^m-1}{q^{k-1}-1}}-x_k^{\frac{q^m-1}{q^{k-1}-1}}\in\Fm[x_1,\dots,x_k].\] Then, \[I\left(\bigsqcup_{a\in\mN^*}\mV_a(\Fm)\right)=\left(\prod_{a\in\mN^*}p_a\right)+I(X_q).\] In particular, if $t=q-1$, we obtain \[I\big(X_q(\Fm)\setminus\{P_0,P_\infty\}\big)=\left(\prod_{a\in\mN^*}p_a\right)+I(X_q).\]
\end{corollary}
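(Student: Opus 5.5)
The strategy is to mimic the previous theorem, with the single polynomial $p_a$ replaced by the product $P\coloneqq\prod_{a\in\mN^*}p_a$, which has degree $D\coloneqq t\,\frac{q^m-1}{q^{k-1}-1}$.

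\emph{Containment.} First I would check that the ideal $J\coloneqq(P)+I(X_q)$ vanishes on $\bigsqcup_{a\in\mN^*}\mV_a(\Fm)$. The generators of $I(X_q)$ vanish on every point of $X_q$, and each $\mV_a(\Fm)$ lies in $X_q(\Fm)$. By the previous theorem, every point of $\mV_a(\Fm)$ is a zero of $p_a$, hence a zero of $P$. This gives $J\subseteq I\big(\bigsqcup_a\mV_a(\Fm)\big)$.

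\emph{Zero set.} Next I would determine $V(J)=X_q\cap V(P)=\bigcup_{a}\big(X_q\cap V(p_a)\big)$. The previous theorem identifies each $X_q\cap V(p_a)$ with $\mV_a(\Fm)$: it shows these sets are equal and the intersection is transversal. So set-theoretically $V(J)=\bigsqcup_a\mV_a(\Fm)$. The union is disjoint by Lemma~\ref{lemma: different norms different sets}, since the norms are pairwise distinct.

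\emph{Radicality.} It remains to show that $J$ is radical. For this I would repeat the Bézout count with $P$ in place of $p_a$. We have
\[\deg(X_q)\deg(P)=\frac{q^{k-1}-1}{q-1}\cdot t\,\frac{q^m-1}{q^{k-1}-1}=t\,\frac{q^m-1}{q-1}=\Big\lvert\bigsqcup_a\mV_a(\Fm)\Big\rvert.\]
So the intersection of the curve $X_q$ with the hypersurface $V(P)$ consists of exactly $\deg(X_q)\deg(P)$ distinct points. Every local intersection multiplicity is therefore $1$, and the scheme $X_q\cap V(P)$ is reduced.

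To pass from a reduced scheme to a radical ideal, I would argue as the paper does in the previous theorem. The curve $X_q$ is arithmetically Cohen--Macaulay, because $S$ is Cohen--Macaulay by Theorem~\ref{theorem:CMregMonomialcurve}. Also, $P$ is a nonzerodivisor on $S$, since $S$ is a domain and $P\notin I(X_q)$ ($P$ does not vanish at $P_0$). Hence $S/(P)$ is Cohen--Macaulay of dimension one, so it has no embedded component at the irrelevant ideal. Its degree is $\deg(X_q)\deg(P)$, which equals the number of points, so $S/(P)$ is generically reduced. A one-dimensional Cohen--Macaulay ring that is generically reduced is reduced, so $J$ is saturated and radical. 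Its zero set is $\bigsqcup_a\mV_a(\Fm)$, so by the Nullstellensatz (the points are defined over $\Fm$ and we work over the algebraic closure) $J=I\big(\bigsqcup_a\mV_a(\Fm)\big)$.

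\emph{The case $t=q-1$.} Here I would apply Lemma~\ref{lemma: LRS partition normal curve} to $\mN=\mN^*\cup\{0\}$. Since $\mV_0(\Fm)=\{P_0\}$, we get $\bigsqcup_{a\in\mN^*}\mV_a(\Fm)=X_q(\Fm)\setminus\{P_0,P_\infty\}$, which gives the second formula.

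\emph{Main obstacle.} The hard step is the passage from reducedness of the scheme-theoretic intersection to radicality of the non-saturated sum $(P)+I(X_q)$. The Cohen--Macaulay property of $S$ is the tool I would rely on to rule out embedded components.
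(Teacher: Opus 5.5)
Your proof is correct, but it takes a different route from the paper. The paper gives no separate argument. It presents the corollary as a direct consequence of the preceding theorem, $I(\mV_a(\Fm))=I(X_q)+(p_a)$, together with Lemma~\ref{lemma: LRS partition normal curve}.

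Read literally, that route writes $I\bigl(\bigsqcup_a\mV_a(\Fm)\bigr)=\bigcap_a\bigl(I(X_q)+(p_a)\bigr)$ and must then identify this intersection with $I(X_q)+\bigl(\prod_a p_a\bigr)$. This identification is not purely formal. One needs, inductively, that $p_b$ is a nonzerodivisor modulo $I(X_q)+\bigl(\prod_{a\in A}p_a\bigr)$ for $b\notin A$. That holds because this quotient is the reduced coordinate ring of a finite point set, so it has no embedded primes, and the point set is disjoint from $V(p_b)$.

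You bypass the intersection step by rerunning the Bézout count for $P=\prod_a p_a$ itself. You also make explicit the Cohen--Macaulay input from Theorem~\ref{theorem:CMregMonomialcurve}. That input is what lets you pass from reducedness of the scheme $X_q\cap V(P)$ to radicality of the possibly non-saturated ideal $(P)+I(X_q)$. The paper's proof of the preceding theorem (``the two varieties intersect transversally, and $I(X_q)+(p_a)$ is the radical ideal'') relies on the same depth fact without stating it.

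Your version is therefore more self-contained, and the case $t=1$ gives a complete proof of that theorem. The paper's version is shorter but inherits the unstated depth argument. One small point: Theorem~\ref{theorem:CMregMonomialcurve} is proved under the standing assumption $k\ge 4$. For $k\le 3$, $S$ is a polynomial ring or a hypersurface ring, so it is Cohen--Macaulay anyway.

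Your treatment of the case $t=q-1$, via $\mN=\mN^*\cup\{0\}$ and $\mV_0(\Fm)=\{P_0\}$, matches the paper's use of Lemma~\ref{lemma: LRS partition normal curve}.
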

To illustrate our theoretical findings, we performed some computational experiments to study and visualize the Hilbert sequences of linearized Reed--Solomon codes using {\sc Magma}~\cite{bosma1997magma} (see \cite{CiteGithub} for the detailed implementation). The results are presented in Figure~\ref{Figure: LRS} and Figure~\ref{Figure: comparison}.
\begin{example}
    Figure~\ref{Figure: LRS} displays the $(q^m/q)$-Hilbert sequence of a $[(4,4,4,4), 4]_{5^4/5}$ linearized Reed--Solomon code $\mC$, together with some notable points. In particular, the value $8=(q-1)(k-2)=\HR(S)$ marks the step at which the Hilbert sequence begins to coincide with the Hilbert polynomial $31x=\frac{q^{k-1}-1}{q-1}x$. The sequence continues to agree with the Hilbert polynomial up to step $20=\left\lfloor\frac{(q^m-1)(q-1)}{q^{k-1}-1}\right\rfloor=\HR(S)$, after which it slightly deviates before reaching its final stabilization. This behavior is a direct consequence of Corollary~\ref{cor: extra polynomials}, since from this degree onward the vanishing ideal of $\mL_{\mU_\mC}$ is no longer generated solely by the defining polynomials of $I(X_q)$. The sequence then proceeds until it attains its final stable dimension $624=q^m-1$ at step $28=\left\lfloor\frac{(q^m-1)(q-1)}{q^{k-1}-1}\right\rfloor+(q-1)(k-2)=\HR\left(S_{X_q(\Fm)\setminus\{P_0,P_\infty\}}\right)$. Note that this value is coherent with Proposition~\ref{prop: H seq and CM reg} as this example tests a standard linearized Reed--Solomon code with $q-1$ blocks whose associated geometric object is $X_q\setminus\{P_0,P_\infty\}$ by Theorem~\ref{theorem: linear set of lrs}.
    \begin{figure}[ht]
        \centering
        \includegraphics[width=\textwidth]{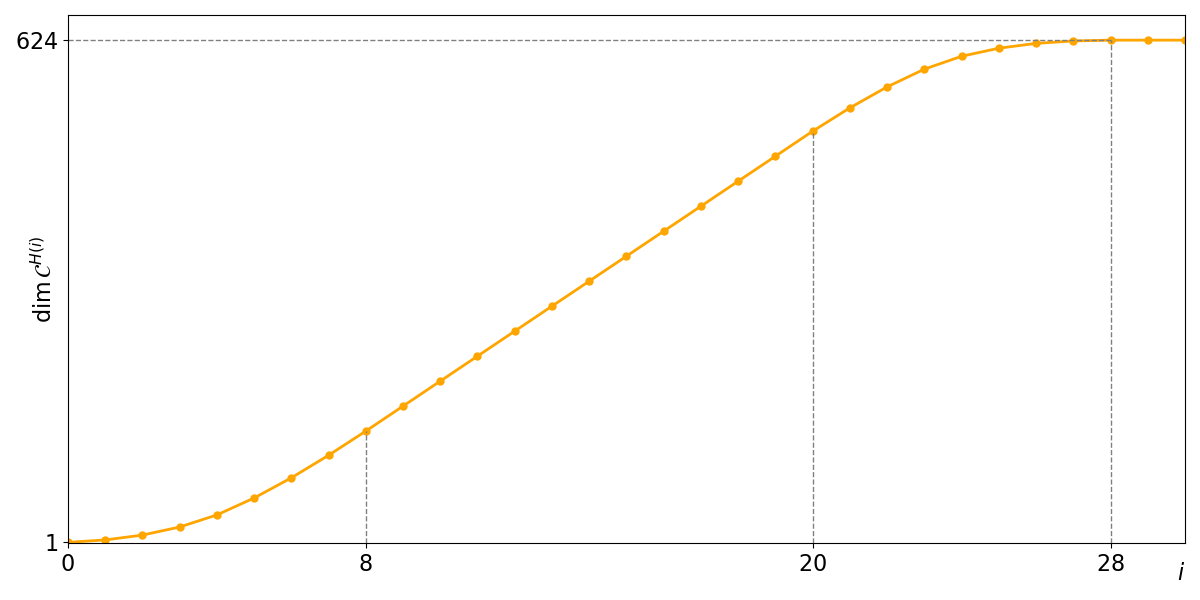}
        \caption{$(q^m/q)$-Hilbert sequence of a $[(4,4,4,4), 4]_{5^4/5}$ linearized Reed--Solomon code.} \label{Figure: LRS}
    \end{figure}
\end{example}

\begin{example}
    Figure~\ref{Figure: comparison} represents the $(q^m/q)$-Hilbert sequences of two $[(4,4,4), 4]_{4^4/4}$ codes. In particular, the orange line is related to a linearized Reed--Solomon code, $\mC$, while the blue line is associated to a code $\mG$ obtained as the concatenation of three unrelated Gabidulin codes. This example extends Remark~\ref{remark:concatenation not LRS} and confirms the expected fact that the different geometric objects associated with these two codes are also reflected in their $(q^m/q)$-Hilbert sequences. In particular, after initially coinciding, the two sequences diverge at step $5=q+1$, when the $(q^m/q)$-Hilbert sequence of $\mG$ begins to grow faster than that of $\mC$. As pointed out in Remark~\ref{remark:concatenation not LRS}, this behavior is related with the fact that the geometric objects associated with each block of $\mG$ do not necessarily lie on the same $q$-rational normal curve. Hence, the $(q+1)$-th part of $I(\mL_{\mU_\mG})$ will in general not contain all the polynomials determined in Proposition~\ref{prop: determinantal variety} to identify $X_q$, while $I(X_q)_{q+1}\subseteq I(\mL_{\mU_\mC})_{q+1}$. The two sequences then reach their final stable dimension $255=q^m-1$ at steps $13$ and $18=\left\lfloor\frac{(q^m-1)(q-1)}{q^{k-1}-1}\right\rfloor+(q-1)(k-2)=\HR\left(S_{X_q(\Fm)\setminus\{P_0,P_\infty\}}\right)$, respectively.
    \begin{figure}[ht]
        \centering
        \includegraphics[width=\textwidth]{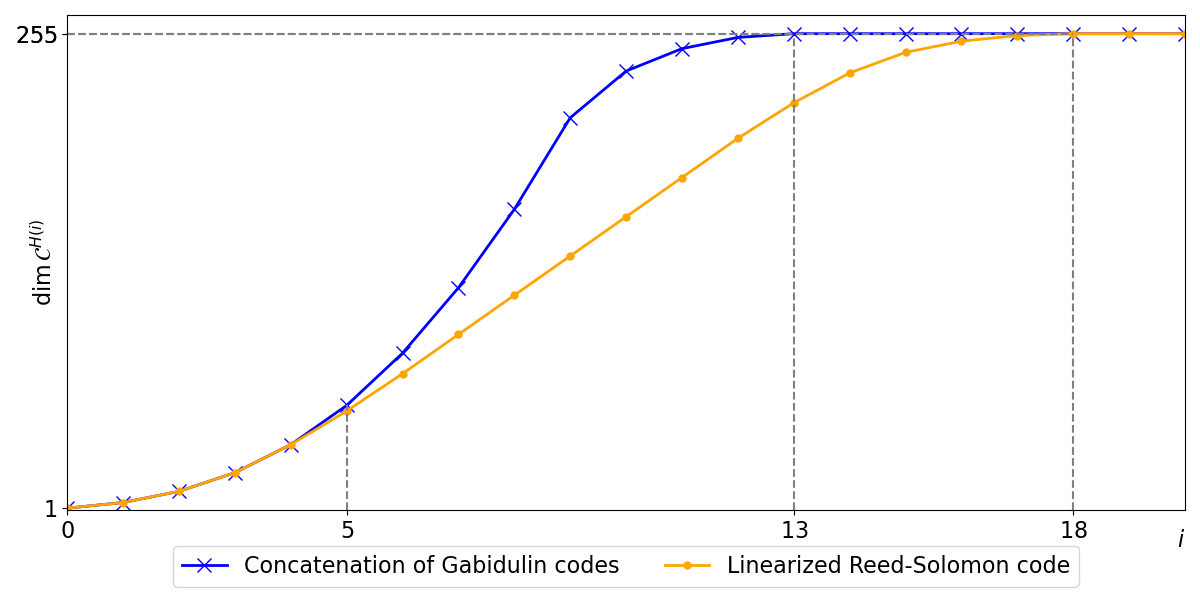}
        \caption{$(q^m/q)$-Hilbert sequences of $[(4,4,4), 4]_{4^4/4}$ codes.} \label{Figure: comparison}
    \end{figure}
\end{example}
\bigskip

\section*{Acknowledgments}
The authors of this paper would like to thank Philippe Gimenez, Mario González-Sánchez, Christophe Levrat, Alessandro Neri, and Ferdinando Zullo for insightful discussions and suggestions.

\bigskip

\bibliographystyle{abbrv}
\bibliography{biblio.bib}

\end{document}